\def\withcolors{1}
\def\withnotes{1}
  \newcommand{\ch}[1]{{\color{blue}{#1}}}
\def\I{{\ell}}
\def\J{{h}}
\def\bTheta{{\bf{\Theta}}}
\def\bSigma{{\bf{\Sigma}}}
\def\mF{{\mathcal{F}}}
\def\thetablank{{}}
\newcommand{\typical}[1]{{A_{\varepsilon}^{(n,{#1})}}} % typical set
\newcommand{\bartypical}[1]{{\bar{A}_{\varepsilon}^{(n,{#1})}}} % complement of typical set
\newcommand{\sig}[1]{\sigma_{#1}} % index agents into ordering
\newcommand{\swapsig}[1]{\sigma^{#1}} % ordering given by swap number
\newcommand{\indexswapsig}[2]{\sigma_{#1}^{#2}} % index agents into ordering given by swaps - #1 is index, #2 is swap number
\newcommand{\indexset}[2]{\mathcal{I}_{{#1}^{#2}}} % index set for induction - #1 is sigma, #2 is swap number
\newcommand{\swappedind}[1]{i_{#1}} % index of swapped individual -- argument is cluster they find themselves in after swap
\newcommand{\numbadteams}[1]{m_{00}\left({#1}\right)} % number of `0-0' matches -- argument is ordering announced
  \theoremstyle{definition}
  \newtheorem{definition}{Definition}
  \theoremstyle{plain}
  \newtheorem{theorem}{Theorem}
  \newtheorem{corollary}{Corollary}
  \newtheorem{lemma}{Lemma}
  \theoremstyle{remark}
  \newtheorem{example}{Example}
\newenvironment{proofof}[1]{\noindent{\bf Proof of {#1}:~~}}{\(\qed\)}
 \newtheorem{prop}{Proposition}
\newcommand{\ignore}[1]{}
\newcommand{\EE}{\mathbb{E}}
\newcommand{\PP}{\mathbb{P}}
\newcommand{\expectation}[1]{\EE\left[#1\right]}
\def\ignore#1{}
\def\orpro{\mathop{\mathchoice
   {\vee\kern-.49em\raise.7ex\hbox{$\cdot$}\kern.4em}
   {\vee\kern-.45em\raise.63ex\hbox{$\cdot$}\kern.2em}
   {\vee\kern-.4em\raise.3ex\hbox{$\cdot$}\kern.1em}
   {\vee\kern-.35em\raise2.2ex\hbox{$\cdot$}\kern.1em}}\limits}
\def\andpro{\mathop{\mathchoice
 {\wedge\kern-.46em\lower.69ex\hbox{$\cdot$}\kern.3em}
 {\wedge\kern-.46em\lower.58ex\hbox{$\cdot$}\kern.25em}
 {\wedge\kern-.38em\lower.5ex\hbox{$\cdot$}\kern.1em}
 {\wedge\kern-.3em\lower.5ex\hbox{$\cdot$}\kern.1em}}\limits}
\def\simge{\mathrel{%
   \rlap{\raise 0.511ex \hbox{$>$}}{\lower 0.511ex \hbox{$\sim$}}}}
\def\simle{\mathrel{
   \rlap{\raise 0.511ex \hbox{$<$}}{\lower 0.511ex \hbox{$\sim$}}}}
\newcommand{\EEc}[2]{\EE\left[#1\mid#2\right]} %Conditional Expectation
\title{Information Signal Design for Incentivizing Team Formation\thanks{A preliminary version of this work was presented at WINE 2018, and accompanied by a one-page extended abstract \cite{hssaine2018information}.}}
\author{Chamsi Hssaine, Siddhartha Banerjee}
\institute{School of Operations Research and Information Engineering, Cornell University\\
\email{\{ch822,sbanerjee\}@cornell.edu}}
\date{}
\begin{document}

%%%%%%%%%%%%%%%%%%%%% Pre-submission stuff
\maketitle
%\addtocounter{page}{-1}

We study the use of Bayesian persuasion (i.e., strategic use of information disclosure/signaling) in influencing endogenous team formation.
This is an important consideration in settings such as crowdsourcing competitions, open science challenges and group-based assignments, where a large number of agents self-organize into small teams {which then compete against each other.} A central tension here is between the strategic interests of agents who want to join the highest-performing team, and that of the principal who wants to maximize social welfare. Moreover, although the principal cannot choose the teams or modify rewards, she often has additional knowledge of agents' abilities, and can leverage this information asymmetry to provide signals that influence team formation. 
Our work uncovers the critical role of self-awareness (i.e., knowledge of one's own abilities) for the design of such mechanisms. 
In settings with binary agent abilities, when agents are agnostic of their own abilities, we provide signaling mechanisms which incentivize teams achieving the first-best welfare under convex utility functions, and mechanisms which are asymptotically efficient for concave utility functions.
On the other hand, when agents are self-aware, then we show that no signaling mechanism can do better than not releasing information, while satisfying agent participation constraints. 

\section{Introduction}

Consider a course instructor who wants to group students into teams for assignments. 
Each student has an unknown `aptitude' for the course, with past performance (academic background, GPA, etc.) providing a prior on these aptitudes; moreover, everyone prefers teammates with higher aptitudes. In the absence of any other information, the priors translate into a perceived ranking over students, and if left alone, students will team up according to this ranking (e.g., for teams of two, the top two form one team, the next two form another, and so on). On the other hand, the instructor would prefer it if high-performing and low-performing students worked together, to encourage better learning outcomes. However, she does not want to decide the teams herself, or alter the grading to incentivize such teams to form. Is there anything she can do in such a setting?

The above problem takes on a more meaningful form in the context of collaboration in open science challenges and crowdsourcing competitions~\cite{bender2016crowdsourced}.
As an example, consider the DREAM Challenges~\cite{dreamchallenge} -- an online crowdsolving platform which leverages researchers from various backgrounds to solve problems in biology and medicine. A critical design feature of these challenges is that after an initial exploration round, competitors are required to pair up into teams -- this is inspired by observations that the performance of individual contestants' solutions perform much worse than ensembles of these solutions~\cite{marbach2009combining}.
The team composition however is completely decided by the participants, with minimal involvement of the designers (and no change in reward structures). Thus, gaining insight into team formation and dynamics is key to the success of these platforms; this sentiment is echoed in a report from the National Academy of Sciences highlighting the need to find ways to foster team effectiveness~\cite{national2015enhancing}.

{Algorithmic solutions to the problem of designing effective teams are currently being used in institutions globally; the Comprehensive Assessment of Team Member Effectiveness (CATME) software, used in close to 2,000 universities across the United States, is one such example~\cite{catme}. 
This software surveys students about desirable criteria determined by the instructor and heuristically determines team assignments based on these criteria~\cite{layton_design_2010}.}
	
{Though such a system facilitates the assignment of students to teams -- particularly in large settings -- one drawback is the lack of \emph{organic} teams. Self-selected teams have been linked to increased group cohesiveness, accountability, and cooperativeness, and thus an overall improved team experience~\cite{layton_design_2010}.} {This explains the surge of interest in the organic formation of teams within the context of Massive Open Online Courses (MOOCs). NovoEd serves as a prime example of a platform devoting its efforts to team formation, viewing collaboration and team-based learning as a means to solve the problem of high attrition rates from which many MOOCs suffer~\cite{novoedtech}. In order to create an engaging learning experience, NovoEd ran the following experiment. After first signing up for a course, students were assigned to small groups (based on interest, location, background, etc.). Throughout the course, students within the same group would privately rate each other, and by the end of the course a student ranking was compiled. This ranking then allowed students to decide how to form groups on their own for subsequent courses. The result of such an experiment was evident. Compared to a retention rate of 10\% for the average MOOC, NovoEd reported a retention rate of 50-70\% for students who completed the first assignment~\cite{stanfordnovoed}.} 

{Motivated by these real-world examples,} our work focuses on the use of \emph{Bayesian persuasion} (i.e., strategic information revelation) for incentivizing team formation. The main idea is that many strategic settings have an inherent information asymmetry, wherein the principal has more information than participating agents. By controlling the release of this information, the principal can influence agents’ decisions. For example, consider the formation of assignment groups in MOOCs: in such settings, students often have (common) priors about each others' abilities (for example, based on their educational background), while the instructor may have more accurate estimates of student abilities based on scores from early tests. How she releases the results of this test can affect what teams are formed. If she chooses not to release any information, then the students tend to team up according to their perceived aptitudes (i.e., their priors about each others' abilities); on the other hand, if she releases the scores as is, the students may team up according to the test scores. The main idea we pursue in this work is to understand if there is any way of releasing the scores that can lead to {socially optimal teams}; in other words, we want to know how the principal can \emph{influence endogenous team formation using strategic signaling}.

\subsection{Overview of Model and Results}

We consider a setting with $n$ agents who endogenously form teams, leading to some utility for each agent. The teams are chosen endogenously by the agents, in the form of a \emph{stable matching} (in case of two-member teams; more generally, in a group-\ch{stable} manner). The principal however can influence agents' preferences via strategic release of information.

In more detail: each agent has an intrinsic type, drawn from some public prior. {We consider two scenarios: where agents are self-agnostic (i.e., uncertain of their own type), and where they are self-aware (i.e., know their own type).} Crucially, we always assume that every agent's type is known to the principal, and unknown to other agents. Each agent's utility is an increasing function of her type and her teammates' types; the principal's goal is to maximize {social welfare.} 
Henceforth in this work, we focus on settings with $n$ agents with $K$ different priors, and binary type-space $\{0,1\}$.

The main tool available to the principal is \emph{Bayesian persuasion}, whereby she can leverage her information asymmetry by \emph{committing to a signaling scheme based on the {realized} types}. This signaling scheme can be verified by the agents (for example, the principal can commit to using an open-source script that inputs the true types and generates the signal). Thus, the signal affects the agents' posterior over the types, which then determines their choice of teams via a stable matching. The aim of the principal is to choose a signaling scheme which is Pareto improving {over the no-information outcome} (which is a natural endogenous participation constraint, requiring that all agents are weakly better-off by agreeing to receive the signal), and for which the resulting stable matching maximizes the social welfare.

In the context of the above setting, our contributions are summarized as follows:
\begin{enumerate}
	\item We characterize the optimal signaling schemes in the form of a linear program which implements a \emph{persuasive recommendation} -- a consistent posterior ranking of the agents. This LP however has $\Theta(2^nn!)$ variables, and hence is computationally intractable.
	\item In settings where agents are \emph{self-agnostic} (i.e., are uncertain about their own type), we demonstrate the following:
	\begin{enumerate}[(i)]
	\item for \emph{convex} utility functions, full information revelation is the optimal signaling scheme;
	\item for \emph{strictly concave} utility functions
	\begin{enumerate}[a.]
		\item under {uniform priors} (i.e., $K=1$ clusters, wherein agents have i.i.d. types), we present a signaling mechanism which realizes the \emph{first-best} matching (and hence is optimal);
	\item for finite $K$, we present a signaling mechanism that is asymptotically optimal in $n$.
	\end{enumerate}
	\end{enumerate}
	\item In contrast, in settings where agents are \emph{self-aware} (i.e., know their own type), for both convex and concave utility functions we demonstrate a strong \emph{impossibility result} — we show that even under a uniform prior ($K=1$), no signaling scheme can do better than not releasing any information, while satisfying {Pareto improvement} constraints. Moreover, this strong impossibility is tied to a ``blocking set'' of agents which depends on the concavity of the utility function.
\end{enumerate}
For ease of exposition, we first demonstrate the above results in the context of teams of size two. Later, we show how our results extend to teams of arbitrary (constant) size.
%To the best of our knowledge, we are the first to formally characterize the effectiveness of signaling mechanisms for team formation. 
Moreover, though the above results are for binary type-spaces, our techniques are based on underlying symmetries and measure concentration properties which are not particular to these assumptions, and thus should generalize to more complex settings. We discuss this in more detail in Section~\ref{sec:extensions}. 

Overall, our work provides important insights and techniques for the design of Bayesian persuasion schemes for more general team formation settings. In particular, our results indicate the importance of self-awareness in determining the success of signaling mechanisms, and provide a tractable policy for self-agnostic settings based on intra-cluster pairing of type-profiles. Moreover, showing this strategy is asymptotically optimal requires a novel dual-certification argument, which may be useful in related settings. Finally, our results highlight the tension between two natural desiderata for the problem of team formation: \emph{stability} and \emph{endogenous participation}. Even in simple settings, this tension is apparent and presents significant challenges. Importantly, our work formalizes the fact that each of these desiderata is tied to a particular `blocking set' of agents, depending on the setting: when agents are self-aware, stability is linked to satisfying incentives of agents with stronger priors; when agents are self-agnostic, endogenous participation is critical for stronger agents in the case of concave utilities, and for weaker agents in the case of convex utilities. This corresponds to intuitive beliefs one may have about incentives and the role of information in these settings. In a sense, our work provides formal micro-foundations for this intuition.

\subsection{Related work}

{The operations research and computer science communities have produced an extensive line of work on the team formation problem, where the objective is to form teams of individuals with all the required skills to perform a certain set of tasks. Variants of the problem that have been studied include measuring communication overhead to create effective teams in networks~\cite{lappas_finding_2009}, adding workload constraints on users~\cite{majumder_capacitated_2012}, and forming teams in an {online} fashion, as tasks arrive dynamically~\cite{anagnostopoulos_online_2012}. This line of work, however, ignores agent \emph{incentives}, i.e., the idea that agents have preferences over each other, and act strategically based on these.

On the other hand, there is a large body of work that considers the role of strategic decisions in the formation of various combinatorial structures among agents. Two exemplars are the study of \emph{coalitional games}, which considers equilibrium models for the formation of coalitions among agents~\cite{ray_theory_1999,bogomolnaia_stability_2002}, and \emph{network formation games}~\cite{jackson_formation_2002,galeotti2006network}, which studies the formation of social networks, with various costs and benefits for creating links between players.} This literature on coalitional and network games, however, focuses primarily on equilibrium characterizations, rather than considering the design of mechanisms towards maximizing some central objective. In particular, we note that there does not seem to have been work on signaling mechanisms to incentivize the formation of specific teams, our ultimate objective.

{Our focus on the use of strategic signaling as an incentive mechanism places our work squarely} within the framework of \emph{Bayesian persuasion}, a topic which has garnered much recent attention. We briefly summarize some of the main ideas of this topic below; for a more detailed survey of this literature, cf.~\cite{dughmi2017survey}.

The basic idea originates from the seminal work of Kamenica and Gentzkow~\cite{kamenica2011bayesian}, which considers a principal who commits to a signaling policy which maps the true state of the world to a signal sent to a single agent, and derives conditions on the principal's utility function under which she strictly benefits from persuasion. Going beyond, Kremer et al. \cite{kremer2014implementing} consider a dynamic setting in which agents arrive sequentially and choose an action with an unknown (but deterministic) reward. The goal of the principal is to find the optimal disclosure (or recommendation) policy of a planner who wants to maximize social welfare, whereas agents simply seek to maximize their own expected reward. The authors show that the optimal policy is a threshold policy which explores as much as possible, and then always recommends the best action, and obtain a similar near-optimal threshold policy with stochastic rewards. Recent papers greatly generalize this line of work~\cite{mansour2015bayesian,bahar2015economic}. 

In the context of multi-agent settings with no externalities, Arieli and Babichenko~\cite{arieli2016private} look at a model where the principal tries to persuade individuals to adopt a product by sending \emph{private} signals. They characterize the optimal policy for supermodular, submodular, and supermajority utility functions of the principal. More recent work has been devoted to proving hardness or inefficacy results for finding the optimal information disclosure policy in such settings~\cite{dughmi2014hardness,dughmi2017algorithmic}. 

In contrast to these papers, our work involves a multi-agent scenario with \emph{externalities} -- in other words, not only is the principal playing a game with multiple agents, but the agents are playing a game amongst themselves. Recent work has looked at related models in the context of routing and queueing games. In \cite{bhaskar2016hardness}, the authors present hardness results on \emph{public} signaling for Bayesian two-player zero-sum games and Bayesian network routing games; more recent works~\cite{das2017reducing,tavafoghi2017informational} consider practical variants of such policies in restricted settings. 
%They show that, in the latter setting, if the principal seeks to maximize the equilibrium utility of one of the players, it is NP-hard to obtain an additive FPTAS; in the former setting, when the principal's objective is to minimize the average latency of the  equilibrium flow, it is NP-hard to obtain a multiplicative approximation better than 4/3. 
In the setting of strategic delay announcements for queueing,Iyer and Lingenbrink show that when the principal is a revenue maximizer, a binary signaling mechanism with a threshold structure is optimal~\cite{lingenbrink2017optimal}. {Similar insights hold in an entirely different problem -- that of optimal signaling of content accuracy on online social networks~\cite{candogan2017optimal}. The authors  show that, when users derive positive utility from their neighbors engaging with content on the platform, the optimal mechanism also has a simple threshold structure.} Our work however is, to best of our knowledge, the first to consider the problem of finding the optimal signaling policy for team formation.

\section{Preliminaries}

\subsection{Basic Setup}

We consider a setting with a principal and $n$ agents ($n$ even), where the agents endogenously partition themselves into two-member teams. {We will show in Section~\ref{ssec:larger-size} how to extend the model to a setting with teams of size $a > 2$.}

Each agent $i \in [n]$ has a random type $\theta_i \in \{0,1\}$, which can be interpreted as her intrinsic ability to perform the task at hand. Agents are exogenously clustered into $K$ groups based on having common priors, with each cluster $k$ composed of $n_k$ agents. (For notational convenience, we assume throughout that $n_k$ is even.) In the spirit of the framework used in other work on optimal signaling~\cite{kamenica2011bayesian,dughmi2017algorithmic,arieli2016private}, we assume that, if agent $i$ is in cluster $k$, the type of agent $i$ is drawn independently from a $Ber(p_k)$ distribution. We assume that $p_k \neq p_{k'}$ for all $k \neq k'$.  Importantly, this implies that agents in the same cluster are \emph{homogeneous}. We use $\bTheta \triangleq \{0,1\}^{n}$ to denote the space of type-profiles of all $n$ agents, and denote $\lambda$ to be the product distribution over $\bTheta$. Note that this notion of clusters is natural in the settings we are interested in; for example, in the context of crowdsourcing, these represent groups of agents whose abilities have been estimated to be very similar.

For any given type-profile $\theta\in\bTheta$, we denote $\I_k(\theta), \J_k(\theta)$ the number of agents with a true type of 0 and 1, respectively, in cluster $k$.  Similarly, we define $\J(\theta)=\sum_i\mathds{1}\{\theta_i=1\}$ to be the total number of agents with type $1$ (i.e., `high' type) in $\theta$, and $\I(\theta)=n-\J(\theta)$ to be the total number of agents with type $0$ (i.e., `low' type) in $\theta$.  

%Additionally, we use $\mathcal{M}$ to denote the set of all matchings. \inlinetodo{Check if we actually ever use this notation.}

Each agent's payoff depends only on her own type and that of her match. We denote the utility function of agents as $u: \{0,1\}\times\{0,1\}\to\mathbb{R}$ and assume it obeys the following properties:
\begin{enumerate}[nosep]
	\item[i.] symmetric (i.e., $u(\theta_i,\theta_j) = u(\theta_j,\theta_i)$)
	\item[ii.] strictly increasing in each type (i.e., $u(\theta,0) <  u(\theta,1)$).
	%\item[iii.] strictly concave (i.e., $u(0,0)+ u(1,1) <  2 u(1,0)$).
\end{enumerate}
{This utility function represents the quality of the team. One can also think of it as the likelihood that the team will successfully complete the task. Note that this function is common \emph{across all agents}, and can thus be thought of as an objective, rather than perceived, metric of success. Additionally, implicit in these properties is the assumption that agents within the same team benefit equally from the match. Such a model has been used in data-driven studies of the team formation problem~\cite{eftekhar_team_2015}, and applies to a variety of settings, including: competitions where all team members receive the same award or recognition; online courses, where everyone receives the same grade for an assignment; public works, where an entire community benefits from the completion of an infrastructure project.}

%\old{[The principal's utility $f:\bTheta\times\mathcal{M}\to\mathbb{R}$ is a function of the agents' types, as well as the matching chosen by the agents. As with the agents' utilities, we assume $f$ is symmetric in each type (i.e., \emph{anonymous}), and hence, only depends on the number of matches between two type 1 agents $m_{11}$, two type 0 agents $m_{00}$, and one type 1 and one type 0 agent $m_{10}$. }

\subsubsection{Information structure.} As is typical in the Bayesian persuasion framework, we assume that the distribution from which each agent is drawn is known to all agents, but that \emph{each agent's realized type is unknown to other agents}. On the other hand, we assume that \emph{the principal has full knowledge of the realized types} $\theta = (\theta_1,\ldots,\theta_n)$.
Finally, with regard to what an agent knows about her own type, we consider two cases: one in which agents are \emph{self-aware}, and one in which agents are \emph{self-agnostic}. 
\begin{definition}
A \emph{self-aware} agent is an agent who knows her own type. A \emph{self-agnostic} agent is an agent who does \emph{not} know her own type.
\end{definition}

For example, a self-agnostic student in a course is one who has no prior experience in the subject, and hence no idea of her aptitude for it; on the other hand, a self-aware student has some idea of her abilities, perhaps based on experience in similar courses or independent reading. 

{Before we proceed with the model, we motivate the assumption that the principal could have more information about the agents than the agents themselves. In such educational examples, the principal could gauge the student's aptitude for the material via a test, and choose how to release the test scores (if at all). Such an assumption applies to more general settings than just education, however. In fact, when one considers the massive amounts of data that platforms collect on users, it is not unreasonable to believe that platform designers are more informed than the users themselves. With recommender systems in particular, websites often collect user data, feed it to a blackbox machine learning algorithm, which then clusters users together into some pre-determined classes. Although the user knows her own online activity (e.g., her Netflix viewing history, or previous Amazon purchases), she does not have access to the class in which the algorithm placed her, or know why a certain product was recommended to her. This is precisely the kind of information asymmetry which our model captures.}

\subsubsection{Information disclosure.} We assume the principal has the ability to commit to an information disclosure policy -- also termed a \emph{signaling scheme} -- which is a mapping from the realized state of nature $\theta$ to a signal of some sort. 
For example, in crowdsourcing platforms, the principal can administer a test to each participant, and choose whether or not (and how) to reveal their scores.
We note that this signaling policy may be randomized. In this paper, we restrict the principal to \emph{public} signaling schemes, i.e., the principal sends the same signal to all agents. Although private signaling schemes have been considered in other settings (for example, signaling in games without externalities~\cite{dughmi2017algorithmic}), it is unclear how an agent can reason about the beliefs and abilities of other agents when evaluating a match.

Formally, the sequence of events is as follows. First, the principal commits to a randomized signaling policy $\phi: \bTheta\to\bSigma$, where $\bSigma$ is the set of all possible signals $\sigma$. Next, the state of nature $\theta\in\bTheta$ is drawn from the prior distribution $\lambda$, and the types of all agents are revealed to the principal. The principal then draws a signal $\sigma$ from the distribution $\phi(\theta)$, and broadcasts it to the agents. The agents in turn compute a posterior on the state of nature $\theta$ given the signal $\sigma$. They compare their optimal match given $\sigma$ and their optimal match without $\sigma$, and choose the agent to match with which maximizes their expected utility. As a result, a matching $m$ is induced. The optimization problem that we are interested in is finding the signaling scheme that maximizes the expected social welfare, i.e., the expected sum of agents' utilities from the resulting matchings.

\subsection{Solution Concept}\label{ssec:sol-concept}

We assume that agents are expected utility maximizers. {Upon receiving the signal from the principal, agents form a posterior on the state of nature $\theta$, given the filtration $\mathcal{F}^i_{\sigma}$ induced by the public signal $\sigma$ and any additional information the agent may have (in particular, the agent's own type in the setting of self-aware agents). A subgame is induced, in which each agent maintains a rank ordered preference list over all other agents, and chooses to match with the agent who maximizes her expected utility, given her available information. Formally, each agent $i$ computes $\EE\left[u(\theta_i,\theta_j)\mid\mF^i_{\sigma}\right]$ for each agent $j\neq i$, and maintains the preference list $j_1 \succcurlyeq j_2 \succcurlyeq \ldots \succcurlyeq \ldots j_{n-1}$ such that  $\EE\left[u(\theta_i,\theta_{j_1})\mid\mF^i_{\sigma}\right]\geq \EE\left[u(\theta_i,\theta_{j_2})\mid\mF^i_{\sigma}\right] \geq \ldots \geq \EE\left[u(\theta_i,\theta_{j_{n-1}})\mid\mF^i_{\sigma}\right]$.}

In the case where agents are self-agnostic, the public signal $\sigma$ induces a \emph{common} filtration $\mathcal{F}_{\sigma}$ among all agents. Consequently, there will be a common ranking of the agents, and the preference profile of each agent is the sub-ranking that does not include her. This preference profile induces a matching $m$, {which leads to the first natural desideratum for endogenous team formation: \emph{stability}}~\cite{gale1962college}.
\begin{definition}
Conditional on available information (priors and signals), a matching $m$ is \emph{stable} if it has no \emph{blocking pair}, i.e., there exist no two agents who are not matched together but would prefer to be matched together.
\end{definition} 
Note that an induced matching $m$ may not be unique, since the ordering is not necessarily strict. We assume that agents break ties in favor of the principal, {as is standard in the Bayesian persuasion literature.}

Though stability is a natural desideratum to impose on team formation, a critical additional requirement is to ensure that agents are willing to participate in any signaling scheme proposed by the principal. Note that in the absence of a signal, agents can still compute $\EE[u(\theta_i,\theta_j)]$, and form a preference profile and resulting matching $m(\emptyset)$ (where $\emptyset$ refers to the absence of signal). This suggests the following natural \emph{endogenous participation} constraint:
\begin{definition}
A signaling scheme is {\emph{Pareto improving}} if it ensures that each agent is weakly better off under any stable matching $m$ induced by the signaling scheme, as compared to her utility under the baseline stable matching $m(\emptyset)$.
\end{definition}

{To summarize, } our aim is to design {Pareto improving} signaling schemes so as to maximize the social welfare under an induced stable matching. {If no such signaling scheme exists, we assume that the principal sends no signal for any realization.}

The following proposition helps further simplify this, by showing that every agent's ranking over other agents induced by a signal $\sigma$ is identical to the ranking induced by the conditional expectation over agent types. We defer its proof to the appendix. 

\begin{prop}\label{prop:util-agent}
	For any $u$ strictly increasing, the expected utility of agent $i$ from matching with agent $j$ is nondecreasing in the expected type of agent $j$, {conditioned on $\mF^i_{\sigma}$}.
\end{prop}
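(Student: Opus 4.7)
The plan is to exploit the binary type structure of $\theta_j$ to express $\EE[u(\theta_i,\theta_j)\mid\mF^i_\sigma]$ as an affine function of $q_j \eqdef \EE[\theta_j\mid\mF^i_\sigma]$ whose slope is strictly positive by the hypothesis that $u$ is strictly increasing in its second argument. The monotonicity claim then follows immediately.

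The starting point is the pointwise identity
\[
u(\theta_i,\theta_j) \;=\; u(\theta_i,0) \;+\; \Delta(\theta_i)\,\theta_j, \qquad \Delta(\theta_i)\eqdef u(\theta_i,1)-u(\theta_i,0),
\]
which is valid because $\theta_j\in\{0,1\}$. Strict monotonicity gives $\Delta(\theta_i)>0$ for both $\theta_i\in\{0,1\}$. Taking conditional expectations given $\mF^i_\sigma$:
\[
\EE\bigl[u(\theta_i,\theta_j)\mid\mF^i_\sigma\bigr] \;=\; \EE\bigl[u(\theta_i,0)\mid\mF^i_\sigma\bigr] \;+\; \EE\bigl[\Delta(\theta_i)\,\theta_j\mid\mF^i_\sigma\bigr].
\]
The first term on the right does not depend on $j$, so monotonicity in $q_j$ reduces to monotonicity of the second term in $q_j$.

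For this second term I would treat the two informational settings separately. In the self-aware case, $\theta_i$ is $\mF^i_\sigma$-measurable, so pulling it out of the expectation gives $\Delta(\theta_i)\cdot q_j$, an affine function of $q_j$ with strictly positive slope. In the self-agnostic case, $\mF^i_\sigma$ consists only of the public signal $\sigma$; here I would invoke the fact that agents' types are drawn independently from the prior and that the signaling schemes of interest act symmetrically across agents, so no signal introduces posterior coupling between $\theta_i$ and other types beyond what is revealed through their marginals. Concretely, this gives $\theta_i \independent \theta_j \mid \mF^i_\sigma$, and the second term factorizes as $\EE[\Delta(\theta_i)\mid\mF^i_\sigma]\cdot q_j$, with strictly positive leading factor.

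I expect the self-aware part to be essentially immediate from measurability, with the only place requiring care being the conditional-independence claim in the self-agnostic case: this must be justified via the structural properties of the symmetric, cluster-based signaling schemes developed later in the paper rather than for arbitrary signals, since in principle an adversarially constructed posterior could couple $\theta_i$ with $\theta_j$ in a way that violates pure affinity in $q_j$.
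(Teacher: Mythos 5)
Your self-aware case is fine: once $\theta_i$ is $\mF^i_\sigma$-measurable, the affine identity $u(\theta_i,\theta_j)=u(\theta_i,0)+\Delta(\theta_i)\theta_j$ immediately gives an affine, strictly increasing function of $q_j$. The gap is exactly where you flagged it, in the self-agnostic case, and it is not repairable in the way you propose. The claim $\theta_i\independent\theta_j\mid\mF^i_\sigma$ is false for public signals that depend on the whole realized profile: conditioning an independent prior on such a signal generically couples the types. In fact it fails for the paper's own constructions — in the \emph{First Best} scheme (see Example~\ref{ex:fb} and Lemma~\ref{lem:relaxed-sol}), the posterior given the announced ordering $\sigma^*$ is supported on a pair of profiles $\{\theta,\bar\theta\}$, under which some agents' types are perfectly (negatively) correlated. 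Moreover, deferring the justification to ``the symmetric, cluster-based schemes developed later'' is circular: Proposition~\ref{prop:util-agent} is invoked, together with Proposition~\ref{prop:matching} and the revelation-principle argument, to justify restricting attention to rank-order signals in \eqref{opt} \emph{before} any particular scheme is fixed, so it must hold for arbitrary candidate signals, not just the schemes eventually constructed. Without independence your cross term $\EE[\theta_i\theta_j\mid\mF^i_\sigma]$ does not reduce to a multiple of $q_j$, and your expression ceases to be a function of $q_j$ alone, so the monotonicity conclusion does not follow from your decomposition.

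For contrast, the paper's proof does not assume any decoupling: it expands $\EE_\sigma[u(\theta_i,\theta_j)]$ over the four joint outcomes, parametrizes the joint law by $q=\PP_\sigma[\theta_j=1]$ and $p_k=\PP_\sigma[\theta_i=1\mid\theta_j=k]$, and computes
\begin{align*}
\frac{\partial g}{\partial q} \;=\; p_1\bigl(u(1,1)-u(1,0)\bigr)+(1-p_0)\bigl(u(1,0)-u(0,0)\bigr)\;\geq\;0
\end{align*}
for all $p_0,p_1\in[0,1]$, i.e., it establishes monotonicity in $q$ with the conditional probabilities of $\theta_i$ given $\theta_j$ held as free parameters, rather than by asserting posterior independence. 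If you want to keep your affine decomposition, you would need to handle the term $(\Delta(1)-\Delta(0))\,\EE[\theta_i\theta_j\mid\mF^i_\sigma]$ explicitly along these lines instead of appealing to $\theta_i\independent\theta_j\mid\mF^i_\sigma$.
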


In the absence of signal, the self-agnostic's myopic strategy is obvious. Since $\theta_j\sim Ber(p_j)$ for all $j \in [n]$, we have that $\expectation{\theta_j} = p_j$. Thus, each agent's preference profile results by ordering agents according to $p_j$. Suppose the rank of agents is $a\succcurlyeq b\succcurlyeq c\succcurlyeq d \succcurlyeq\ldots m\succcurlyeq n $, then the resulting teams (formed via stable matching) are $(a,b),(c,d),\ldots,(m,n)$. {Moreover, in a setting with self-agnostic agents, we can extend this argument to see} that under any signal $\sigma$, agents pair up sequentially according to the rank-order of their expected posterior types. {This however is not true for self-aware agents. We return to this latter case in Section~\ref{ssec:self-aware}.}

\subsection{Finding the Optimal Policy}

{The goal of the principal is to find a signaling policy that maximizes the expected \emph{social welfare} (i.e., the sum of all agents' utilities). As discussed in Section~\ref{ssec:sol-concept}, the signaling policy must induce a \emph{stable matching} and satisfy a \emph{Pareto improvement} constraint, such that all agents are better off following the signal than ignoring it. Let $m(\sigma)$ denote the matching induced when the principal sends signal $\sigma$, and $m(\sigma,i)$ denote the agent with whom agent $i$ is matched in matching $m(\sigma)$. Similarly, $m(\emptyset,i)$ is the agent that agent $i$ is matched with in the absence of a signal.}

%\old{[To summarize from above: The goal of the principal is to find a signaling policy that induces a \emph{stable matching} and maximizes her expected utility. Additionally, the signaling policy must satisfy an \emph{individual rationality} constraint, such that the agents are better off following the signal than ignoring it. Let $m(\sigma)$ denote the matching induced when the principal sends signal $\sigma$, and $m^i(\sigma)$ denote the agent that agent $i$ is matched with in matching $m$. Similarly, $m(\emptyset,i)$ is the agent that agent $i$ is matched with in the absence of a signal.]}

The principal's optimization problem is given by
\begin{equation}\label{opt-implicit}
\begin{aligned}
& \max_{\phi} & & \expectation{ \sum_{i = 1}^n u\left(\theta_i, \theta_{m(\sigma,i)}\right)}  & \\
& \text{s.t.} & & m(\sigma) \text{ is stable} &  \forall \; \sigma\in\bSigma\\
%& & & \sum_{\theta\in\bTheta}\lambda(\theta)x_{\theta,\sigma}\theta_{m(\sigma,i)} \geq \sum_{\theta\in\bTheta}\lambda(\theta)\theta_{m^{MYO,i}(\sigma)}  & \forall \; \sigma\in\bSigma, \forall \; i \in [n]\\
& & & \EEc{u(\theta_i,\theta_{m(\sigma,i)})}{\mathcal{F}^i_\sigma} \geq \EEc{u(\theta_i,\theta_{m(\emptyset,i)})}{\mF^i_\emptyset} & \forall \; \sigma\in\bSigma,\forall \; i \in [n]
%&   & & \sum_{\sigma\in\bSigma}x_{\theta,\sigma} = 1 & \forall \; \theta\in\bTheta\\
%& & &x_{\theta,\sigma} \geq 0 & \forall \; \theta \in \bTheta, \forall \; \sigma\in\bSigma
\end{aligned}
\end{equation}
where the maximization is over the set of all randomized maps $\phi$ from type-profiles $\theta$ to signals $\sigma$, and the expectations are taken over the joint distribution of $(\sigma, \theta)$ induced by $\phi$. 

A standard revelation-principle style argument tells us that there exists an optimal signaling policy that is \emph{straightforward} and \emph{persuasive} \cite{kamenica2011bayesian,dughmi2017algorithmic,arieli2016private}. A \emph{persuasive} signaling policy is a policy for which the induced actions form a Bayesian Nash equilibrium. A \emph{straightforward} signaling policy corresponds to space of signals equaling the agents' action space. In our setting, a straightforward signaling scheme is one that signals a matching for a given realization $\theta$. Additionally, by Proposition~\ref{prop:util-agent}, matchings are induced by a rank ordered preference list of agents' expected posterior types. Combining these three facts, we get that \emph{it is sufficient to restrict our attention to signals that are rank-orderings of agents according to expected posterior type}.

We first need some additional notation. Let $\bSigma$ now denote the set of all orderings (or permutations) over the $n$ agents. We use the notation $\sigma_{i}$ to denote the identity of the $i$th-placed individual with respect to the ordering $\sigma$. Additionally, $\sigma_i(\theta)$ denotes the type of agent $\sigma_{i}$ for the specific realization $\theta$. We abuse notation and use $\sigma_{m(i)}$ to denote the agent that the $i$th agent in ordering $\sigma$ is matched to. Finally, $\sigma_{m(i)}(\theta)$ denotes the type of the matchof agent $i$ under ordering $\sigma$ of a given realization $\theta$. The following example helps clarify this notation. ({Henceforth, we will use ``match'' to denote both the team an agent is a part of, as well as the agent's teammate. The specific use will be clear from context.}) 

\begin{example}
Suppose $n=4$, with agents A, B, C and D. Consider realization $\theta = (\theta_A,\theta_B,\theta_C,\theta_D) = (1, 1, 0, 0)$, and $\sigma = (A\succcurlyeq C \succcurlyeq B \succcurlyeq D)$. Recall that for a given rank-order, the agents pair up sequentially. Thus, $\sigma(\theta) = (1,0,1,0)$, $\sig{2} = $ C, $\sig{2}(\theta) = 0$. Similarly, $\sig{m(2)} = $ A, and $\sig{m(2)}(\theta) = 1$.
\end{example}

%\inlinetodo{Based on this example, should it just be $\sigma_2$, because we're just indexing into a vector?}

Proposition~\ref{prop:matching} gives us a tractable way of representing a persuasive signal. We defer its proof to the appendix.

\begin{prop}\label{prop:matching}
	The matching induced by the signal (i.e., the announced ordering) is stable if and only if the signal is persuasive, i.e., if and only if the following holds.
	\begin{align}\label{eq:correct-ordering}
	 \EEc{\sig{m(i)}(\theta)}{\mathcal{F}^i_{\sigma}} \geq \EEc{\sigma_j(\theta)}{\mathcal{F}^i_{\sigma}} \qquad \forall \; i \in [n-1], \forall j \text{ s.t. } \sigma_j > \sig{m(i)}, \sigma_j\neq\sigma_i.
	\end{align} 
\end{prop}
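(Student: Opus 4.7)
The plan is to prove the equivalence by arguing each direction separately, leveraging Proposition~\ref{prop:util-agent} to translate between agents' preferences and the expected types of potential partners.

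For $(\Leftarrow)$ [the stated inequality implies stability], I would argue by contradiction. Suppose the condition holds yet $(\sigma_p, \sigma_q)$ with $p < q$ forms a blocking pair. Since they are unmatched and pair-mates occupy adjacent positions in $\sigma$ (so $m(p) \in \{p-1, p+1\}$), a short case check shows $q > m(p)$ in both sub-cases (when $m(p) = p-1$ it follows from $p < q$; when $m(p) = p+1$, it follows from $q \neq p+1$ together with $q > p$). Applying the stated inequality at indices $(i,j) = (p,q)$ then yields $\EE[\theta_{\sigma_{m(p)}} \mid \mF^p_\sigma] \geq \EE[\theta_{\sigma_q} \mid \mF^p_\sigma]$, so by Proposition~\ref{prop:util-agent} agent $\sigma_p$ weakly prefers $\sigma_{m(p)}$ to $\sigma_q$ --- contradicting the hypothesis that $\sigma_p$ strictly prefers $\sigma_q$ in the blocking pair.

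For $(\Rightarrow)$ [stability implies the inequality], I would argue contrapositively. Assume the condition fails for some indices $i,j$ with $j > m(i)$, $\sigma_j \neq \sigma_i$, and $\EE[\theta_{\sigma_j}\mid\mF^i_\sigma] > \EE[\theta_{\sigma_{m(i)}}\mid\mF^i_\sigma]$; I would exhibit a blocking pair. By the position argument above, $\sigma_j$ lies in a strictly later pair than $\sigma_i$ and $\sigma_{m(i)}$. The key step is a \emph{pair-sorted} structural claim: in any stable sequential matching induced by $\sigma$, the two agents in pair $k$ must have weakly higher expected types (under the appropriate posterior) than any agent in a later pair $k'$. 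In the self-agnostic setting this is immediate from the common posterior $\mF_\sigma$, for otherwise one could swap the top-posterior agent in a later pair with the bottom-posterior agent of an earlier pair to produce a blocking pair (using Proposition~\ref{prop:util-agent}). The assumed violation of the inequality then directly contradicts pair-sortedness.

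The main obstacle is the reverse direction, specifically formalizing the pair-sorted claim. Identifying the concrete blocking pair to exhibit requires careful bookkeeping around the fact that the inequality is vacuous for agents within the same pair (since the excluded case $\sigma_j = \sigma_i$ covers pair-mates when combined with the condition $j > m(i)$), and the argument must be robust to ties in posteriors across adjacent pairs.
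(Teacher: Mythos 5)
Your $(\Leftarrow)$ direction is correct and is essentially the paper's argument in different packaging: the paper argues top-down (the top two agents match, and no agent ranked above one's match would agree to deviate downward), while you derive a contradiction directly from a putative blocking pair via Proposition~\ref{prop:util-agent}; both rest on the same two observations, and your case check that any non-partner $\sigma_q$ with $q>p$ satisfies $q>m(p)$ is fine.

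The gap is in the $(\Rightarrow)$ direction. Your ``pair-sorted'' claim is not an auxiliary lemma but a restatement of \eqref{eq:correct-ordering} itself: applying the constraint at both members of the pair $\{2k-1,2k\}$ (i.e., at $i=2k$ and at $i=2k-1$) says precisely that each member of an earlier pair has weakly higher expected posterior type than every member of every later pair. So ``reducing'' to pair-sortedness does no logical work, and the one-line justification you give for it (swap the top-posterior agent of a later pair with the bottom-posterior agent of an earlier pair) establishes only half of a blocking pair: the partner of the low-posterior earlier agent prefers the later-pair agent, but you never argue that the later-pair agent prefers that partner to her own match. That second preference is exactly the step the paper's proof supplies when it names the blocking pair $\left(\sig{m(i)},\sig{j}\right)$: the violated constraint yields $\sig{m(i)}$'s strict preference for $\sig{j}$, and the additional inequality $\EEc{\sig{m(i)}(\theta)}{\mF^j_{\sigma}}\geq\EEc{\sig{m(j)}(\theta)}{\mF^j_{\sigma}}$ yields $\sig{j}$'s (weak) preference for $\sig{m(i)}$. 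You correctly sense that ties are where this is delicate --- with equal posteriors across adjacent pairs and strict blocking, the proposed swap produces no blocking pair at all (a looseness the paper itself glosses over, since it asserts the second inequality rather than deriving it and uses a weak preference on one side) --- but flagging the obstacle is not the same as closing it. As written, your reverse direction is incomplete at exactly the point where the paper's proof does its (admittedly terse) work.
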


%Otherwise put, Proposition~\ref{prop:matching} states that, in expectation, the ordering announced by the principal must be correct for the induced matching to be stable.

%\old{Let $m_{00}(\sigma(\theta))$ denote the number of `0-0' matches induced when ordering $\sigma$ is announced for realization $\theta$. We abuse notation and write the principal's utility function $f\left(m_{00}\left(\sigma(\theta)\right)\right)$ since, for a given realization $\theta$ and an ordering $\sigma$, there is a one-to-one correspondence between $m_{00}$, and $m_{10}$ and $m_{11}$. Moreover, under our assumption that $f$ is strictly decreasing in $m_{00}(\sigma(\theta))$, for all $\theta$, it suffices to minimize the expected number of matches composed of two type 0 individuals, i.e., $\max_{\phi} \EE\left[f\left(m_{00}\left(\sigma(\theta)\right)\right)\right]\equiv\min_{\phi}  \EE\left[m_{00}(\sigma(\theta))\right]$.}

Putting this all together, we obtain a concise representation of the principal's optimization problem as follows:
\begin{equation}\label{opt}\tag{$OPT$}
\begin{aligned}
& \max_{\phi} & & \EE\left[\sum_{i=1}^n u\left(\sigma_i(\theta), \sig{m(i)}(\theta)\right)\right]   \\
& \text{s.t. } & & \EEc{\sig{m(i)}(\theta)}{\mathcal{F}^i_{\sigma}} \geq \EEc{\sig{j}(\theta)}{\mathcal{F}^i_{\sigma}}  \hspace{0.5cm} \forall \; \sigma\in\bSigma, \forall \; i \in [n-1], \forall \; j \text{ s.t. } \sigma_j > \sig{m(i)}, \sigma_j \neq \sigma_i\\
& & &\EEc{u(\theta_i,\theta_{m(\sigma,i)})}{\mathcal{F}^i_\sigma} \geq \EEc{u(\theta_i,\theta_{m(\emptyset,i)})}{\mF^i_\emptyset}\hspace{0.5cm}   \forall \; \sigma\in\bSigma,\forall \; i \in [n].
\end{aligned}
\end{equation} 
%We henceforth denote the optimal value of the principal's optimization~\eqref{opt} as $OPT$.

\subsection{Classes of utility functions}

{We will see that the structure of the optimal signaling scheme depends on (an appropriate notion of) the convexity of agents' utility function $u$. In particular, we use the following notion of convexity (resp., concavity) in our subsequent results.}

	\begin{definition}
		\begin{enumerate}[(i)]
			\item We say that $u$ is \emph{convex} if $u(1,1) - u(1,0) \geq u(1,0) - u(0,0)$.
			\item We say that $u$ is \emph{strictly concave} if $u(1,1) - u(1,0) < u(1,0) - u(0,0)$.
		\end{enumerate}
	\end{definition}	
	
	We note that, although this definition differs from multivariate notions of discrete convexity, it is equivalent to discrete convexity/concavity of $u$ if viewed as a univariate function of the sum of agent-types. {More generally, this admits any $u$ of the form $g\left(\sum_if(\theta_i)\right)$, where $g$ is strictly convex/concave, and $f$ increasing. Interpreting  $u$ as the likelihood of successfully completing the task, one can view the above notion of convexity as the relative benefits of teaming up to complete the task -- if $u$ is convex, then the task is difficult enough that the team greatly benefits from having more highly skilled individuals attempt it; on the other hand, if $u$ is concave, then the task is simple enough that additional skilled individuals contribute only marginally. \\
	
	Lemma~\ref{lem:convexity} tells us that, to maximize social welfare, it suffices to consider the \emph{number} of matches of a certain type, and that the preferred match type depends on the convexity of $u$. We defer its proof to the appendix.
	
	In the remainder of the paper, we will use $m_{11}, m_{10}, m_{00}$ to denote the number of `1-1', `1-0', `0-0' matches, respectively.
	
	\begin{lemma}\label{lem:convexity}
		\begin{enumerate}[(i)]
			\item Suppose $u$ is convex. Then, maximizing the expected social welfare is equivalent to maximizing the expected number of `1-1' matches (equiv., maximizing the expected number of `0-0' matches or minimizing the number of `1-0' matches).
			\item Suppose $u$ is strictly concave. Then, maximizing the expected social welfare is equivalent to maximizing the expected number of `1-0' matches (equiv., minimizing the number of `0-0' matches or the number of `1-1' matches).
		\end{enumerate}
	\end{lemma}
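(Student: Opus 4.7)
\textbf{Proof proposal for Lemma~\ref{lem:convexity}.}

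The plan is to show that, for any fixed type-profile $\theta$, the social welfare under any matching is an affine function of the single quantity $m_{10}$ (the number of mixed matches), with a slope whose sign is dictated by the convexity condition on $u$. Linearity of expectation will then extend this to randomized signaling schemes.

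First I would use the symmetry of $u$ to write the social welfare of a matching $m$ under realization $\theta$ as
\begin{equation*}
\sum_{i=1}^n u(\theta_i,\theta_{m(i)}) \;=\; 2\bigl[m_{11}\,u(1,1) + m_{10}\,u(1,0) + m_{00}\,u(0,0)\bigr],
\end{equation*}
since each team contributes its utility value twice (once for each teammate). Next, observe that for any fixed $\theta$ the total number of type-$1$ and type-$0$ agents is determined, giving the combinatorial identities
\begin{equation*}
2 m_{11} + m_{10} \;=\; \J(\theta), \qquad 2 m_{00} + m_{10} \;=\; \I(\theta),
\end{equation*}
so that $m_{11}$ and $m_{00}$ are fully determined by $m_{10}$ (given $\theta$). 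Substituting these relations into the welfare expression yields
\begin{equation*}
\sum_{i=1}^n u(\theta_i,\theta_{m(i)}) \;=\; \J(\theta)\,u(1,1) + \I(\theta)\,u(0,0) + m_{10}\bigl(2u(1,0) - u(1,1) - u(0,0)\bigr).
\end{equation*}
The first two terms depend only on $\theta$, not on the choice of matching, so over any distribution of matchings induced by a signaling scheme, optimizing expected welfare is equivalent to optimizing $\EE[m_{10}] \cdot \bigl(2u(1,0) - u(1,1) - u(0,0)\bigr)$.

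The two cases then follow by inspecting the sign of the coefficient. When $u$ is convex, the defining inequality $u(1,1) - u(1,0) \geq u(1,0) - u(0,0)$ rearranges to $2u(1,0) - u(1,1) - u(0,0) \leq 0$, so maximizing welfare is equivalent to minimizing $\EE[m_{10}]$; the identities above then give the equivalent formulations in terms of $\EE[m_{11}]$ and $\EE[m_{00}]$ (since $\EE[m_{11}] = (\EE[\J] - \EE[m_{10}])/2$ and $\EE[m_{00}] = (\EE[\I] - \EE[m_{10}])/2$, both monotone decreasing in $\EE[m_{10}]$ up to a $\theta$-determined constant). The strictly concave case is symmetric, with the coefficient strictly positive, so maximizing welfare is equivalent to maximizing $\EE[m_{10}]$, hence (by the same identities) to minimizing $\EE[m_{11}]$ or $\EE[m_{00}]$.

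This proof is essentially a direct algebraic rewriting; the only conceptual point to get right is that $\J(\theta)$ and $\I(\theta)$ are matching-invariant for each realization, which is why the equivalence to a single-parameter optimization holds pointwise in $\theta$ and therefore in expectation. I do not foresee a genuine obstacle beyond keeping track of the affine relationships among $m_{11}, m_{10}, m_{00}$ when translating between the three equivalent formulations stated in the lemma.
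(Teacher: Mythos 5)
Your proposal is correct and follows essentially the same route as the paper's proof: both express the welfare as $2\bigl(m_{11}u(1,1)+m_{10}u(1,0)+m_{00}u(0,0)\bigr)$, use the counting identities tying $m_{11},m_{10},m_{00}$ to $\J(\theta),\I(\theta)$ to reduce to an affine function of a single match-count, and conclude from the sign of $u(1,1)+u(0,0)-2u(1,0)$. The only (immaterial) difference is that you parameterize by $m_{10}$ while the paper parameterizes by $m_{11}$.
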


\section{Self-Agnostic Agents}
\label{ssec:agnostic}

We first consider the setting where agents are \emph{self-agnostic}, i.e., they do not know their own types. We provide optimal signaling schemes when $(i)$ agent utilities are convex,  and $(ii)$ agents have i.i.d. types and their utilities are strictly concave. We the generalize the latter to the setting with multiple clusters. There, we provide a signaling scheme which is asymptotically optimal in $n$ (for a fixed number of clusters).

When agents do not have access to their own types, they share a common posterior induced by the announced ordering $\sigma$. Thus, the set of `persuasiveness' constraints in Problem~\eqref{opt} reduces to:
$$ \EEc{\sig{i}(\theta)}{\sigma} \geq \EEc{\sig{i+1}(\theta)}{\sigma} \qquad \forall\;\sigma\in\bSigma, \forall \; i \in [n-1].$$

Note that the above constraints require that, for two agents matched together, in expectation the agent who comes first in the ranking must indeed have a higher type than the agent who comes second. Adding this constraint is without loss of generality due to the assumption that agents' utilities are {symmetric} and anonymous. 

\subsection{Convex utilities}

Our first main result is that, for convex agent utilities, the principal's optimization problem is, in some sense, easy. %We defer the rather technical proof of the theorem to the appendix.
	
	\begin{theorem}\label{thm:convex-truth}
		If $u$ is convex, then the optimal signaling policy is to always announce the true ordering.
	\end{theorem}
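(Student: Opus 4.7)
The plan is to establish optimality via a pointwise maximization argument. By Lemma~\ref{lem:convexity}(i), since $u$ is convex, maximizing expected social welfare is equivalent to maximizing the expected number of `1-1' matches. For any realization $\theta$ with $\J(\theta)$ type-1 agents, no matching can produce more than $\lfloor \J(\theta)/2\rfloor$ `1-1' pairs, since each such pair consumes two type-1 agents. Hence under any signaling scheme the expected number of `1-1' matches is at most $\EE\bigl[\lfloor \J(\theta)/2\rfloor\bigr]$, an upper bound independent of $\phi$.

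Next I would show that announcing the true ordering achieves this pointwise bound. Listing all type-1 agents in the first $\J(\theta)$ positions and all type-0 agents in the remaining positions, the sequential pairing rule yields exactly $\lfloor \J(\theta)/2\rfloor$ `1-1' matches on every realization, so the expected count matches the upper bound. It remains to verify feasibility in \eqref{opt}. Persuasiveness is immediate: conditional on the announced signal $\sigma$, each agent's posterior type is deterministic -- type $1$ in the first $\J(\theta)$ positions and type $0$ elsewhere -- so the posteriors $\EEc{\sig{i}(\theta)}{\sigma}$ are weakly decreasing in $i$, in agreement with the announced ordering. For the Pareto improvement constraint, I would argue via within-cluster exchangeability: all agents in cluster $k$ have identical ex-ante expected utility under the true ordering, equal to the cluster-averaged utility, and this cluster-averaged utility weakly exceeds the baseline cluster-averaged utility by the convexity inequality $u(1,1)+u(0,0)\geq 2u(1,0)$ combined with the pointwise maximization of same-type pairings.

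The main obstacle I anticipate is the Pareto improvement verification. While the true ordering pointwise maximizes social welfare, an agent revealed to be type $0$ conditions on a match of utility $u(0,0)$, which can lie strictly below her unconditional baseline expected utility; hence the per-signal form of the Pareto constraint requires care. The key leverage is symmetry within each cluster together with Lemma~\ref{lem:convexity}: averaging conditional utilities over all signals (weighted by likelihood) reduces the per-agent Pareto check to a cluster-level comparison, at which point convexity of $u$ guarantees weak ex-ante improvement over the baseline matching $m(\emptyset)$. Once feasibility is established, optimality follows directly from the pointwise upper bound developed in the first paragraph.
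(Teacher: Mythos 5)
Your outline tracks the paper's proof up to its crux: reduce to maximizing the expected number of `1-1' matches via Lemma~\ref{lem:convexity}, observe that any true ordering attains the pointwise maximum and is trivially persuasive, and then verify Pareto improvement in the ex-ante (signal-averaged) sense, which is also how the paper itself checks the constraint. The gap is in the Pareto step. Since types are binary, ``the true ordering'' is not one signal but a family of tie-breakings, and the Pareto property depends on which tie-breaking you commit to. Your argument implicitly uses a symmetric (uniform) tie-breaking and asserts that the cluster-averaged utility under full revelation weakly exceeds the baseline cluster average ``by convexity.'' That inequality is false in general, even at the aggregate cluster level: full revelation redistributes welfare across clusters, and a strong cluster can lose. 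Concretely, take $K=2$, $n_1=n_2=2$, $p_1\to 1$, $p_2=1/2$, and the linear (hence convex) utility $u(\theta_i,\theta_j)=\theta_i+\theta_j$. In the baseline the two cluster-1 agents pair with each other, so each has expected utility $\approx 2$. Under the true ordering with uniform tie-breaking, with probability $\approx 1/2$ exactly one cluster-2 agent is type 1, the total number of type-1 agents is $3$, and a given cluster-1 agent is the odd type-1 matched to a type-0 agent with probability $1/3$; her ex-ante utility is then $\approx \tfrac{1}{2}\cdot 2+\tfrac{1}{2}\bigl(\tfrac{2}{3}\cdot 2+\tfrac{1}{3}\cdot 1\bigr)=\tfrac{11}{6}<2$. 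So within-cluster exchangeability plus convexity does not close the argument; it breaks exactly where you flagged the difficulty, and the failure persists for strictly convex $u$ when $p_1$ is close enough to $1$.

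The paper closes this gap by pinning down the tie-breaking: start from the baseline (within-cluster) matching, conserve every `1-1' and `0-0' baseline pair, conserve one `1-0' baseline pair chosen uniformly at random, and split only the remaining mixed pairs -- this is still an announcement of a true ordering. Under this coupling an agent's outcome differs from her baseline outcome only when her baseline pair was mixed and got split, in which case she moves from $u(1,0)$ to $u(1,1)$ or to $u(0,0)$ with equal conditional probability, so the per-agent ex-ante gain is exactly $p(1-p)(1-\alpha)\bigl(u(1,1)+u(0,0)-2u(1,0)\bigr)\ge 0$ by convexity. Some device of this kind -- conserving the baseline same-type pairs, or otherwise specifying a tie-breaking under which no agent's ex-ante utility can fall -- is needed to make your Pareto step go through; the pointwise welfare maximization and persuasiveness parts of your proposal are fine as written.
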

		
	\begin{proof}
			By Lemma~\ref{lem:convexity}, for convex $u$, the principal's optimization problem reduces to maximizing the expected number of `1-1' matches. Suppose announcing the true ordering of agents is Pareto-improving. Then, this scheme is clearly optimal since, for each realization $\theta$, all type 1 individuals will pair up (modulo one odd-one-out), hence maximizing $m_{11}$ for each realization. The resulting matching for each realization will be stable, as the truly highest-ranked agents will be satisfied with their match. Thus, it remains for us to show that this scheme is Pareto-improving. 
		
		Since agent types are binary, the true ordering is not unique, and the policy is thus underspecified. Our policy is as follows. Start with the baseline (myopic) ordering, in which agents only match within their own cluster. Conserve all `1-1' and `0-0' matches. Pick one `1-0' match uniformly at random to conserve. Split up the rest to create all `1-1' and `0-0' matches. (This is equivalent to taking the randomly chosen `1-0' match, putting that type 1 agent to be the last type 1 agent announced in the ordering, and that type 0 agent to be the first type 0 agent announced in the ordering.) 		
%		 In particular, when there is an odd number of type 1 agents, one ``unlucky'' type 1 agent will have to be matched with a type 0 agent, and one ``lucky'' type 0 agent will be matched with a type 1 agent. We show that a ``fair'' policy, which chooses the lucky (resp., unlucky) agent uniformly at random, is Pareto-improving.
		 		
		Let $\mathcal{S}^\text{truth}$ denote the truthful, fair signaling scheme described above, and $\mathcal{U}_i(\mathcal{S}^\text{truth})$ denote agent $i$'s expected utility under this scheme. Let $j$ denote agent $i$'s baseline match, and suppose $\theta_i, \theta_j \sim Ber(p)$. Moreover, let $E$ denote the event that the $(i,j)$ match is chosen to be the mixed match.  We have: 		
		\begin{align*}
		\mathcal{U}_i(\mathcal{S}^\text{truth}) &= p^2 u(1,1) + (1-p)^2 u(0,0) \\
		&\hspace{1cm}+ p(1-{p}) \left(u(1,1)\left(1-\mathbb{P}\left[E \mid \theta_i = 1, \theta_j = 0\right]\right) + u(1,0) \mathbb{P}\left[E \mid \theta_i = 1, \theta_j = 0\right] \right)\\
		&\hspace{1cm}+(1-p){p}\left(u(1,0)\mathbb{P}\left[E \mid \theta_i = 0, \theta_j = 1\right] + u(0,0)\left(1-\mathbb{P}\left[E \mid \theta_i = 0, \theta_j = 1\right]\right)\right).
		\end{align*}

%		\begin{align*}
%		\mathcal{U}_i(\mathcal{S}^\text{truth}) &= p\,\hat{p} u(1,1) + (1-p)(1-\hat{p}) u(0,0) \\
%		&\hspace{1cm}+ p(1-\hat{p}) \left(u(1,1)\left(1-\mathbb{P}\left[E \mid \theta_i = 1, \theta_j = 0\right]\right) + u(1,0) \mathbb{P}\left[E \mid \theta_i = 1, \theta_j = 0\right] \right)\\
%		&\hspace{1cm}+(1-p)\hat{p}\left(u(1,0)\mathbb{P}\left[E \mid \theta_i = 0, \theta_j = 1\right] + u(0,0)\left(1-\mathbb{P}\left[E \mid \theta_i = 0, \theta_j = 1\right]\right)\right).
%		\end{align*}
		
		Since a `1-0' match is chosen to be conserved uniformly at random, it is easy to see that $\mathbb{P}\left[E \mid \theta_i = 1, \theta_j = 0\right] = \mathbb{P}\left[E \mid \theta_i = 0, \theta_j = 1\right]$. We denote the probability of this event $\alpha$. Thus,
	\begin{align*}
		\mathcal{U}_i(\mathcal{S}^\text{truth})  - \mathcal{U}_i(\emptyset) &=  p^2 u(1,1) + (1-p)^2 u(0,0) \\
		&\hspace{1cm}+ p(1-{p}) \left(u(1,1)\left(1-\alpha\right) + u(1,0) \alpha \right)
		\\
		&\hspace{1cm}+(1-p){p}\left(u(1,0)\alpha + u(0,0)\left(1-\alpha\right)\right) \\
		&\hspace{1cm}-\left(p^2 u(1,1) + (1-p)^2 u(0,0) + u(1,0) \left(p(1-{p}) +{p}(1-p) \right)\right)  
		\\
		&= p(1-{p})\left(1-\alpha\right)\left(\left(u(1,1)-u(1,0)\right) - \left(u(1,0)-u(0,0)\right)\right)  \\
		&\geq 0 \text{ by convexity}
		\end{align*}

%		\begin{align*}
%		\mathcal{U}_i(\mathcal{S}^\text{truth})  - \mathcal{U}_i(\emptyset) &=  p\,\hat{p} u(1,1) + (1-p)(1-\hat{p}) u(0,0) \\
%		&\hspace{1cm}+ p(1-\hat{p}) \left(u(1,1)\left(1-\alpha\right) + u(1,0) \alpha \right)
%		\\
%		&\hspace{1cm}+(1-p)\hat{p}\left(u(1,0)\alpha + u(0,0)\left(1-\alpha\right)\right) \\
%		&\hspace{1cm}-\left(p\,\hat{p} u(1,1) + (1-p)(1-\hat{p}) u(0,0) + u(1,0) \left(p(1-\hat{p}) +\hat{p}(1-p) \right)\right)  \\
%		&= \left(1-\alpha\right)\left(p(1-\hat{p})\left(u(1,1)-u(1,0)\right) + \hat{p}(1-p)\left(u(0,0)-u(1,0)\right)\right)
%	\end{align*}
	\end{proof}

In the setting with convex utilities, the principal's and type 1 agents' incentives are aligned, since it is welfare-optimal for type 1 agents to be matched together. In this case, then, stability is easy to achieve; the tension that dominates is that of endogenous participation for type 0 agents. By convexity, however, the additional gain from being in a `1-1' match outweighs the loss from being assigned to a `0-0' match.

We note that the uncertainty in types is key here; we will later see that this is not the case when agents are self-aware. 
	
\subsection{Concave utilities}\label{ssec:self-ag-concave}
	
We now consider the more interesting case, where utilities are \emph{strictly concave}. By Lemma~\ref{lem:convexity}, maximizing social welfare is equivalent to minimizing the number of `0-0' matches. In this setting, the principal's and type 1 agents' incentives are now misaligned. Thus, not only is there the difficulty of ensuring endogenous participation of type 1 agents, but stability also becomes an inherently harder desideratum to achieve. We approach these challenges by first tackling the problem of implementing a stable matching.

Let $x_{\theta,\sigma}$ denote the probability of announcing ordering $\sigma$ when the realization is $\theta$, and let $m_{00}(\sigma(\theta))$ be the resulting number of `0-0' matches under this realization and announced ordering. {Similarly, $m_{11}(\sigma(\theta)), m_{10}(\sigma(\theta))$ respectively denote the number of `1-1', `1-0' matches induced by announcing $\sigma$ for realization $\theta$.} {Further, recall that $\lambda(\theta)$ is the prior likelihood of observing $\theta$.}  We consider the relaxed problem which ignores the Pareto improvement constraints of agents, and is thus a lower bound on (\ref{opt}): 
\begin{equation}\tag{$\widehat{OPT}$}\label{relaxed-lp}
\begin{aligned}
&\min_{\{x_{\theta,\sigma}\}} & & \sum_{\theta\in\bTheta}\sum_{\sigma\in\bSigma} \lambda(\theta)x_{\theta,\sigma}m_{00}(\sigma(\theta))  & \\
& \text{s.t. } & & \sum_{\theta\in\bTheta}\lambda(\theta)x_{\theta,\sigma}\sig{i}(\theta) \geq \sum_{\theta\in\bTheta}\lambda(\theta)x_{\theta,\sigma}\sig{i+1}(\theta) &   \forall \; \sigma\in\bSigma, \forall \; i \in [n-1]\\
&   & & \sum_{\sigma\in\bSigma}x_{\theta,\sigma} = 1 & \forall \; \theta\in\bTheta\\
& & &x_{\theta,\sigma} \geq 0 & \forall \; \theta \in \bTheta, \forall \; \sigma\in\bSigma
\end{aligned}
\end{equation}

Let \eqref{relaxed-lp} denote the optimal value of the relaxed linear program. This LP is computationally intractable, even for small values of $n$, since it is exponentially large in both the number of variables (of which there are $n!2^n$) and the number of constraints (of which there are $O(n!2^n)$). {Before proceeding, we note that by Lemma~\ref{lem:convexity} we could have equivalently focused on maximizing the expected number of `1-0' matches. For our high-level intuition throughout the rest of the paper, we will refer to these two objectives interchangeably.}

\subsubsection{Warm-up: Single cluster}
Despite the complexity of the above LP, for the special case of i.i.d. agent types (i.e., where all agents form a single cluster with the same prior on their types), we are able to demonstrate an \emph{optimal signaling scheme}. Moreover, the resulting social welfare matches the \emph{first-best} solution, i.e., the utility of the optimal matching ignoring the Pareto improvement and persuasiveness constraints. 

\begin{theorem}\label{thm:fb-symm}
	In the single-cluster setting where all agents have i.i.d. types and strictly concave utilities, for any realized type-profile $\theta$, the principal can achieve $m_{00}(\sigma^*(\theta)) = \min\limits_{\sigma}m_{00}(\sigma(\theta))$. Thus,
	\begin{align*}
	OPT = \expectation{\min_{\sigma}m_{00}(\sigma(\theta))}.
	\end{align*}
\end{theorem}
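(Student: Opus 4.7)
Proof plan.

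The plan is to exhibit an explicit symmetric signaling scheme and verify it simultaneously (i) realizes the first-best matching for every realization, (ii) is persuasive, and (iii) is Pareto improving. Given $\theta$ with $h = h(\theta)$ ones and $\ell = n - h$ zeros, let $\mathcal{G}(\theta)$ denote the set of orderings $\sigma$ whose induced sequential matching attains $m_{00}(\sigma(\theta)) = (\ell-h)^+/2$; equivalently, $\sigma \in \mathcal{G}(\theta)$ iff each of the $n/2$ pair-positions $(\sigma_{2k-1}, \sigma_{2k})$ has types $\{1,0\}$ unless one type runs out, in which case the remaining pair-positions are of the surplus type. The proposed scheme samples $\sigma$ uniformly from $\mathcal{G}(\theta)$; non-emptiness is immediate (alternate the minority with the majority in the first $2\min(h,\ell)$ positions).

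Persuasiveness. The joint distribution of $(\theta,\sigma)$ depends on $\theta$ only through $h(\theta)$: for any $\sigma \in \mathcal{G}(\theta)$, $\Pr[\theta,\sigma] = p^h(1-p)^{n-h}/|\mathcal{G}(\theta)|$, and $|\mathcal{G}(\theta)|$ is a function of $h$ alone. Hence, conditioning on the announced $\sigma$ and on $h(\theta)=h$, the posterior is uniform over the set of ``good type patterns'' $T\in\{0,1\}^n$ with $h$ ones whose sequential pairs are first-best (in bijection with compatible $\theta$'s via $\theta_{\sigma_i}=T_i$). By symmetry (each good pattern has exactly $h$ ones, and the family is invariant under permutations of pair-positions and within-pair slots), every position $i$ carries a $1$ in the same fraction $h/n$ of good patterns. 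Thus $\EE[\sigma_i(\theta)\mid\sigma,h]=h/n$, independent of $i$, and marginalizing over $h$ gives $\EE[\sigma_i(\theta)\mid\sigma]$ position-independent. The persuasiveness condition of Proposition~\ref{prop:matching} then holds with equality, with ties broken in the principal's favor.

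Pareto improvement and optimality. By the scheme's agent-symmetry, each agent's expected utility equals (expected social welfare)$/n$. Since the scheme realizes first-best pointwise, its expected social welfare weakly dominates that of any no-signal matching, so every agent weakly prefers the scheme to the baseline. Optimality is then immediate: no signaling scheme can yield strictly fewer expected $0$-$0$ matches than $\EE[\min_\sigma m_{00}(\sigma(\theta))]$, and combined with Lemma~\ref{lem:convexity} this gives the stated equality.

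The main obstacle is the persuasiveness step. A naive ``deterministic template'' scheme --- e.g., always mapping $h$ to the fixed template $(1,0,1,0,\dots,1,0,0,\dots,0)$ --- fails, since even positions inside mixed pairs are deterministically $0$ while the next odd position is often $1$, producing a monotonicity violation across pair boundaries in $\EE[\sigma_i(\theta)\mid\sigma]$. The full uniform randomization over $\mathcal{G}(\theta)$ is precisely what averages out these boundary asymmetries and forces position-independence of the posterior, which is what makes the reduction from ``first-best pointwise'' to ``persuasive and Pareto-improving'' go through.
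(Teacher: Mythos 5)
Your proposal is correct in substance but follows a genuinely different route from the paper. The paper's proof (Lemmas~\ref{lem:relaxed-sol} and~\ref{lem:fb-ir}) works with the relaxed LP: for each realization $\theta$ it pairs $\theta$ with a ``flipped'' realization $\bar\theta$ of equal prior probability, announces one common ordering $\sigma^*$ for both so that the persuasiveness constraints are satisfied by cancellation (the top excess positions get posterior type $1$, the rest posterior $1/2$), symmetrizes over agents, and then verifies Pareto improvement by an explicit computation culminating in the binomial inequality $\tfrac1n\EE[\min\{\I(\theta),\J(\theta)\}]\geq p(1-p)$. You instead randomize uniformly over \emph{all} welfare-optimal orderings for the realized $\theta$; a group-symmetry argument (invariance of the family of minimizing type patterns under pair-position permutations and within-pair swaps, which acts transitively on positions) makes every posterior expected type equal --- in fact equal to the prior $p$, since the posterior on $\J(\theta)$ given any signal coincides with its prior --- so persuasiveness holds with equality under the paper's favorable tie-breaking, and Pareto improvement follows from agent-exchangeability plus the pointwise welfare optimality of the first-best matching (via Lemma~\ref{lem:convexity}), with no computation. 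What your route buys is a shorter, more conceptual feasibility argument; what the paper's route buys is a signal that is genuinely informative about ranks (and whose posterior structure is reused later), whereas yours coordinates agents while revealing nothing about expected types, leaning more heavily on the tie-breaking assumption. Your ex-ante treatment of the Pareto constraint matches the paper's own treatment, and under your scheme the per-signal version also holds because conditional and ex-ante utilities coincide.

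One point needs fixing: your ``equivalently'' restatement of $\mathcal{G}(\theta)$ --- mixed pairs first, then all surplus-type pairs --- is \emph{not} equivalent to the set of orderings attaining $\min_\sigma m_{00}(\sigma(\theta))$ (e.g., a $(1,1)$ pair placed before the mixed pairs still attains the minimum), and if you sampled from that restricted family the argument would fail: the family is no longer invariant under pair-position permutations, the trailing positions carry the surplus type deterministically, and the posterior expected types become non-monotone across the boundary --- exactly the failure mode you describe for the deterministic template in your last paragraph. The symmetry step therefore forces $\mathcal{G}(\theta)$ to be the full set of minimizers (or another position-transitive invariant family); with that definition, as your own justification indicates is intended, the proof goes through.
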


We provide a proof sketch of Theorem~\ref{thm:fb-symm} below, and defer the formal proof to Appendix~\ref{sec:app-thm2-proof}.

\begin{proof}[Proof outline]
%\proof{Proof outline}
To prove the theorem, we first solve the relaxed problem (\ref{relaxed-lp}), and show that, for all realizations $\theta$, the principal can construct a scheme such that the number of matches between two type 0 agents is as small as possible. We henceforth refer to this scheme as \emph{First Best} (denoted $FB$) since this is the best achievable utility without strategic considerations. We then show that this signaling scheme satisfies the Pareto improvement constraint for all agents, and hence is feasible for (\ref{opt}). We informally describe the scheme.
	
The main intuition behind designing our persuasive signaling scheme to achieve the first-best outcome is to `pair' type-profiles for a given signal $\sigma$ such that together they satisfy the rank-order in the signal. In particular, since agents are self-agnostic, we can leverage this by pairing up type-profiles such that for each type-profile $\theta$ under which two agents of different types are matched, there exists a profile $\bar\theta$, realized with equal probability, in which the same agents are matched, but have their types flipped. This ensures that the incentive constraints in (\ref{relaxed-lp}) are satisfied, as we are incentivizing a strong agent  to accept being matched with a weak agent via a promise of matching with a strong agent when they themselves are weak. 
	
	We illustrate the \emph{First Best} construction via an example.
	
	\begin{example}\label{ex:fb}
		Consider $\theta = (\theta_A, \theta_B, \theta_C, \theta_D, \theta_E, \theta_F) = (1, 1, 1, 1, 0, 0)$. Then, $\sigma^* = (A\succcurlyeq B \succcurlyeq C  \succcurlyeq E  \succcurlyeq D  \succcurlyeq F)$ achieves zero `0-0' matches. Consider $\bar\theta = (1, 1, 0, 0, 1, 1)$. Announcing the same signal $\sigma^*$ for this realization also achieves zero `0-0' matches. Additionally, note that since $\J(\theta) = \J(\bar\theta)$ and agents have i.i.d. types, $\lambda(\theta) = \lambda(\bar\theta)$. 
		Suppose that, given either of these realizations (and for no other realization), the principal announces $\sigma^*$ with probability 1. Then, when agents receive the signal $\sigma^*$, the posterior probability of being in realization $\theta$ is equal to the posterior probability of being in $\bar\theta$. Thus, the expected posterior types of agents $A$ and $B$ is 1; the expected posterior type of all other agents is 0.5, and the persuasiveness constraints are thus respected.
	\end{example}

	Finally, the fact that the above scheme satisfies the participation constraints depends on the concavity of the agents' utility function. The intuition behind this is that under a strictly concave $u$, the additional value of a `1-1' match compared to a `1-0' match is less than the value of being in a `1-0' match than being in a `0-0' match. Consequently, having an assurance of being in a 0-0 team as rarely as possible dominates the match under the myopic matching.

\end{proof}

To summarize, Theorem~\ref{thm:fb-symm} states that when agents have i.i.d. types and are agnostic of their own types, the principal has enough freedom to match as many type 1 agents with type 0 agents as possible, and to do so such that (i) the declared rank-ordering is persuasive and (ii) agents are better off under this mechanism than under their myopic strategy. In the next section, we use the \emph{First Best} signaling scheme as a critical primitive for the multi-cluster setting.

\subsubsection{Multiple clusters}

%A natural next question to ask would be how well the principal can do if there are multiple clusters of agents, with agents within each cluster drawing i.i.d. types. Formally, suppose there are $K\geq 2$ groups of agents, with the types of the $n_k$ agents\footnote{We assume $n_k$ is even.} in each group $k$ drawn independently from a $Ber(p_k)$ distribution. We assume that $p_k \neq p_{k'}$ for all $k \neq k'$. Additionally, we denote $\I_k(\theta), \J_k(\theta)$ the number of agents with a true type of 0 and 1, respectively, in cluster $k$, {for a type realization $\theta$}.

Our first observation is that, in the setting with multiple clusters, there exist instances for which there is no hope of achieving first-best (i.e., maximal pairing of type 1 and type 0 agents).  We will see that this fact creates a challenge in analyzing our signaling scheme, as first-best is far too loose a bound. 

\begin{prop}
For multiple clusters of agents, the gap between $OPT$ and first-best is unbounded.
\end{prop}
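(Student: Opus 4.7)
My plan is to exhibit a family of two-cluster instances, parametrized by $n$, in which the gap grows linearly in $n$. Fix a small constant $\epsilon > 0$ and consider $K=2$ clusters of equal size $n/2$ with priors $p_1 = 1-\epsilon$ and $p_2 = \epsilon$. Cluster~$1$ agents are near-certain type~$1$ and cluster~$2$ agents are near-certain type~$0$, so first-best can pair across clusters to drastically reduce the number of $0$--$0$ matches; however, this will be precluded in OPT by the Pareto improvement constraint for cluster~$1$ agents, who individually prefer matching with other likely type-$1$ partners.

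To lower-bound first-best: for any realization $\theta$ with $S := \J(\theta)$ type-$1$ agents, one checks that $\min_\sigma m_{00}(\sigma(\theta)) = \max(0,(n-2S)/2)$, since greedy pairing of type-$1$s with type-$0$s minimizes the number of $0$--$0$ matches. Since $\EE[S] = n/2$ and standard concentration gives $|S - n/2| = O(\sqrt n)$ with high probability, this yields $\EE[\min_\sigma m_{00}(\sigma(\theta))] = O(\sqrt n)$.

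To upper-bound OPT, consider a cluster~$1$ agent $c$ whose baseline match is within-cluster, giving $U_\emptyset = u(1,1) - O(\epsilon)$. For any signal $\sigma$ under which $c$ is cross-matched with a cluster~$2$ agent $d$, Pareto improvement requires $\EEc{u(\theta_c, \theta_d)}{\sigma} \geq U_\emptyset$. Using the decomposition
\[
u(\theta_c,\theta_d) = u(0,0) + (u(1,0)-u(0,0))(\theta_c+\theta_d) - \gamma\,\theta_c\theta_d,
\]
with $\gamma := 2u(1,0)-u(1,1)-u(0,0) > 0$ by strict concavity, and noting that $\gamma\,\EEc{\theta_c\theta_d}{\sigma} \geq 0$, the Pareto constraint implies $\EEc{\theta_c}{\sigma} + \EEc{\theta_d}{\sigma} \geq 1 + \delta - O(\epsilon)$ for the constant $\delta := (u(1,1)-u(1,0))/(u(1,0)-u(0,0)) > 0$. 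Since $\EEc{\theta_c}{\sigma} \leq 1$, we conclude $\Pr[\theta_d = 1 \mid \sigma] \geq \delta/2$ for $\epsilon$ small enough. By the law of total probability, $\sum_\sigma \Pr[\sigma, \theta_d = 1] = p_2 = \epsilon$, so for each cluster~$2$ agent $d$, $\Pr[d \text{ is cross-matched}] \leq 2\epsilon/\delta$; summing over $d$ bounds the expected number of cross-cluster matches by $O(\epsilon n)$. This forces a fraction $1-O(\epsilon)$ of cluster~$2$ agents to be matched within cluster~$2$, yielding $\EE[m_{00}^{OPT}] \geq (1-p_2)^2(n/4) - O(\epsilon n) = \Omega(n)$ for small $\epsilon$.

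Combining, $\EE[m_{00}^{OPT}] - \EE[m_{00}^{FB}] = \Omega(n) - O(\sqrt n) = \Omega(n)$; by the affine relation between social welfare and match counts derived from Lemma~\ref{lem:convexity}, this translates to a social welfare gap of $\Omega(n)$, which is unbounded in $n$. The main obstacle will be the Bayesian argument deriving $\Pr[\theta_d = 1 \mid \sigma] \geq \delta/2$ from the conditional Pareto constraint: one must carefully track the $O(\epsilon)$ corrections in $U_\emptyset$ and in the lower bound on $\EEc{\theta_c}{\sigma} + \EEc{\theta_d}{\sigma}$, and verify that strict concavity keeps $\delta$ bounded away from zero uniformly in $n$.
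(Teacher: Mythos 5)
Your proof is correct, but it takes a genuinely different (and heavier) route than the paper. The paper's proof is a two-line degenerate example: $K=2$, $p_1=1$, $p_2=0$, $n_1=n_2$, so cluster-1 agents are certain to be type 1, the Pareto-improvement constraint makes any cross-cluster match strictly worse for them than their baseline $u(1,1)$, hence OPT is forced to the within-cluster outcome with $n_2/2$ `0--0' matches while first-best achieves zero, and the gap grows with $n$. You instead take non-degenerate priors $p_1=1-\epsilon$, $p_2=\epsilon$ and run a quantitative Bayesian argument: the per-signal Pareto constraint for a cross-matched cluster-1 agent, combined with the decomposition of $u$ and strict concavity ($\gamma>0$), forces $\Pr[\theta_d=1\mid\sigma]\geq\delta/2$ on any signal that cross-matches a cluster-2 agent $d$, and the law of total probability then caps $\Pr[d\text{ cross-matched}]$ at $2\epsilon/\delta$, giving $\Omega(n)$ `0--0' matches for OPT versus $O(\sqrt n)$ for first-best. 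This buys you something strictly stronger than the paper's statement -- the unbounded gap persists for interior priors, not just the boundary case where certainty trivializes the argument -- at the cost of considerably more work. One small patch: your intermediate bound $\EE[m_{00}^{OPT}]\geq(1-p_2)^2(n/4)-O(\epsilon n)$ is justified as if the within-cluster-2 matching were independent of the realized types, which it need not be (the signal depends on $\theta$); replace this with the deterministic counting bound that each within-cluster-2 pair fails to be `0--0' only if it contains a type-1 agent, of which there are only $\epsilon n/2$ in expectation, yielding $\EE[m_{00}^{OPT}]\geq n/4-O(\epsilon n)$ and the same $\Omega(n)$ conclusion. If brevity is the goal, the paper's degenerate instance suffices; your argument is the one to keep if you want robustness to perturbing the priors.
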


\begin{proof}
Consider the following trivial example of $K=2$ clusters, with $p_1 = 1, p_2 = 0$ and $n_1 = n_2$. Given that Cluster 1 agents know with certainty that they are type 1 agents, no signal can incentivize them to match with Cluster 2 agents. The optimal solution, then, is to send no signal, thereby achieving $\frac{n_2}/2$ `0-0' matches.The first-best scheme, on the other hand, obtains zero `0-0' matches.
\end{proof}

Consider the scheme that only matches agents within their own clusters, and does so using the \emph{First Best} signaling scheme from the single-cluster setting; we refer to this scheme as \emph{Cluster First Best}, denoted as $FB_C$. Given Theorem~\ref{thm:fb-symm}, this scheme is feasible, and thus gives us an upper bound on $OPT$. However, the scheme is sub-optimal for $K\geq 2$ clusters, since it misses opportunities to match excess type 1 and type 0 agents across clusters, an event which occurs a non-trivial number of times. 

Despite this, as the number of agents $n$ scales, while keeping the number of clusters constant, we show that the \emph{Cluster First Best} policy gets arbitrarily close to the optimal solution for $K=2$ clusters.
\begin{theorem}\label{thm:2-clusters}
	For $K = 2$ clusters and $n_1, n_2$ such that $\sqrt{\frac{\ln n_1}{n_1}} \leq |p_1-1/2|,\sqrt{\frac{\ln n_2}{n_2}} \leq |p_2-1/2|$, then Cluster First Best achieves $o(1)$ regret (with respect to $n$).
\end{theorem}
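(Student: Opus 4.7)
The plan is to show that the expected regret $\EE[m_{00}^{CFB}] - \EE[m_{00}^{OPT}] = o(1)$ as $n \to \infty$; by Lemma~\ref{lem:convexity} this translates directly to an $o(1)$ gap in expected social welfare. I would begin with concentration: Hoeffding's inequality combined with the hypothesis $\sqrt{\ln n_k / n_k} \leq |p_k - 1/2|$ gives $\PP[|\J_k/n_k - p_k| \geq |p_k - 1/2|] \leq 2\exp(-2 n_k(p_k-1/2)^2) \leq 2/n_k^2$. This ensures $\mathrm{sign}(\J_k - \I_k) = \mathrm{sign}(p_k - 1/2)$ with probability at least $1 - 2/n_k^2$; letting $\mathcal{G}$ denote the intersection of these events across both clusters, the contribution of $\mathcal{G}^c$ to the expected regret is at most $\PP[\mathcal{G}^c] \cdot O(n) = O(1/n) = o(1)$, since the maximum possible regret per realization is $O(n)$.

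Conditional on $\mathcal{G}$, I would split into cases based on the signs of $p_k - 1/2$. When both $p_k > 1/2$, $CFB$ achieves zero `0-0' matches in each cluster, matching the first-best lower bound; since $\ref{relaxed-lp} \leq OPT \leq CFB$ (the first inequality because \ref{relaxed-lp} is a relaxation of \ref{opt}, the second because $CFB$ is feasible for \ref{opt} by Theorem~\ref{thm:fb-symm}) and first-best $\leq \ref{relaxed-lp}$ trivially, the sandwich forces $OPT = CFB$ on $\mathcal{G}$. When both $p_k < 1/2$, $CFB$'s `0-0' count $\sum_k (\I_k - \J_k)/2 = (\I - \J)/2$ again equals the first-best, and the same sandwich gives $OPT = CFB$.

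The critical case is mixed signs, WLOG $p_1 > 1/2 > p_2$: $CFB$'s value $(\I_2 - \J_2)/2 \approx (1-2p_2)n_2/2$ exceeds the first-best by $\Theta(n)$, so the sandwich collapses, and the main obstacle becomes lower-bounding $OPT$ without appeal to the first-best. My plan here is a dual-certification argument on \ref{relaxed-lp}. The guiding structural insight is that the pairing trick underlying Theorem~\ref{thm:fb-symm} exploits the intra-cluster symmetry $\lambda(\theta) = \lambda(\bar\theta)$ to produce balanced posteriors, but cross-cluster swaps change the prior likelihood by a factor $p_1(1-p_2)/[(1-p_1)p_2] \neq 1$, so any attempt to pool realizations inducing cross-cluster `1-0' matches inherits an imbalance scaled by this ratio. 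I would construct dual multipliers on the persuasiveness constraints that charge each attempted cross-cluster swap by an amount reflecting this likelihood mismatch, and then verify dual feasibility using the per-cluster counts afforded by $\mathcal{G}$, showing that the resulting dual objective is within $o(1)$ of $CFB$'s value. By weak duality this yields $OPT \geq \ref{relaxed-lp} \geq CFB - o(1)$ on $\mathcal{G}$, and combined with the $o(1)$ contribution from $\mathcal{G}^c$ gives the claimed $o(1)$ regret bound. The hard part will be choosing the multipliers so that they both simultaneously certify the full cluster-wise structure of $CFB$ and absorb the $O(\sqrt{n \ln n})$ fluctuations of $\J_k$ around $p_k n_k$ on $\mathcal{G}$.
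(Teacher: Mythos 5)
Your concentration step and your treatment of the two ``easy'' orthants track the paper's argument: restrict to the typical event via Hoeffding (the hypothesis $\sqrt{\ln n_k/n_k}\le|p_k-1/2|$ is used exactly as you use it), note that off this event the per-realization regret is at most $n$ so the contribution is $o(1)$, and observe that when $p_1,p_2$ lie on the same side of $1/2$ the Cluster First Best count coincides pointwise with first best on the typical set, so the sandwich between the relaxation and the feasible CFB scheme closes. Up to phrasing (the paper works with sums over the typical set rather than literal conditioning, since the persuasiveness constraints couple all realizations), this is fine.

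The gap is in the mixed-sign case, which is the entire technical content of the theorem. You correctly identify that first best is too loose and that a dual certificate for \eqref{relaxed-lp} is needed, but you do not construct one: you only state that you would ``choose multipliers'' charging cross-cluster swaps by the likelihood ratio $p_1(1-p_2)/[(1-p_1)p_2]$ and then verify feasibility, and you yourself flag this as ``the hard part.'' Moreover, the proposed mechanism does not mesh with the structure of the dual. In \eqref{dual} each constraint is indexed by a single pair $(\theta,\sigma)$ and $\lambda(\theta)$ multiplies both the $y$-terms and the right-hand side, so it cancels: with $z_\theta=\lambda(\theta)\,FB_C(\theta)$, feasibility on the typical set reduces to the purely combinatorial inequality $m_{00}(\sigma(\theta))-\sum_{i\in\mathcal{I}_\sigma}\bar v(\sig{i}(\theta))\ge\frac{\I_2(\theta)-\J_2(\theta)}{2}$ for every $\sigma$ and every typical $\theta$, and no prior-likelihood imbalance enters this check. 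The paper resolves exactly this inequality by a recursive construction of binary multipliers $y_{\sigma i}$ (the index sets $\indexset{\sigma}{N}$ built along a sequence of inter-cluster swaps starting from the identity permutation) and a lengthy induction with case analysis on the types of the swapped agents and their partners (Lemma~\ref{lem:index-set}, Appendix~\ref{ssex:app-b}). That construction and its verification are what make the lower bound $OPT\ge\sum_{\theta\in\typical{2}}\lambda(\theta)FB_C(\theta)-o(1)$ go through; your proposal asserts the existence of such a certificate without supplying it, and the heuristic you offer in its place points in a direction that the dual's algebra does not support.
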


As noted above, \emph{Cluster First Best} is suboptimal since it misses occasions where it could induce agents of different types to match \emph{across} different clusters. The first important realization is that the \emph{only} realizations for which Cluster First Best is suboptimal are realizations for which there is an excess of type 1 agents in one cluster, and an excess of type 0 agents in the other cluster. For all other realizations, Cluster First Best is in fact optimal. 

%We demonstrate one suboptimal scenario. Consider the following realization $\theta$:
%$$ \left(1,1,1,0\mid 0,0,0,1\right).$$
%For this realization, Cluster First Best would incur a cost of $m_{00} = 1$, whereas it is clear that the optimal thing to do would be to match the two excess type 1 agents in the first cluster with the two excess type 0 agents in the second cluster, and achieve a cost of $m_{00} = 0$. Now, consider the `truthfulness' constraints in $(\tilde{P})$. These constraints imply that, for the principal to match type 1 agents in the first cluster with type 0 agents in the second cluster, the principal would have to find another realization $\theta^c$ such that the excess type 1 agents in $\theta$ were type 0 agents in $\theta^c$, and the excess type 0 agents in $\theta$ were type 1 agents in $\theta^c$. As an example, consider the realization $\theta =  \left(0,0,1,0\mid1,1,0,1\right),$ where the first four agents and the last four agents form the two clusters. To induce an inter-cluster match for such a suboptimal scenario, the principal must `pair' realizations for which there is an excess of the opposite type in each cluster. However, for large enough clusters and fixed values of $p_1, p_2$, with high probability there will only be one type of excess in each cluster. Consequently, even the optimal scheme cannot do much to `fix' these bad realizations.

For the realizations where there are potential gains from matching agents across clusters, we can no longer use the first-best solution as a benchmark. To show that the $FB_C$ policy is asymptotically optimal in these cases is much more challenging. We provide a brief sketch of this proof below, and defer the formal proof to Appendix~\ref{ssec:thm-3-proof}. 
\begin{proof}[Proof outline]
	The proof of the theorem considers three regions of the $(p_1, p_2)$ space:\\
	$(i)$ $p_1 > p_2 > 1/2$\quad,\quad
	$(ii)$ $p_2 < p_1 < 1/2$\quad,\quad
	$(iii)$ $p_2 < 1/2 < p_1$.
	
	{We note that assuming $p_1 > p_2$ is without loss of generality since we can relabel the clusters.}
	
	Via measure concentration arguments, we can restrict ourselves to considering realizations that lie in a `typical set' (following standard information theoretic definitions; cf.~\cite{cover2012elements}): $$\typical{2} \triangleq \left\{\theta: |\J_1(\theta)-n_1p_1|\leq \epsilon_1n_1,|\J_2(\theta)-n_2p_2|\leq \epsilon_2n_2 \right\}.$$
	Note that for sufficiently large $n$, {with high probability} the realized type-profiles in $\typical{2}$ are such that $(h_1 (\theta), h_2(\theta))$, the number of type 1 agents in each cluster, is in the same orthant as $(n_1p_1, n_2p_2)$, respectively.
	
	\begin{figure}[h]
		\centering
		\includegraphics[scale=0.6]{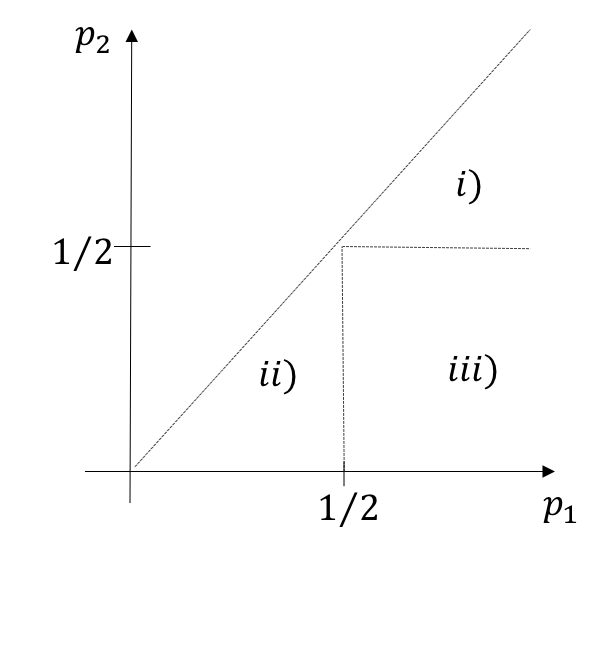}
		\caption{}
		\label{fig:proof-sketch}
	\end{figure}

	Our proof for cases $(i)$ and $(ii)$ relies on the observation that in these regions, the induced matching under \emph{Cluster First Best is exactly First Best}. In Case $(i)$, for example, all realizations in $\typical{2}$ will have an excess of type 1 agents in each cluster. Thus, if we simply match optimally \emph{within} each cluster, only type 1 agents remain to be matched, and we could not have done better. A similar argument holds for Case $(ii)$. We finish our argument by using concentration inequalities for large enough $n_1, n_2$.
	
	The main technical challenge in the proof is in dealing with Case $(iii)$. Here, we exhibit a dual certificate solution that gives a lower bound on the optimal solution for the restricted space $\typical{2}$. In particular, we construct a feasible solution for the dual LP of (\ref{relaxed-lp}) whose value is exactly that of \emph{Cluster First Best}.
	The construction relies on an intricate inductive argument, the details of which we defer to Appendix~\ref{ssec:thm-3-proof}. 
	%The intuition behind the argument that we make is that, the only way we can do better than Cluster First Best is to swap agents across clusters. For a fixed realization $\theta$, any time we swap agents, we have either created or destroyed a `0-0' match, or not changed $m_{00}$. We treat these three cases separately, and show that we can never cross the lower bound of Cluster First Best. (The problematic case here is if we destroyed a `0-0' match, which would lead us to think that we can in fact do better that Cluster First Best. However, since $\J_1(\theta), \J_2(\theta)$ (equivalently,$\I_1(\theta), \I_2(\theta)$) are fixed, destroying a ``00'' match implies that there had to have been additional ``00'' matches before the swap.)    For details on this highly technical proof, we refer the reader to the appendix. 
	Finally, via concentration arguments, we show that, in going from the typical set to the entire space of realizations, the lower bound decreases at most by $o(1)$ for large enough $n_1, n_2$.\end{proof}

Theorem~\ref{thm:2-clusters} can be extended to provide a similar result for $K > 2$ clusters.

\begin{corollary}\label{cor:k-clusters}
	For $K > 2$ clusters with $n_1,\ldots,n_K$ agents, such that $\sqrt{\frac{\ln n_1}{n_1}} \leq |p_1-1/2|,\ldots,\sqrt{\frac{\ln n_K}{n_K}} \leq |p_K-1/2|$, Cluster First Best achieves $o(1)$ regret.
\end{corollary}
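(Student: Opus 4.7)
The plan is to reduce the $K$-cluster setting to collections of 2-cluster subproblems, combining the measure-concentration and dual-certification strategy of Theorem~\ref{thm:2-clusters} with a structural partition of the cluster space. First, mirroring the typical-set construction, define
\[
\typical{K} \triangleq \left\{\theta : |\J_k(\theta) - n_k p_k| \leq \epsilon_k n_k \ \forall\; k \in [K]\right\},
\]
with $\epsilon_k = \sqrt{\ln n_k / n_k}$. A union bound over per-cluster Chernoff bounds yields $\lambda(\bTheta \setminus \typical{K}) = o(1)$, and since the utility function $u$ is bounded and there are at most $n/2$ teams, the contribution to regret from realizations outside $\typical{K}$ is $o(1)$ after the same normalization as in Theorem~\ref{thm:2-clusters}.

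Next, partition the clusters into $H = \{k : p_k > 1/2\}$ and $L = \{k : p_k < 1/2\}$. On the typical set, every $k \in H$ has $\J_k(\theta) > n_k/2$ (excess type 1 agents) and every $k \in L$ has $\J_k(\theta) < n_k/2$ (excess type 0 agents). If $H = \emptyset$ or $L = \emptyset$, then after \emph{Cluster First Best} pairs within each cluster, all residual unmatched agents share the same type, so no additional `1-0' match is possible even under first-best; hence Cluster First Best is globally optimal on $\typical{K}$, exactly paralleling cases $(i)$ and $(ii)$ in Theorem~\ref{thm:2-clusters}.

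The substantive case is when $H, L$ are both nonempty, generalizing case $(iii)$. Here I would extend the dual certificate: for each $\theta \in \typical{K}$, exhibit a feasible solution to the dual of \eqref{relaxed-lp} whose value equals that of \emph{Cluster First Best}. The natural approach is inductive on $|H| + |L|$, with the base case $|H| = |L| = 1$ supplied by Theorem~\ref{thm:2-clusters}. For the inductive step, one merges a newly added cluster with an existing super-cluster of the same side (H or L) and locally extends the dual variables; consistency across the added persuasiveness constraints is maintained because on $\typical{K}$ the constraints relating residual agents of the same realized-type class are slack by an amount $\Omega(n_k |p_k - 1/2|)$, which by the hypothesis $\sqrt{\ln n_k / n_k} \leq |p_k - 1/2|$ dominates the $O(\sqrt{n_k \ln n_k})$ fluctuations introduced by the new cluster.

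The main obstacle is exactly this inductive step: ensuring that when a new cluster is absorbed into either super-cluster, the dual values assigned to its agents can be chosen to simultaneously satisfy the persuasiveness constraints across all previously-handled clusters without increasing the dual objective beyond Cluster First Best. This requires carefully bookkeeping the slack introduced by the concentration condition $\sqrt{\ln n_k / n_k} \leq |p_k - 1/2|$ uniformly over $k$, and showing that the cumulative $o(1)$ slack losses over $K$ inductive steps remain $o(1)$ (since $K$ is fixed as $n \to \infty$). The concluding step handles realizations outside $\typical{K}$ via the same truncation argument used in Theorem~\ref{thm:2-clusters}, yielding the claimed $o(1)$ regret.
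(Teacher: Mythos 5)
Your outline reproduces the paper's architecture (typical set $\typical{K}$ via per-cluster concentration, the observation that when all clusters lie on the same side of $1/2$ Cluster First Best coincides with First Best on the typical set, and a dual certificate extended over clusters in the mixed case), but the step you yourself flag as ``the main obstacle'' is exactly the content of the corollary, and the mechanism you sketch for it does not work. The dual constraints of the relaxed LP (\ref{relaxed-lp}) are exact, per-realization and per-permutation combinatorial inequalities: with $z_\theta=\lambda(\theta)FB_C(\theta)$ and binary multipliers they read $m_{00}(\sigma(\theta))-\sum_{i\in\mathcal{I}_\sigma}\bar{v}(\sig{i}(\theta))\geq FB_C(\theta)$ for \emph{every} $\theta\in\typical{K}$ and \emph{every} $\sigma$. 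There is no slack of order $\Omega(n_k|p_k-1/2|)$ available to absorb $O(\sqrt{n_k\ln n_k})$ fluctuations: under your own choice $\epsilon_k=\sqrt{\ln n_k/n_k}\leq|p_k-1/2|$, the guaranteed imbalance in cluster $k$, e.g. $\J_k(\theta)-\I_k(\theta)\geq 2n_k\left(p_k-1/2-\epsilon_k\right)$, can be exactly zero on the boundary of the typical set, and in any case a fluctuation-domination argument cannot certify a worst-case inequality that must hold pointwise over all typical realizations and all permutations. So the inductive step, as proposed, is missing rather than merely technical.

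What the paper supplies in its place is an exact aggregation argument that requires no slack bookkeeping: on $\typical{K}$, clusters on the same side of $1/2$ have excesses of the same sign, so their per-cluster first-best values add, $\left(\frac{\I_k-\J_k}{2}\right)^{+}+\left(\frac{\I_{k'}-\J_{k'}}{2}\right)^{+}=\left(\frac{(\I_k+\I_{k'})-(\J_k+\J_{k'})}{2}\right)^{+}$, and merging two same-side clusters maps a typical $(K^*+1)$-cluster profile to a typical $K^*$-cluster profile with the same value of $FB_C$. The paper runs this as a contrapositive induction on the number of clusters: if $z_\theta=\lambda(\theta)FB_C(\theta)$ were dual infeasible over $\typical{K^*+1}$, the merged instance would witness infeasibility over $\typical{K^*}$, contradicting the base case $K=2$ established by Lemma~\ref{lem:fixed-num} (itself proved by the swap induction of Lemma~\ref{lem:index-set}); equivalently, grouping all clusters with $p_k>1/2$ and all with $p_k<1/2$ reduces the problem to the two-cluster certificate with no per-step losses to track, so there is also nothing cumulative to control over the $K$ steps. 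If you replace your slack-based inductive step with this exact reduction, the remaining pieces of your outline (the one-sided cases and the Hoeffding truncation outside $\typical{K}$) go through exactly as in Theorem~\ref{thm:2-clusters}.
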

The proof of the corollary requires an additional induction on the number of clusters. The high-level idea remains: for large clusters, the number of type 1 agents in each cluster is in the $(n_1p_1,\ldots,n_Kp_K)$ orthant with high probability. Thus, there do not exist enough realizations in other orthants that would `fix' the excesses in this high probability orthant. We defer the formal proof to Appendix~\ref{ssec:thm-3-proof}.

Theorem~\ref{thm:2-clusters} and Corollary~\ref{cor:k-clusters} tell us that, though sophisticated information disclosure policies can induce agents to pair with agents in other clusters, for a large enough number of agents, the fraction of type 1  agents in the `stronger' clusters concentrates so much so that the principal has no leverage to induce them to match with an agent in a weaker cluster. Thus, the principal does not gain a significant amount from more sophisticated signaling schemes, and \emph{Cluster First Best} is close to optimal.

\section{Self-Aware Agents}\label{ssec:self-aware}

%\inlinetodo{On the other hand, in settings with self-aware agents, we show that no {Pareto improving}  signaling scheme can do better than having no signal.\\
%}

We now turn our attention to settings where agents are aware of their own types.  We use the same solution concepts as for self-agnostic agents: Pareto improvement and stability. In this case, however, the filtration induced by the signal is no longer common to all agents, since they have the additional knowledge of their own type. Consequently, stability is a much more difficult goal to achieve in this scenario.

Following a similar argument as in Section~\ref{ssec:agnostic}, the principal's optimization problem in the setting with self-aware agents can be written as:
\begin{equation}
\begin{aligned}
& \max_{\{x_{\theta,\sigma}\}} & & \expectation{\sum_{i=1}^n u(\sig{i}(\theta),\sig{m(i)}(\theta))}  & \\
& \text{s.t. } & & \sum_{\theta:\sig{i}(\theta) = 1}\lambda(\theta)x_{\theta,\sigma}\sig{m(i)}(\theta) \geq \sum_{\theta:\sig{i}(\theta) = 1}\lambda(\theta)x_{\theta,\sigma}\sigma_{j}(\theta) &  \forall \; \sigma\in\bSigma, \forall \; i\in[n-1], \forall \; j > m(i) \\
& & & \sum_{\theta:\sig{i}(\theta) = 0}\lambda(\theta)x_{\theta,\sigma}\sig{m(i)}(\theta) \geq \sum_{\theta:\sig{i}(\theta) = 0}\lambda(\theta)x_{\theta,\sigma}\sigma_{j}(\theta) &  \forall \; \sigma\in\bSigma, \forall \; i\in[n-1], \forall \; j > m(i) \\
& & &\EEc{u(\theta_i,\theta_{m(\sigma,i)})}{\theta_i,\sigma} \geq \EEc{u(\theta_i,\theta_{m(\emptyset,i)})}{\theta_i} & \forall \; \sigma\in\bSigma,\forall \; i \in [n], \theta_i\in\{0,1\}\\
&   & & \sum_{\sigma\in\bSigma}x_{\theta,\sigma} = 1 & \forall \; \theta\in\bTheta\\
& & &x_{\theta,\sigma} \geq 0 & \forall \; \theta \in \bTheta, \forall \; \sigma\in\bSigma\label{eq:known-types}
\end{aligned}
\end{equation}

Intuitively, the principal has less leverage when agents know their own types than in the scenario with self-agnostic agent. {In the case with strictly concave utilities, for example, since a type 1 agent knows that she has a high type, at a high level she feels entitled to be matched with another type 1 agent. The difficulty for the principal arises because any recommendation made should further the goal of creating more `1-0' matches, which is an indicator to the type 1 agent that she is getting a `bad deal', in some sense.}

Additionally, from a computational point of view, the optimization problem appears more difficult to solve since the information disclosure policy now has to take into account the different priors that agents have over one another. 
%Due to the added computational barriers of this problem, our main goal will be to answer the following question:
%\emph{When agents know their own types, how well can the principal hope to do?}
This point however turns out to be moot, as we show below that this setting suffers from a strong impossibility result, for both classes of utility functions.
%We derive an impossibility result for the $K=1$ cluster case, when all agents are drawn from the same $Ber(p)$ distribution.

\begin{theorem}
\label{thm:known-types-impossibility}
When agents are self-aware, no {Pareto improving} signaling policy can perform better than random \emph{for both convex and strictly concave utilities}. Moreover
\begin{enumerate}[(i)]
\item For convex utilities, the agents blocking any other signaling scheme are type 0 agents.
\item For strictly concave utilities, the agents blocking any other signaling scheme are type 1 agents.
\end{enumerate}
\end{theorem}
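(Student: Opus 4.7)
My plan is to establish the impossibility by showing that the Pareto improvement constraints, together with simple match-count conservation identities, force the expected numbers of $1\text{-}1$, $1\text{-}0$ and $0\text{-}0$ matches under any feasible scheme to equal those under the no-signal baseline; hence the expected social welfare is also equal. I focus on the single-cluster setting $K=1$, with all types i.i.d.\ $\mathrm{Ber}(p)$, since this is the regime in which the impossibility is claimed in the introduction. Under no signal, each agent (whether type $0$ or type $1$) views every other agent as having posterior mean $p$, so any matching is stable and, by symmetry, delivers baseline per-type utilities $V_1^{0}=p\,u(1,1)+(1-p)u(1,0)$ and $V_0^{0}=p\,u(1,0)+(1-p)u(0,0)$. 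A direct count yields baseline expected match counts $M_{11}^{0}=\tfrac{np^{2}}{2}$, $M_{10}^{0}=np(1-p)$, $M_{00}^{0}=\tfrac{n(1-p)^{2}}{2}$.

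For an arbitrary signaling scheme $\phi$, let $M_{11},M_{10},M_{00}$ denote the expected numbers of the three match types it induces. Counting match endpoints by agent type yields the conservation identities
\begin{equation*}
2M_{11}+M_{10}=np,\qquad 2M_{00}+M_{10}=n(1-p),
\end{equation*}
so the expected welfare $2M_{11}u(1,1)+2M_{10}u(1,0)+2M_{00}u(0,0)$ is determined by this triple. I then take each per-agent Pareto improvement constraint from (\ref{eq:known-types}), marginalize over the signal $\sigma$ (weakening $\ge$ pointwise to $\ge$ in expectation), and sum over all agents of a given type weighted by $P(\theta_i=\text{type})$. This yields the two aggregate inequalities
\begin{equation*}
2M_{11}u(1,1)+M_{10}u(1,0)\ge npV_1^{0},\qquad M_{10}u(1,0)+2M_{00}u(0,0)\ge n(1-p)V_0^{0}.
\end{equation*}
Eliminating $M_{10}$ via the conservation identities and simplifying (using $u(1,1)>u(1,0)$ in the first, and $u(0,0)<u(1,0)$—which flips the sign—in the second) reduces these exactly to $M_{11}\ge M_{11}^{0}$ and $M_{00}\le M_{00}^{0}$.

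These two constraints are in tension through conservation: $M_{11}\ge M_{11}^{0}$ forces $M_{10}\le M_{10}^{0}$, while $M_{00}\le M_{00}^{0}$ forces $M_{10}\ge M_{10}^{0}$. Hence $M_{10}=M_{10}^{0}$, so $M_{11}=M_{11}^{0}$ and $M_{00}=M_{00}^{0}$ as well. Every Pareto improving scheme thus replicates the baseline match-type composition and therefore its welfare, proving the main claim. For the blocking-set refinements, Lemma~\ref{lem:convexity} says that under convex $u$ the principal strictly prefers larger $M_{11}$ (equivalently larger $M_{00}$), whose upward deviation from baseline is blocked precisely by the type-$0$ aggregate constraint $M_{00}\le M_{00}^{0}$, giving (i); under strictly concave $u$ the principal prefers larger $M_{10}$ (equivalently smaller $M_{11}$), whose improvement is blocked by the type-$1$ aggregate constraint $M_{11}\ge M_{11}^{0}$, giving (ii). The main obstacle is the aggregation step: one must justify that summing the individual per-agent, per-signal Pareto improvement constraints in (\ref{eq:known-types}) yields the clean per-type inequalities above, which works because we only need necessary conditions for impossibility, and both marginalizing over signals and summing over agents of a given type preserve the direction of the inequality.
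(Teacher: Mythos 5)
Your argument is correct, and it reaches the conclusion by a genuinely different route than the paper, though the engine is the same: the per-agent, per-own-type Pareto constraints in \eqref{eq:known-types} pin down the conditional probability of a type-1 (resp.\ type-0) agent being matched to a type-1 (resp.\ type-0) partner, exactly the paper's inequalities \eqref{eq:aware1} and \eqref{eq:aware2}. The paper then argues one-sidedly: for strictly concave $u$ it uses only the type-1 constraints to lower-bound $\EE[m_{11}]$ by $\sum_k n_k p_k^2/2$, and for convex $u$ only the type-0 constraints to upper-bound $\EE[m_{00}]$ by $\sum_k n_k(1-p_k)^2/2$, each bound being attained by within-cluster random matching; the blocking sets in (i)--(ii) then fall out of which constraint was used. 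You instead impose both families of constraints simultaneously and combine them with the endpoint-conservation identities $2M_{11}+M_{10}=np$, $2M_{00}+M_{10}=n(1-p)$ to force $M_{11}=M_{11}^0$, $M_{10}=M_{10}^0$, $M_{00}=M_{00}^0$, which (via the welfare identity in the proof of Lemma~\ref{lem:convexity}) yields the slightly stronger conclusion that every Pareto-improving scheme replicates the baseline expected match composition, with the blocking-set claims recovered by noting which per-type family a strict improvement would violate. The one caveat is scope: you restrict to $K=1$, whereas the theorem (and the paper's proof) covers arbitrary clusters with the baseline being within-cluster random matching. Your argument does extend essentially verbatim---the conservation identities are global ($2M_{11}+M_{10}=\sum_k n_k p_k$, etc.), the baseline satisfies them with $M_{11}^0=\sum_k n_k p_k^2/2$ and $M_{00}^0=\sum_k n_k(1-p_k)^2/2$, and the aggregated per-type inequalities become $\sum_i p_{k(i)}q_i\ge\sum_i p_{k(i)}^2$ and $\sum_i(1-p_{k(i)})r_i\le\sum_i(1-p_{k(i)})^2$---but this extension should be stated rather than left implicit, since the introduction's ``even for $K=1$'' phrasing highlights the hardest-looking special case, not the full claim being proved.
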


\begin{proof}
We first show the claim in the case of strictly concave utilities. 
	
Consider cluster $k$, in which each agent has i.i.d. type drawn from a $Ber(p_k)$ distribution. Recall that $n_k$ denotes the number of agents in cluster $k$.

Now consider any signaling policy $\mathcal{S}$. The Pareto improvement constraint imposes that the expected utility of each agent $i$ under $\mathcal{S}$ weakly improve upon her expected utility when matched with a random agent in her cluster.
	
	Suppose agent $i$ is a type 1 agent, and let $k(i)$ denote her cluster. Under her myopic strategy, agent $i$'s expected utility is:
	\begin{align*}
	\mathcal{U}_i(\emptyset)&= p_{k(i)}u(1,1)+(1-p_{k(i)})u(1,0)= u(1,0)+p_{k(i)}\left(u(1,1)-u(1,0)\right).
	\end{align*}
	
	Let $p_{\mathcal{S}}^{i}$ denote the probability that agent $i$ is matched with another type 1 agent under signaling policy $\mathcal{S}$, given that she is a type 1 agent. Similarly, under $\mathcal{S}$, agent $i$'s expected utility is:
	\begin{align*}
	\mathcal{U}_i(\mathcal{S}) = u(1,0)+p_{\mathcal{S}}^{i}\left(u(1,1)-u(1,0)\right).
	\end{align*}
	
	To satisfy the Pareto improvement constraint, $\mathcal{S}$ must satisfy:
	\begin{align}\label{eq:aware1}
	&u(1,0)+p_{\mathcal{S}}^{i}\left(u(1,1)-u(1,0)\right) \geq u(1,0)+p_{k(i)}\left(u(1,1)-u(1,0)\right) \notag \\
	\iff &p_{\mathcal{S}}^{i} \geq p_{k(i)}
	\end{align}
	
	By Lemma~\ref{lem:convexity}, for strictly concave $u$, maximizing social welfare is equivalent to minimizing the expected number of `1-1' matches.
	%$\sum_{\theta\in\bTheta}\sum_{\sigma\in\bSigma}\lambda(\theta)x_{\theta,\sigma}m_{11}(\sigma(\theta))$.
	We expand $m_{11}\left(\sigma(\theta)\right)$ as:
	\begin{align*}
	m_{11}(\sigma(\theta)) = \frac12\sum_{i=1}^n\mathds{1}\{\sig{i}(\theta)=1,\sig{m(i)}(\theta)=1\}.	\end{align*} 
	Plugging this into our objective, we obtain:
	\begin{align}
	\sum_{\theta\in\bTheta}\sum_{\sigma\in\bSigma}\lambda(\theta)x_{\theta,\sigma}\sum_{i=1}^n\frac12\mathds{1}\{\sig{i}(\theta)=1,\sig{m(i)}(\theta)=1\}&= \frac12\sum_{i=1}^n p_{k(i)} \left(\sum_{\theta:\theta_i = 1}\sum_{\sigma}\lambda(\theta)x_{\theta,\sigma}\mathds{1}\left\{\sig{m(i)}(\theta)=1\right\}\right)\nonumber\\
	&= \frac12  \left(\sum_{i=1}^n p_{k(i)}  p_{\mathcal{S}}^{i}\right) \geq \frac12  \left( \sum_{i=1}^n p_{k(i)}^2 \right)= \sum_k \frac{n_kp_{k}^2}{2} \nonumber
	\end{align}
	where the inequality follows from~\eqref{eq:aware1}. This lower bound is achieved by having the agents match randomly within each cluster, their myopic strategy. 
%	Using the notation from Problem~\ref{eq:known-types}, we have the following expression for $p_{\mathcal{M}}$:
%	\begin{align*}
%	p_{\mathcal{M}} &= \sum_{\theta:\theta_i = 1}\sum_{\sigma}\lambda(\theta)x_{\theta,\sigma}\sig{m(i)}(\theta).
%	\end{align*}
%	
%	Additionally, we can re-write our objective as:
%	\begin{align*}
%	\sum_{\theta\in\bTheta}\sum_{\sigma}^{max}
%	\end{align*}

We now show the claim in the case of convex utilities.  	
	Suppose agent $i$ is a type 0 agent. Under her myopic strategy, agent $i$'s expected utility is:
	\begin{align*}
	\mathcal{U}_i(\emptyset)  &= p_{k(i)}u(1,0)+(1-p_{k(i)})u(0,0).
	\end{align*}
	
	Let $p_{\mathcal{S'}}^{i}$ denote the probability that agent $i$ is matched with another type 0 agent under signaling policy $\mathcal{S'}$, given that she is a type 0 agent. Under $\mathcal{S'}$, agent $i$'s expected utility is:
	\begin{align*}
	\mathcal{U}_i(\mathcal{S}) &= u(1,0)+p_{\mathcal{S'}}^{i}\left(u(0,0)-u(1,0)\right).
	\end{align*}
	
	To satisfy the agent rationality constraint, $\mathcal{S}$ must satisfy:
	\begin{align}\label{eq:aware2}
	&u(1,0)+p_{\mathcal{S'}}^{i}\left(u(0,0)-u(1,0)\right) \geq p_{k(i)}u(1,0)+(1-p_{k(i)})u(0,0) \notag \\
	\iff &p_{\mathcal{S'}}^{i} \leq 1-p_{k(i)}
	\end{align}
	
	By Lemma~\ref{lem:convexity}, for convex $u$, maximizing social welfare is equivalent to maximizing the expected number of `0-0' matches.
	%$\sum_{\theta\in\bTheta}\sum_{\sigma\in\bSigma}\lambda(\theta)x_{\theta,\sigma}m_{11}(\sigma(\theta))$.
	We expand $m_{00}\left(\sigma(\theta)\right)$ as:
	\begin{align*}
	m_{00}(\sigma(\theta)) = \frac12\sum_{i=1}^n\mathds{1}\{\sig{i}(\theta)=0,\sig{m(i)}(\theta)=0\}.	
	\end{align*} 
	Plugging this into our objective of interest, we obtain:
	\begin{align}
	\sum_{\theta\in\bTheta}\sum_{\sigma\in\bSigma}\lambda(\theta)x_{\theta,\sigma}\sum_{i=1}^n\frac12\mathds{1}\{\sig{i}(\theta)=0,\sig{m(i)}(\theta)=0\}&= \frac12\sum_{i=1}^n(1-p_{k(i)}) \left(\sum_{\theta:\theta_i = 0}\sum_{\sigma}\lambda(\theta)x_{\theta,\sigma}\mathds{1}\left\{\sig{m(i)}(\theta) = 0\right\}\right)\nonumber\\
	&= \frac12 \left(\sum_{i=1}^n (1-p_{k(i)})  p_{\mathcal{S'}}^{i}\right) \leq \frac12  \left( \sum_{i=1}^n (1-p_{k(i)})^2 \right) \nonumber \\
	&= \sum_k\frac{n_k(1-p_{k})^2}{2}\nonumber
	\end{align}
	where the inequality follows from~\eqref{eq:aware2}. This upper bound is achieved by having the agents match randomly within each cluster, their myopic strategy. 
\end{proof}

\section{Teams of arbitrary size}\label{ssec:larger-size}

In this section, we show how our results extend to teams of constant size $a > 2$, where we assume that the size of each cluster is a multiple of $a$. The setup is mostly analogous to that of two-member teams. As such, we only highlight fundamental modeling differences.

The first such fundamental difference is the solution concept, which, in the setting with larger teams, is that of \emph{group stability}, as defined in~\cite{bogomolnaia_stability_2002}. We abuse notation and let $\mathcal{U}_i(t)$ denote the expected utility of agent $i$ obtains from being in team $t$, conditional on all available information (priors and signals).
\begin{definition}
A set of teams is \emph{group stable} if there exists no two teams $t, t'$, agents $i \in t, j \in t'$ such that the following holds:
\begin{align*}
\begin{cases}
&\mathcal{U}_j\left(t\cup \{j\} \setminus \{i\}\right) > \mathcal{U}_j(t')\\
&\mathcal{U}_k\left(t\cup \{j\} \setminus \{i\}\right) > \mathcal{U}_k(t) \quad \forall k \in t, k \neq i
\end{cases}
\end{align*}
\end{definition}

We also need an appropriate generalization of convexity. 
	\begin{definition}
	Let $u(j)$ denote the utility gained from having $j$ type 1 agents in a team. 
		\begin{enumerate}[(i)]
			\item $u$ is \emph{convex} if $u(j+1) - u(j) \geq u(j) - u(j-1)$.
			\item $u$ is \emph{strictly concave} if $u(j+1) - u(j) < u(j) - u(j-1)$.
		\end{enumerate}
	\end{definition}
	
	Just as Lemma~\ref{lem:convexity} relates the social welfare maximization problem with the problem of maximizing (resp. minimizing) the number of matches of a certain type, the following lemma gives an equivalent formulation of our problem for arbitrary team size.  
		
	\begin{lemma}
		Consider the following objective:
		\begin{align*}
		OBJ \triangleq \sum_{\theta}\sum_{\sigma}\lambda(\theta)x_{\sigma \theta} \Big[ &\alpha m_{a/2}(\sigma(\theta)) + \beta\left(m_{a/2 - 1}(\sigma(\theta)) + m_{a/2 + 1}(\sigma(\theta)) \right) + \ldots\\
		&\ldots + \gamma\left(m_a(\sigma(\theta)) + m_0(\sigma(\theta))\right)\Big],
		\end{align*}
		where $m_j(\sigma(\theta))$ denotes the number of teams with $j$ type 1 agents when ordering $\sigma$ is announced for realization $\theta$,  and $(\alpha,\beta,\ldots,\gamma)$ are in the simplex.
		\begin{enumerate}[(i)]
			\item Suppose $u$ is convex. Then, maximizing the expected social welfare is equivalent to maximizing $OBJ$, with $\alpha < \beta < \ldots < \gamma$.
			\item Suppose $u$ is strictly concave. Then, maximizing the expected social welfare is equivalent to maximizing $OBJ$, with $\alpha > \beta > \ldots > \gamma$ or the number of `1-1' matches).
		\end{enumerate}
	\end{lemma}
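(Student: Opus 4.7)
The plan is to mirror the proof of Lemma~\ref{lem:convexity} (the two-team analogue), adapting it to arbitrary team size $a$ by exploiting the two linear constraints on the team-composition vector that hold per realization, and then matching the resulting objective with $OBJ$ of the stated coefficient structure.

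First, write $SW(\phi)=a\sum_{\theta,\sigma}\lambda(\theta)\,x_{\theta,\sigma}\sum_{j=0}^{a}m_j(\sigma(\theta))\,u(j)$, and observe that for each realization $\theta$, the vector $(m_0,\ldots,m_a)$ satisfies $\sum_j m_j=n/a$ and $\sum_j j\,m_j=h(\theta)$. Consequently, replacing $u(j)$ by $\tilde u(j):=u(j)-A-Bj$ for any constants $A,B$ changes $SW(\phi)$ only by a quantity that depends on $\theta$ but not on the signaling scheme, and hence preserves the argmax. Choosing $A,B$ so that $\tilde u(0)=\tilde u(a)=0$ (subtracting the chord of $u$ on $\{0,a\}$), convexity of $u$ forces $\tilde u\le 0$ on $[0,a]$, making $-\tilde u$ a non-negative concave function that vanishes at the extremes and peaks in the interior. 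Maximizing $SW$ is thus equivalent to minimizing $\sum_j m_j(-\tilde u(j))$, which (since $\sum_j m_j$ is constant per $\theta$) favors compositions with mass concentrated on $\{0,a\}$. I would then identify this preference with maximizing $OBJ$ for suitable coefficients: setting $c_{|j-a/2|}\propto C-|\tilde u(j)|$ for $C$ large enough to be non-negative and normalized to the simplex, the concavity of $-\tilde u$ yields a strict ordering $\alpha<\beta<\ldots<\gamma$, and $OBJ$-maximization reduces to minimizing $\sum_j m_j\,|\tilde u(j)|$, i.e., to maximizing $SW$. The strictly concave case is entirely symmetric: $\tilde u\ge 0$ becomes convex, $|\tilde u|$ peaks at the extremes, and the procedure yields coefficients satisfying $\alpha>\beta>\ldots>\gamma$, which pushes mass toward $a/2$.

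The main technical obstacle is that for generic convex $u$, $|\tilde u(j)|$ need not satisfy $|\tilde u(j)|=|\tilde u(a-j)|$, whereas $OBJ$ assigns the same coefficient to $m_j$ and $m_{a-j}$. I plan to address this by using both affine-shift degrees of freedom jointly --- the first to normalize the endpoints, the second to balance the reflection $j\mapsto a-j$ as much as possible --- and by absorbing any residual asymmetry into the additive constant via the constraint $\sum_j j\,m_j=h(\theta)$. When $u$ is itself symmetric around $a/2$, this symmetrization is automatic and the argument closes cleanly; for general $u$, the lemma is best read as asserting the existence of coefficients in the stated ordering realizing the same optimal signaling scheme as $SW$, rather than a pointwise identity of coefficients.
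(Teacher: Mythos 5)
Your reduction is in fact the route the paper intends: the paper gives no proof of this lemma (it is ``left to the reader'' as analogous to Lemma~\ref{lem:convexity}), and your use of the two per-realization identities $\sum_j m_j = n/a$ and $\sum_j j\,m_j = h(\theta)$ to quotient out affine functions of $j$ is exactly the $a>2$ analogue of the substitutions $m_{10}=h(\theta)-2m_{11}$ and $m_{00}=\frac{\ell(\theta)-h(\theta)}{2}+m_{11}$ used for teams of two. Up to that point the argument is sound: subtracting the chord gives $\tilde u\le 0$ in the convex case, with $-\tilde u$ concave, nonnegative and vanishing at $0$ and $a$, and symmetrically in the concave case.

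The gap is in the final identification with $OBJ$, and the repair you sketch does not close it. The constraints can only absorb functions of $j$ that are \emph{affine}, so of the antisymmetric part $d(j)=\tfrac{1}{2}\left(\tilde u(j)-\tilde u(a-j)\right)$ only its component along $j-a/2$ disappears; for $a\ge 4$ a generic convex $u$ leaves a non-affine antisymmetric residual, and then no coefficient vector of the symmetric form appearing in $OBJ$ (same weight on $m_j$ and $m_{a-j}$) can reproduce the welfare ordering. Concretely, take $a=4$, $u(0),\dots,u(4)=0,1,2,3,10$ (convex), $n=12$, and a realization with $h(\theta)=6$: the team profiles $(4,1,1)$ and $(3,3,0)$ have identical $OBJ$ value $\gamma+2\beta$ for \emph{every} choice of $(\alpha,\beta,\gamma)$, yet welfares $12$ and $6$. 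So maximizing $OBJ$ is not equivalent to maximizing welfare for any admissible coefficients --- not even in the weaker ``same optimal scheme'' reading you retreat to at the end. You were right to flag the asymmetry, but ``absorbing the residual via $\sum_j j\,m_j=h(\theta)$'' only handles the affine component. The clean fix is either to add a symmetry hypothesis (that $u$ minus its chord is symmetric about $a/2$), or to restate $OBJ$ with a general strictly convex (resp.\ strictly concave) coefficient sequence $c_0,\dots,c_a$, in which case your argument goes through immediately by taking $c_j=\tilde u(j)$. (A minor additional edge case: for affine $u$, which the paper's weak notion of convexity permits, the strict inequalities $\alpha<\beta<\dots<\gamma$ cannot be realized by any equivalent objective.)
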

	
	We leave the proof of the lemma to the reader, as it is analogous to that of Lemma~\ref{lem:convexity}. 

\subsection{Self-agnostic agents}

{\subsubsection{Convex utilities.}

\begin{theorem}\label{thm:convex-truth-teams}
	For convex utility functions and team size $a > 2$, the optimal signaling policy is to always announce the true ordering.
\end{theorem}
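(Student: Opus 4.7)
The plan is to mirror the argument used in the proof of Theorem~\ref{thm:convex-truth}, adapting it to teams of size $a$. First, the generalized convexity lemma above reduces the objective under convex $u$ to maximizing the count of \emph{extreme} teams (those with $0$ or $a$ type-$1$ agents). For any realization $\theta$, any truthful ordering achieves the per-realization maximum: it produces $\lfloor \J(\theta)/a\rfloor$ all-type-$1$ teams, $\lfloor \I(\theta)/a\rfloor$ all-type-$0$ teams, and (when $\J(\theta)\not\equiv 0 \pmod a$) a single residual mixed team containing $r(\theta):=\J(\theta)\bmod a$ type-$1$ agents. Hence it suffices to exhibit a specific truthful signaling scheme that is Pareto improving.

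To construct such a scheme $\mathcal{S}^{\mathrm{truth}}$, I would generalize the tie-breaking rule from the proof of Theorem~\ref{thm:convex-truth}. Starting from the baseline within-cluster myopic matching, preserve every homogeneous team; then pool the remaining agents (those currently in mixed baseline teams) and rebuild maximally extreme teams from this pool, leaving at most one global residual mixed team. A short counting argument shows that the pool's type-$1$ count is $\equiv r(\theta)\pmod a$, so this reshuffling is well-defined and yields exactly one residual of size $a$. The residual's $r(\theta)$ type-$1$ slots and $a-r(\theta)$ type-$0$ slots are filled by uniformly random draws from the respective pools. By exchangeability, any two agents of the same type in the same cluster have identical marginal distribution over team composition under $\mathcal{S}^{\mathrm{truth}}$.

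The Pareto improvement step mirrors the corresponding computation in the proof of Theorem~\ref{thm:convex-truth}. Fixing an agent $i$ in cluster $k$ and conditioning on $\theta_i$, I would decompose $\mathcal{U}_i(\mathcal{S}^{\mathrm{truth}}) - \mathcal{U}_i(\emptyset)$ according to the composition $(j,a-j)$ of agent $i$'s baseline team. Homogeneous baseline teams are preserved and contribute $0$ to the difference. For each mixed baseline team with $j$ type-$1$s, the scheme either upgrades its members to extreme teams or sends some of them to the residual; summing over the agents of that team and using the symmetry of the residual draw, the per-team contribution collapses into a nonnegative combination of second differences of the form $[u(j+1)-u(j)] - [u(j)-u(j-1)]$, each nonnegative by the generalized convexity of $u$. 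This captures the intuition that pushing a mixed team's composition toward the extremes is favored by convex utilities.

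The main obstacle is the bookkeeping in this per-team telescoping for general $a$. The $a=2$ argument exploited the fact that every mixed match contains exactly one type-$1$ and one type-$0$ agent, enabling a clean one-line cancellation between the type-$1$'s gain $u(1,1)-u(1,0)$ and the type-$0$'s loss $u(1,0)-u(0,0)$; for $a>2$, an analogous cancellation must be set up between the gains accrued by agents promoted out of a mixed baseline team and the losses suffered by their former teammates placed in the residual. I expect this aggregation to go through because the uniform-at-random selection of residual members equalizes, in expectation, the gains and losses across agents of the same type in the same cluster, so that the net contribution of each mixed baseline team reduces to a convex combination of second differences of $u$. Once the per-$(k,\theta_i)$ inequality holds, summing over agents yields the Pareto improvement claim and, together with the per-realization optimality observed above, establishes the theorem.
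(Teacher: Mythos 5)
Your setup (per-realization optimality of any truthful ordering for convex $u$, plus the need to exhibit one Pareto-improving tie-breaking rule) matches the paper's, but the step that actually carries the theorem --- Pareto improvement --- is left as a conjecture in your write-up, and the localization you propose is doubtful. You claim that, after conditioning on an agent's baseline team composition $(j,a-j)$, the net contribution of each mixed baseline team ``collapses into a nonnegative combination of second differences'' $[u(j+1)-u(j)]-[u(j)-u(j-1)]$. For $a=2$ this worked because a broken mixed match moves each member by exactly one unit of composition, so gains and losses pair off as a single second difference. For $a>2$ it does not localize: an agent promoted out of a mixed team with $j$ ones jumps to composition $a$ (or $0$), a gain of $u(a)-u(j)$, while her former teammates sent to the residual land at composition $r(\theta)=\J(\theta)\bmod a$, which is a \emph{global} quantity depending on the whole realization (and, in the multi-cluster case, on all clusters' realizations), not on the team's local composition $j$. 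So the gains and losses within one baseline team do not cancel against each other team by team, and you have not shown --- only ``expect'' --- that the uniform residual draw restores the inequality. Since this is exactly the crux of the theorem for $a>2$, the proposal as written has a genuine gap.

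The paper avoids this bookkeeping entirely with an aggregate argument: it takes the full-information scheme with uniform tie-breaking \emph{within clusters}, and proves Pareto improvement via Lemma~\ref{lem:welf-opt} --- any scheme that is anonymous within clusters and whose welfare (statewise) is at least the within-cluster first-best welfare is Pareto improving, because anonymity forces every agent in a cluster to receive $1/n_k$ of that cluster's expected welfare, so per-agent improvement reduces to cluster-level welfare dominance. If you want to salvage your route, the cleanest fix is the same move: your scheme is (or can be made) anonymous within clusters, so you only need to compare each cluster's aggregate welfare under your scheme to its aggregate baseline welfare, where convexity enters through a single Jensen/chord-type inequality rather than per-team second differences. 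As it stands, though, the per-team telescoping is the missing idea, not a deferred calculation.
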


\begin{proof}
	As in the setting with teams of size 2, announcing the true ordering is trivially persuasive. Thus, it remains for us to show that there exists a Pareto-improving full-information scheme, and this policy would hence be optimal.
	
	Consider the full-information scheme which breaks ties uniformly at random \emph{within clusters}. That is, the scheme treats agents within each cluster identically, but is allowed to differentiate between clusters. Pareto improvement of this scheme is a corollary to the following lemma.
	
	\begin{lemma}\label{lem:welf-opt}
		In the self-agnostic setting, given $K$ clusters, any utility function $u$ and team size $a > 2$, any signaling scheme $\mathcal{S}$ with the following properties 
		\begin{enumerate}[(i)]
			\item anonymous within clusters 
			\item has welfare at least as high as the first-best welfare \emph{in each cluster }
		\end{enumerate}
		is Pareto-improving.
	\end{lemma}
	
	Before we prove the lemma, we note that Pareto improvement of the scheme follows immediately given that it is anonymous within clusters by construction, and statewise dominates the welfare of the first-best scheme within each cluster (i.e., having agents team up according to the true ordering within each cluster but not across clusters). \\
			
	\begin{proofof}{Lemma~\ref{lem:welf-opt}}
			{We prove this by contradiction. Suppose $\mathcal{S}$ satisfies properties $(i)$ and $(ii)$, but is not Pareto-improving. Then, there exists an agent $i^\star$ for which $\mathcal{U}_{i^\star}(\mathcal{S}) < \mathcal{U}_{i^\star}(\emptyset)$. Let $k$ denote the cluster of this agent, $n_{k}$ the number of agents in $k$, and $\mathcal{S}_k^\star$ the signaling scheme that implements the first-best welfare in cluster $k$.} 
			
			{Since $\mathcal{S}$ is anonymous, for all $j \in k$, $\mathcal{U}_j(\mathcal{S}) = \mathcal{U}_{i^\star}(\mathcal{S})$. Thus, the welfare of cluster $k$ is 
				\begin{align*}
				\sum_{i=1}^{n_{k}} \mathcal{U}_i(\mathcal{S}) &= n_k \, \mathcal{U}_{i^\star}(\mathcal{S}) < n_k \, \mathcal{U}_{i^\star}(\emptyset) = \sum_{i=1}^{n_k} \mathcal{U}_i(\emptyset) \leq \sum_{i=1}^{n_k} \mathcal{U}_i(\mathcal{S}_k^{\star})
				\end{align*}
				where the final inequality follows from the fact the $\mathcal{S}_k^\star$ implements the first-best welfare, so consequently must have welfare at least as high as any other within-cluster signaling scheme, including the no-information scheme.}
			
			{The chain of inequalities implies that $\sum_{i=1}^{n_k} \mathcal{U}_i(\mathcal{S}) < \sum_{i=1}^{n_k} \mathcal{U}_i(\mathcal{S}_k^{\star})$, a contradiction, since we assumed that $\mathcal{S}$ improves upon first-best in each cluster.}
	\end{proofof}
	
	\end{proof}}

\subsubsection{Strictly concave utilities.}

{Recall the \emph{Cluster First Best} scheme, defined in Section~\ref{ssec:self-ag-concave}, which maximally pairs type 1 and type 0 agents for teams of size 2. Consider an analogous construction for teams of size $a > 2$. We also denote this scheme \emph{Cluster First Best}.}

{\begin{theorem}\label{thm:self-agnostic-biggerteams}
	Given $K \geq 1$ clusters with $n_1,\ldots,n_K$ agents such that $\sqrt{\frac{\ln n_1}{n_1}} \leq |p_1-1/2|,\ldots,\sqrt{\frac{\ln n_K}{n_K}} \leq |p_K-1/2|$, concave utilities and team size $a > 2$, Cluster First Best achieves $o(1)$ regret.
\end{theorem}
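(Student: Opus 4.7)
The plan is to extend the two-stage strategy behind Theorem~\ref{thm:2-clusters} and Corollary~\ref{cor:k-clusters}. The first step is to generalize the single-cluster First Best result (Theorem~\ref{thm:fb-symm}) to teams of size $a > 2$. Concretely, for strictly concave $u$, the generalized convexity lemma tells us the welfare-optimal within-cluster assignment distributes the type-$1$ agents as uniformly as possible across the $n_k/a$ teams of cluster $k$. I would extend the symmetry-based type-profile pairing from Theorem~\ref{thm:fb-symm}: for any announced ordering $\sigma$ that implements this welfare-optimal composition on $\theta$, pair $\theta$ with each profile $\bar\theta$ obtained by permuting types among agents assigned to the same announced team-slot. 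Since agents within a cluster are i.i.d., these paired profiles have equal prior probability, and the common posterior expected type is constant within each team-slot, making $\sigma$ persuasive. Pareto improvement then follows exactly as in Theorem~\ref{thm:fb-symm}, because strict concavity makes the aversion to being in an all-$0$ (or all-$1$) team dominate the loss relative to the myopic matching.

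The second step is to control cross-cluster effects via concentration. Define the typical set
\[
\typical{K} = \left\{\theta : |\J_k(\theta) - n_k p_k| \leq \varepsilon_k n_k\ \forall k\right\}
\]
with $\varepsilon_k = \Theta\!\left(\sqrt{\ln n_k / n_k}\right)$. Hoeffding's inequality gives $\PP[\theta \notin \typical{K}] = o(1)$ under the hypothesis on the $n_k$, and this same hypothesis guarantees that for every $\theta \in \typical{K}$, $\J_k(\theta)$ lies strictly on the same side of $n_k/2$ as $n_k p_k$. Hence each cluster is unambiguously type-$1$-dominant or type-$0$-dominant on the typical set. On realizations where all clusters lie on the same side, the within-cluster optimum produced by Cluster First Best coincides with first-best, since no cross-cluster opportunities remain to exploit. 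Standard concentration then bounds the contribution of atypical realizations by $o(1)$, using that welfare is $O(n)$ per realization.

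The main obstacle, exactly as in the two-cluster case, is bounding the regret of Cluster First Best on mixed orthants of $\typical{K}$ — realizations where some clusters are type-$1$-dominant and others type-$0$-dominant. I would extend the dual-certificate argument from Case~$(iii)$ of Theorem~\ref{thm:2-clusters}: construct a feasible dual solution for (the appropriate generalization of) the relaxed LP~(\ref{relaxed-lp}) whose objective value matches the welfare of Cluster First Best on $\typical{K}$. The complication for $a>2$ is combinatorial, because each team can contain $0,1,\ldots,a$ type-$1$ agents rather than only two match types, so the dual multipliers associated with the persuasiveness constraints must be allocated across a richer family of team compositions. The concentration structure of $\typical{K}$ nonetheless forces any cross-cluster-improving scheme to relocate a non-negligible mass of agents out of their ``home'' dominance orthant, which would violate the persuasiveness constraint of the donating cluster. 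I would formalize this by induction on $K$ (paralleling Corollary~\ref{cor:k-clusters}): peel off the most concentrated cluster, certify via the dual that its agents cannot be drawn into cross-cluster teams at better-than-$o(1)$ rates, and recurse on the remaining $K-1$ clusters. Combining the typical-set analysis, the dual certificate on mixed orthants, and the $o(1)$ atypical tail yields the stated regret bound. The hardest ingredient will be specifying and verifying the dual multipliers in the size-$a$ setting, since their combinatorial structure does not reduce to the binary-match accounting used for $a=2$.
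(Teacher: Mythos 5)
Your overall architecture (Pareto improvement, persuasiveness carried over from $a=2$, dual certificate on a typical set plus induction on $K$, and an $o(1)$ atypical tail) matches the paper's sketch, but one of your key reductions fails for $a>2$. You dispose of the ``same-side'' orthants by claiming that there Cluster First Best coincides with first-best because ``no cross-cluster opportunities remain.'' That identity is what made Cases $(i)$--$(ii)$ of Theorem~\ref{thm:2-clusters} easy when $a=2$ (the objective only counts `0-0' matches, so two $1$-dominant clusters each achieve zero of them), but it is false for larger teams under strict concavity: the welfare now depends on the full composition profile, and equalizing compositions \emph{across} clusters strictly beats within-cluster spreading whenever the clusters have different $p_k$ on the same side of $1/2$. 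For instance, with $a=3$, $p_1=0.8$, $p_2=0.6$ and $n_1=n_2$, Cluster First Best produces a constant fraction of teams with $3$ ones in cluster~1 and with $1$ one in cluster~2, whereas pooling yields almost all teams with $2$ ones; the unconstrained gap is of order $n\left(2u(2)-u(3)-u(1)\right)=\Theta(n)$. So on same-side orthants you cannot compare against first-best at all -- you must argue, exactly as in the mixed orthant, that the \emph{persuasion- and Pareto-constrained} optimum cannot realize these cross-cluster equalization gains. This is why the paper's proof does not split orthants the way the $a=2$ proof does: its sketch runs the dual-feasibility / inter-cluster-swap induction over the typical set uniformly, with the generalized objective of Lemma~4 (the multi-composition analogue of Lemma~\ref{lem:convexity}), and only then applies concentration. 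Your plan as written leaves the dominant source of potential regret uncertified.

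A secondary, more repairable issue is the Pareto-improvement step. You assert it goes through ``exactly as in Theorem~\ref{thm:fb-symm},'' but that proof rests on the $a=2$-specific computation comparing $\frac{1}{n}\EE\left[\min\{\I(\theta),\J(\theta)\}\right]$ with $p(1-p)$ and the coefficient $2u(1,0)-u(1,1)-u(0,0)$, none of which transfers verbatim to general $a$. The paper avoids redoing any such calculation via Lemma~\ref{lem:welf-opt}: any scheme that is anonymous within clusters and achieves at least the first-best welfare in each cluster is automatically Pareto improving, and (the symmetrized) Cluster First Best satisfies both properties by construction. Adopting that lemma, and extending your dual certificate to cover all orthants of the typical set, would bring your proposal in line with the intended argument.
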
}

{We only provide a proof sketch of the theorem, given how naturally the proofs of Theorem~\ref{thm:2-clusters} and Corollary~\ref{cor:k-clusters} extend.

\begin{proof}[Proof sketch]
	We first show that \emph{Cluster First Best} is Pareto-improving for teams of size $a > 2$. This fact follows immediately from Lemma~\ref{lem:welf-opt}.
				
		{To argue that \emph{Cluster First Best} is persuasive for teams of arbitrary size, we note that the proof of persuasiveness for teams of two did not crucially rely on the size of the teams. Persuasiveness, then, is immediate from the $a = 2$ proof.}
		
		It remains for us to show that \emph{Cluster First Best} is asymptotically optimal for $a > 2$. 
		An analogous argument applies if we choose to maximize the above function: namely, taking the dual and restricting to a typical set, proving via induction that we cannot improve on \emph{Cluster First Best} via inter-cluster swaps, and then showing via concentration arguments that we lose at most $o(1)$ when expanding to the space of all realizations. We leave the extension of this proof as an exercise to the reader.

		The above proof sketch is for $K = 2$ clusters and teams of arbitrary size. For $K > 2$ clusters and teams of arbitrary size, Corollary~\ref{cor:k-clusters} extends fairly naturally as well. In particular, the proof proceeds by induction on the number of clusters, and showing that if \emph{Cluster First Best} is not dual feasible (over the typical set) for $K+1$ clusters, then it is not dual feasible (over the typical set) for $K$ clusters. Standard concentration arguments give the result.
\end{proof}}

\subsection{Self-aware agents}

{For teams of arbitrary size, the impossibility in the self-aware setting extends, with an additional assumption on the structure of the utility function $u$.}

{\begin{definition}[Discrete regularity]
\begin{enumerate}[(i)]
\item A \emph{concave} utility function $u$ is said to be \emph{discrete regular} if the following holds:
$$ k_1 u(k_1) - (k_1-1)u(k_1-1) > (k_2 + 1)u(k_2 + 1) - k_2u(k_2), \quad \forall \, k_1 > k_2 \geq 1 $$
\item A \emph{convex} utility function $u$ is said to be \emph{discrete regular} if the following holds for all $a \geq 2$:
$$ \left(a-(k_1+1)\right) u(k_1+1) - \left(a-k_1\right) u(k_1) < \left(a-k_2\right)u(k_2) - \left(a-(k_2 - 1)\right)u(k_2 - 1), \quad \forall \, k_1 \geq k_2 \geq 1 $$
\end{enumerate}
\end{definition}
}

{At a high level, discrete regularity enforces that the curvature of the utility functions not be too extreme. Moreover, in the case of strictly concave utility functions, this notion of discrete regularity is analogous to that of regularity in classical mechanism design~\cite{hartline2013mechanism}. }

{\begin{theorem}
In the setting where agents are self-aware and symmetric, any \emph{discrete regular} convex or concave utility function $u$ and team size $a > 2$, no {Pareto improving} signaling policy can perform better than random. Moreover
\begin{enumerate}[(i)]
\item For convex utilities, the agents blocking any other signaling scheme are type 0 agents.
\item For strictly concave utilities, the agents blocking any other signaling scheme are type 1 agents.
\end{enumerate}
\end{theorem}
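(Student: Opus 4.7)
The plan is to mirror the $a=2$ impossibility proof (Theorem~\ref{thm:known-types-impossibility}), leveraging discrete regularity to reduce the richer combinatorics of teams of size $a>2$ back to a one-dimensional argument inside each realization. As in the $a=2$ case, the argument uses only the Pareto-improvement constraints of the blocking type --- type~$1$ agents for strictly concave $u$, type~$0$ agents for convex $u$. I describe the concave case; the convex case is symmetric with $h(j):=(a-j)u(j)$ playing the role of $g(j):=j\,u(j)$.

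First, I would record that discrete regularity of a strictly concave $u$ is precisely the statement that $g(j)=j\,u(j)$ is strictly convex on $\{1,\ldots,a\}$ (and dually, discrete regularity of a convex $u$ makes $h$ strictly concave); this is a one-line rearrangement of the defining inequality. Next, I would argue --- via a standard symmetrization argument over agents in the same cluster --- that it suffices to consider schemes that are anonymous within each cluster. For such schemes, the per-signal PI constraints of the type-$1$ agents in $\theta$ reduce to a per-realization constraint via the same counting identity used in the $a=2$ proof (i.e., writing each $\mathcal{U}_i(\mathcal{S})$ in terms of the distribution of $j_i$). Summing weighted by $p_{k(i)}$ over the type-$1$ agents within $\theta$, the aggregated PI becomes the realization-level inequality
$$\sum_{j=0}^a g(j)\,m_j(\theta) \;\ge\; \sum_{j=0}^a g(j)\,m_j^{\mathrm{myopic}}(\theta),$$
where $m_j(\theta)$ is the expected number of $j$-type-$1$ teams induced by the scheme in realization $\theta$.

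The core geometric observation is then the following. Within each realization $\theta$, the feasible $(m_0,\ldots,m_a)$ live in the polytope cut out by the conservation equations $\sum_j m_j=n/a$ and $\sum_j j\,m_j=J(\theta)$. On this polytope, $SW(\theta)=a\sum_j u(j)\,m_j$ is \emph{concave} in $(m_j)$ (since $u$ is concave) and is therefore maximized at matchings most \emph{concentrated} around the mean team composition $aJ(\theta)/n$; by strict convexity of $g$, the PI objective $\sum_j g(j)\,m_j$ is \emph{minimized} at exactly those same concentrated matchings. Consequently, the PI lower bound rules out any matching strictly more concentrated than myopic, and in particular any matching that strictly improves $SW$ over myopic. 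Since myopic itself sits on the PI boundary with equality, it achieves the maximum welfare in every realization, and aggregating over $\theta$ yields the impossibility. The blocking-type identification is immediate: only type-$1$ PI entered the concave argument, and by symmetry only type-$0$ PI enters the convex one.

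The step I expect to be the main obstacle is the symmetrization/aggregation step just described --- reducing the per-signal PI to the clean per-realization inequality $\sum_j g(j) m_j(\theta) \ge \sum_j g(j)m_j^{\mathrm{myopic}}(\theta)$. For deterministic straightforward schemes the signal fully reveals $\theta$ and the reduction is immediate; for randomized or coarse-signal schemes a single signal may be consistent with multiple realizations, so agents have nontrivial posteriors over teammate types, and one must argue both that symmetrizing the scheme within clusters is without loss of generality and that the resulting anonymous scheme's per-signal PI collapses into the per-realization form quoted above. Once this reduction is in hand, the remainder is the clean convex-analysis argument sketched in the previous paragraph.
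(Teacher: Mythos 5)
Your reduction to the per-realization inequality $\sum_j g(j)\,m_j(\theta)\ \ge\ \sum_j g(j)\,m_j^{\mathrm{myopic}}(\theta)$ with $g(j)=j\,u(j)$ is fine in spirit (and you are right to flag the posterior/aggregation step as delicate), but the decisive step after it fails. From ``$SW=a\sum_j u(j)m_j$ is maximized, and $\sum_j g(j)m_j$ is minimized, at the most balanced compositions'' you infer that any $m$ with welfare strictly above myopic must violate the $g$-inequality. These are two \emph{linear} functionals on the polytope $\{m\ge 0:\ \sum_j m_j = n/a,\ \sum_j j\,m_j = \J(\theta)\}$, which has dimension $a-1\ge 2$ once $a\ge 3$; sharing an optimizer does not make them anti-monotone, and in general they are not. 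Concretely, take $a=3$ and $u(0)=0,\ u(1)=1,\ u(2)=1.5,\ u(3)=1.75$: this is strictly concave and discrete regular, since $g=(0,1,3,5.25)$ has strictly increasing increments. Recombining five $(1,0,0)$-teams and three $(1,1,1)$-teams into one $(0,0,0)$-team and seven $(1,1,0)$-teams preserves the number of teams and of type-1 agents, raises welfare from $30.75$ to $31.5$, and raises $\sum_j g(j)m_j$ from $20.75$ to $21$. Since the myopic composition puts positive weight on every team profile, the direction $(+1,-5,+7,-3)$ is feasible at $m^{\mathrm{myopic}}(\theta)$, so there exist compositions that strictly improve welfare while strictly improving the aggregate (hence, under within-cluster anonymity, each individual) type-1 expected utility. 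Your aggregated type-1 PI constraint therefore cannot by itself rule out welfare improvements over random, and the ``consequently'' in your core geometric observation is a genuine gap, not a routine verification.

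The paper's proof takes a different, strictly local route: it considers a single welfare-improving elementary move, shifting one type-1 agent from a team with $h_{t'}$ ones to a team with $h_t<h_{t'}$ ones, and computes that the aggregate type-1 utility changes by $\bigl(g(h_t+1)-g(h_t)\bigr)-\bigl(g(h_{t'})-g(h_{t'}-1)\bigr)<0$ by discrete regularity, concluding by symmetry that every type-1 agent strictly loses from such a swap (and symmetrically with $h(j)=(a-j)u(j)$ and type-0 agents in the convex case, which is how the blocking sets are identified). That is a per-swap anti-monotonicity along edges of the recombination graph, which is weaker than the global anti-monotonicity of the two functionals that your argument needs; the example above shows the two are not equivalent (a welfare-improving recombination need not decompose into welfare-improving swaps each of which hurts type-1 agents), so you cannot close your gap simply by citing the paper's local computation. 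To salvage your route you would have to bring in more than the type-1 aggregate PI and convexity of $g$ alone — e.g., the per-signal, posterior-dependent form of the PI constraints or the group-stability/persuasiveness requirements — and the symmetrization issue you flagged, while real, is secondary to this.
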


\begin{proof}
%\proof{Proof.}
We prove the theorem for strictly concave utilities; the proof for convex utilities is analogous. We will show that, given a random (myopic) configuration of teams, any local improvement to welfare necessarily results in a decrease in utility for type 1 agents.

Consider any realization $\theta$, and let $h_t(\theta), \ell_t(\theta)$ denote the number of type 1 and type 0 agents in team $t$. For ease of notation, we suppress the dependence on $\theta$ for the remainder of the proof.

Suppose there exist teams $t, t'$ such that $u(h_t) + u(h_{t'}) < u\left(h_t + 1\right) + u\left(h_{t'}-1\right)$, with $h_t < h_{t'}$. By concavity of $u$, it is welfare optimal to switch to the team configuration with $\tilde{h}_{t} = h_{t} + 1, \tilde{h}_{t'} = h_{t'} - 1$. Consider now the aggregate change in utility of all type 1 agents.
\begin{align*}
\Delta\mathcal{U} &= \left(\left(h_{t}+1\right)u(h_t + 1) + \left(h_{t'} - 1\right)u(h_{t'} - 1)\right) - \left(h_t u(h_t) + h_{t'} u(h_{t'})\right)  \\ 
&< 0 \text{ by discrete regularity}
\end{align*}
Due to the fact that agents are symmetric, this immediately implies that the utility of each type 1 agent strictly decreases from a swap to a welfare-improving configuration.
%\Halmos
%\endproof
\end{proof}}

\section{Discussion}
\label{sec:extensions}

{Our work uncovered the key role of awareness in the design of signaling mechanisms for team formation by presenting two extreme cases -- that of agents who don't know their own types, and that of agents who have full knowledge of their own types. A natural next step would be to ask how \emph{partial awareness} impacts the design of our signaling schemes. One potential model for partial awareness would be that agents are drawn from a known mixture distribution, and have private information regarding the sub-distribution from which they are drawn, but not the realization of their type. The interpretation of such a knowledge structure would be that each agent knows the cluster to which she belongs, but not the clusters to which other agents do. For strictly concave utilities, it would seem as though the principal can achieve far better than \emph{Cluster First Best}, even asymptotically. We leave the development and analysis of such a model as future work.}

Although our results consider the setting in which agents' types are binary, this is not critical for our techniques, which therefore should extend to more general discrete distributions (e.g., multinomial). The main additional ingredient that we need is to define analogs for the agents' utility function (and in particular, appropriate notions of convexity) in settings with more types. For example, in the self-agnostic case with one cluster, with linear and symmetric agent utilities and appropriate notions of convexity, a nearly identical construction would give us that \emph{First Best} is achievable by the principal. The setting with many clusters requires additional assumptions on the distributions from which each cluster is drawn. Namely, for binary types, our results relied on the fact that $p_1 > p_2$. Intuitively, however, first-order stochastic dominance should be sufficient for similar results to hold. Such a generalization however would require extensive additional notation for handling the larger type-spaces, and hence we leave this for future work.

Another related setting is that of information signalling in \emph{bipartite matching}. One motivating example is that of matching riders and drivers on ridesharing platforms. On these platforms, both riders and drivers have ratings. Companies such as Lyft and Uber have an incentive to want diverse matches, since matches of poorly rated riders to poorly rated drivers result in negative experiences for all users, and are likely to create self-reinforcing cycles of poor ratings.  If there was a way for the platform to obtain private information on each side (say, selectively display some of the ratings), then there is hope for socially optimal matches to be induced via appropriate signaling mechanisms. 

In more detail, in the bipartite matching problem, the one-cluster analog would be each side of the platform drawn from a $Ber(p_{left}), Ber(p_{right})$ distribution. Again, an analogous construction using the `complement' of a realized type profile imply similar results. Further, the case with multiple clusters on each side lends itself to the intuitive result that the platform cannot benefit from having, for example, the highest ranked cluster on the left match with any other cluster than the highest ranked cluster on the right. We believe many of our techniques and ideas should prove useful in this setting, and leave it as a promising avenue for future work.

\newpage

\appendix

\section{Proof of Theorem~\ref{thm:fb-symm}}\label{sec:app-thm2-proof}
\begin{proof}
	
	We first show that solving the relaxed LP (\ref{relaxed-lp}) allows the principal to achieve the minimum number of matches between two type 0 agents, for each realization $\theta$.
	
	\begin{lemma}\label{lem:relaxed-sol}
		$$ \widehat{OPT} = \expectation{\min_{\sigma}m_{00}(\sigma(\theta))} $$
	\end{lemma}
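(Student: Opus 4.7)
My plan is to prove the equality via matching bounds. The lower bound $\widehat{OPT}\geq\mathbb{E}[\min_\sigma m_{00}(\sigma(\theta))]$ is immediate from the LP structure: any feasible $\{x_{\theta,\sigma}\}$ obeys $\sum_\sigma x_{\theta,\sigma}=1$ with $x_{\theta,\sigma}\geq 0$, so for each fixed $\theta$,
$$\sum_\sigma x_{\theta,\sigma}\, m_{00}(\sigma(\theta)) \;\geq\; \min_\sigma m_{00}(\sigma(\theta)).$$
Weighting by $\lambda(\theta)$ and summing yields the bound, without ever invoking the persuasiveness constraints.

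The main work lies in the matching upper bound, for which I will exhibit a deterministic feasible policy $x_{\theta,\sigma^\star(\theta)}=1$ that realizes $\min_\sigma m_{00}(\sigma(\theta))$ pointwise. The mapping $\theta\mapsto\sigma^\star(\theta)$ is built by a canonical convention that I will make explicit: place the $|\J(\theta)-\I(\theta)|$ "excess" agents into homogeneous teams at the appropriate end of the ordering, and interleave the remaining agents so that each pair at positions $(2k-1,2k)$ is of the form "high then low." By construction this ordering attains the minimum number of $0$-$0$ matches, so the objective value is as claimed. The content of the argument is verifying the persuasiveness constraint
$$\sum_\theta \lambda(\theta)\,x_{\theta,\sigma}\,\sig{i}(\theta) \;\geq\; \sum_\theta \lambda(\theta)\,x_{\theta,\sigma}\,\sig{i+1}(\theta) \qquad \forall \sigma,\ \forall i\in[n-1].$$

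The key device is the type-flipping involution $\theta\mapsto\bar\theta$ from Example~\ref{ex:fb}: flip each agent's type within every mixed team of $\sigma^\star(\theta)$. I plan to establish three facts driving the argument. First, flipping within a mixed team preserves the number of high types, so $\lambda(\theta)=\lambda(\bar\theta)$ under the i.i.d. prior. Second, $\bar\theta$ has the identical team structure under $\sigma^\star(\theta)$, so the canonical convention yields $\sigma^\star(\bar\theta)=\sigma^\star(\theta)$, and the map is an involution (with singletons being profiles whose canonical ordering contains no mixed team). Third, within each mixed team, one of $\theta,\bar\theta$ contributes types $(1,0)$ at positions $(2k-1,2k)$ while the other contributes $(0,1)$, so the paired contributions to the persuasiveness inequality within a team are automatically balanced.

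The main obstacle I anticipate is verifying persuasiveness at the \emph{cross-team} boundaries, i.e., between position $2k$ and position $2k+1$, where a single pair $(\theta,\bar\theta)$ can yield a raw negative contribution (e.g., when position $2k$ closes a mixed team at a $0$ while position $2k+1$ opens a $1$-$1$ team at a $1$). I plan to resolve this by exploiting the full $S_n$-symmetry of the i.i.d. prior together with the canonical construction of $\sigma^\star$ to show that, conditioned on any announced signal $\sigma$, the expected type $\mathbb{E}[\sig{i}(\theta)\mid\sigma]$ depends only on which team position $i$ sits in and equals $1$, $\tfrac12$, or $0$ according as the team is $1$-$1$, mixed, or $0$-$0$. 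Since $\sigma^\star$ is constructed to order teams as $1$-$1$, then mixed, then $0$-$0$, the resulting block-constant expected-type sequence $1,\ldots,1,\tfrac12,\ldots,\tfrac12,0,\ldots,0$ is non-increasing, so persuasiveness holds (with ties broken in the principal's favor). Combining feasibility with the achieved objective value gives the reverse inequality, completing the proof.
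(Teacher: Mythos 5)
Your proposal matches the paper's proof essentially step for step: achieve the pointwise minimum of $m_{00}$ with a canonical ordering, pool each profile $\theta$ with its mixed-team-flipped partner $\bar\theta$ under one common signal, use $\lambda(\theta)=\lambda(\bar\theta)$ (equal number of high types under the i.i.d.\ prior) to get posterior expected types $1,\tfrac12,0$ in blocks, and conclude persuasiveness from the non-increasing block structure — this is exactly the paper's cancellation argument, just phrased via posterior means instead of the differences $\bar v$. One small repair: the ``high then low'' convention cannot literally yield $\sigma^\star(\bar\theta)=\sigma^\star(\theta)$ (applying it to $\bar\theta$ would transpose each mixed team), so, as the paper does, define the common signal once per pair $\{\theta,\bar\theta\}$; the within-team order is immaterial since both positions of a mixed team have posterior mean $\tfrac12$.
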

	
	\begin{proof}

	The proof is constructive. Let $\I(\theta), \J(\theta)$ denote the number of agents of type 0 and type 1, respectively. Clearly, we have that
	\begin{align*}
	\I(\theta) = n-\J(\theta).
	\end{align*}
	\\
	\underline{Case 1:} $\J(\theta) \in \{0,n\}$
	
	When $\J(\theta) = 0$, for all $\sigma\in\bSigma $, $m_{00}(\sigma(\theta)) = n/2$. Thus, the principal is indifferent among orderings, and can choose an arbitrary ordering $\tilde{\sigma}$ and announce it with probability 1 ($x_{\theta,\tilde{\sigma}} = 1$).
	
	Similarly, when $\J(\theta) = n$, $m_{00}(\sigma(\theta)) = 0$ for all $\sigma\in\bSigma $. As in the case above, the principal can choose an arbitrary ordering to announce with probability 1.\\
	
	\noindent
	\underline{Case 2:} $\J(\theta) \in \{1,n-1\}$
	
	When $\J(\theta) = 1$, $m_{00}(\sigma(\theta)) = n/2-1$ for all $\sigma$. Additionally, when $\J(\theta) = n-1$, $m_{00}(\sigma(\theta)) = 0$ for all $\sigma$. As in Case 1, the principal is indifferent between orderings.
	
	Let $\sigma^{\text{truth}}(\theta)$ denote any ordering that correctly orders agents for a realization $\theta$. We henceforth omit the dependence on $\theta$ when clear from context.  Pick an arbitrary $\sigma^{\text{truth}}$ and set $x_{\theta,\sigma^{\text{truth}}} = 1$.\\
	
	\noindent
	\underline{Case 3:} $1 < \J(\theta) < n-1$
	
	In the same spirit as Example~\ref{ex:fb}, our goal will be to find a realization $\bar\theta$ with which to pair $\theta$ such that $\sigma^*(\theta) \in \arg\min\limits_{\sigma}m_{00}(\sigma(\theta))$, and $\sigma^*(\bar\theta) \in \arg\min\limits_{\sigma}m_{00}(\sigma(\bar\theta))$. %Specifically, what we are doing is constructing a bijection between realizations in $\bTheta $.
	
	Suppose $\J(\theta) > \I(\theta)$, and let $e(\theta) \triangleq \J(\theta)-\I(\theta)$ denote the excess of type 1 agents over type 0 agents. (An analogous construction can be shown to be feasible for $\J(\theta) \leq \I(\theta)$.) {For ease of notation, we suppress the dependence of $e$ on $\theta$.} Let ordering $\sigma^*$ be any permutation of $\theta$ such that there are as many consecutive type 1 and type 0 agents as possible under this ordering. That is, for such $\sigma^*$, ${\sigma^*_1}(\theta) = 1, {\sigma^*_2}(\theta) = 1, \ldots, {\sigma^*_e}(\theta) = 1$, and ${\sigma^*_{e+1}}(\theta) = 1, {\sigma^*_{e+2}}(\theta) = 0, \ldots, {\sigma^*}_{n-1}(\theta) = 1, {\sigma^*_n}(\theta) = 0$. Clearly, ${\sigma^*} \in \arg\min\limits_{\sigma}m_{00}(\sigma(\theta))$.

	Let $\bar\theta$ be the realization with $\J(\bar\theta) = \J(\theta)$, and such that, for the same $\sigma^*$ defined above, ${\sigma^*_1}(\bar\theta) = 1, {\sigma^*_2}(\bar\theta) = 1, \ldots, {\sigma^*_e}(\bar\theta) = 1$, and ${\sigma^*_{e+1}}(\bar\theta) = 0, {\sigma^*_{e+2}}(\bar\theta) = 1, \ldots, {\sigma^*_{n-1}}(\bar\theta) = 0, {\sigma^*_n}(\bar\theta) = 1$. This construction also makes it clear that ${\sigma^*} \in \arg\min\limits_{\sigma}m_{00}(\sigma(\bar\theta))$.
	
	Let $x_{\theta,{\sigma^*}} = x_{\bar\theta,{\sigma^*}} = 1$. We show that this construction satisfies the set of `persuasive' constraints of (\ref{relaxed-lp}), namely:
	\begin{align*}
	&\sum_{\theta\in\bTheta }\lambda(\theta)x_{\theta,{\sigma^*}}\sig{i}^*(\theta) \geq \sum_{\theta\in\bTheta }\lambda(\theta)x_{\theta,{\sigma^*}}\sig{i+1}^*(\theta) &   \forall \; i \in [n-1]\\
	\iff &\sum_{\theta\in\bTheta }\lambda(\theta)x_{\theta,{\sigma^*}}\bar{v}(\sig{i}^*(\theta)) \geq 0 &   \forall \; i \in [n-1],
	\end{align*}
	where $\bar{v}(\sig{i}^*(\theta)) \triangleq \sig{i}^*(\theta)-\sig{i+1}^*(\theta)$.

	For $1 \leq i \leq {e}$, all entries of ${\sigma^*}(\theta)$ and ${\sigma^*}(\bar\theta)$ are 1. Thus, for $1 \leq i \leq {e}-1$
	\begin{align*}
	\bar{v}(\sig{i}^*(\theta)) = \bar{v}(\sig{i}^*(\bar\theta)) = 0
	\implies \lambda(\theta)x_{\theta,{\sigma^*}}\bar{v}(\sig{i}^*(\theta)) + \lambda(\bar\theta)x_{\bar\theta,{\sigma^*}}\bar{v}(\sig{i}^*(\bar\theta)) = 0.
	\end{align*}
	
	As noted in the example, since $\J(\bar\theta) = \J(\theta)$, and all agent types are drawn independently from the same $Ber(p)$ distribution, we have $\lambda(\bar\theta) = \lambda(\theta)$. Additionally, for $i \geq {e}+1$, our construction is such that $\bar{v}(\sig{i}^*(\theta)) = -\bar{v}(\sig{i}^*(\bar\theta))$. These two facts together give us that, for $i \geq {e}+1$
	\begin{align*}
	\lambda(\theta)x_{\theta,{\sigma^*}}\bar{v}(\sig{i}^*(\theta)) + \lambda(\bar\theta)x_{\bar\theta,{\sigma^*}}\bar{v}(\sig{i}^*(\bar\theta)) = \lambda(\theta)(\bar{v}(\sig{i}^*(\theta))-\bar{v}(\sig{i}^*(\theta))) = 0.
	\end{align*}
	
	It remains for us to show that the constraint is satisfied for $i = {e}$. By construction, $\bar{v}(\sig{e}^*(\theta)) = 0$, and $\bar{v}(\sig{e}^*(\bar\theta)) = 1$. Thus:
	\begin{align*}
	\lambda(\theta)x_{\theta,{\sigma^*}}\bar{v}(\sig{i}^*(\theta)) + \lambda(\bar\theta)x_{\bar\theta,{\sigma^*}}\bar{v}(\sig{i}^*(\bar\theta)) &= \lambda(\theta) > 0.
	\end{align*} 
	
	\end{proof}
	
	We have shown that we can achieve first-best in our relaxed problem. Although the construction that we have presented is not necessarily symmetric (i.e., not all type 1 agents have the same probability of being matched with a type 0 agent), we can randomize this scheme over realizations, since agents themselves are symmetric. This allows the \emph{First Best} scheme to be both symmetric  and feasible in (\ref{relaxed-lp}). Thus, henceforth we use the term `\emph{First Best}' to refer to the symmetric version of the above signaling scheme.
	
	It remains for us to show that \emph{First Best} is also {Pareto improving}. That is, each agent's expected utility under the signaling mechanism is at least as large as her expected utility if she were to act myopically (which, in the one-cluster case, would be simply to pair up with someone randomly).
	
	\begin{lemma}\label{lem:fb-ir}
		The \emph{First Best} signaling scheme is {Pareto improving}. Thus, it is optimal for all agents to follow the principal's signal.
	\end{lemma}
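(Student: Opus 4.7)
\begin{proofof}{Lemma~\ref{lem:fb-ir}}[Proof proposal]
The plan is to leverage symmetry together with the fact (established in Lemma~\ref{lem:relaxed-sol} via the construction in the proof outline of Theorem~\ref{thm:fb-symm}) that \emph{First Best} minimizes $m_{00}(\sigma(\theta))$ pointwise for every realization $\theta$. Since agent types in the one-cluster setting are i.i.d.\ $Ber(p)$ and the symmetric version of the scheme is obtained by uniformly randomizing over all permutations of agent labels, both the \emph{First Best} scheme and the no-information baseline are invariant under relabeling of the agents. Consequently, every agent's expected utility equals the expected social welfare divided by $n$ under each of the two mechanisms, so it suffices to compare expected social welfare.

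The main steps of the proof are then as follows. First, I would appeal to Lemma~\ref{lem:convexity}(ii): because $u$ is strictly concave, maximizing expected social welfare is equivalent to minimizing $\mathbb{E}[m_{00}(\sigma(\theta))]$ (equivalently, maximizing the expected number of `1-0' matches). Second, since \emph{First Best} achieves $m_{00}(\sigma^*(\theta)) = \min_\sigma m_{00}(\sigma(\theta))$ for every realized $\theta$, taking expectations over $\theta$ immediately gives that its expected welfare dominates that of \emph{any} feasible scheme, including the myopic random-matching baseline (which corresponds to the principal sending no signal). Third, combining the symmetry argument with the welfare comparison yields $\mathcal{U}_i(FB) \geq \mathcal{U}_i(\emptyset)$ for every agent $i$, which is precisely the Pareto improvement condition.

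The only subtlety I expect is in verifying the symmetry step carefully: the specific \emph{First Best} construction from the proof of Lemma~\ref{lem:relaxed-sol} pairs each type-profile $\theta$ with a specific complement $\bar\theta$, and is \emph{not a priori} symmetric across agents (e.g., the first $e$ agents in $\sigma^*$ are always the ``locked-in'' type 1 agents). To resolve this, I would explicitly define the symmetric version as the uniform randomization over the orbit of the construction under the symmetric group $S_n$ acting on agent labels: since the prior $\lambda$ is permutation invariant and the constraints of \eqref{relaxed-lp} are symmetric, this convex combination is still feasible and still achieves pointwise-minimal $m_{00}(\sigma(\theta))$, while now treating all agents identically. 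Once this is in place, the per-agent identity $\mathcal{U}_i = \tfrac{1}{n}\,\mathbb{E}[\text{welfare}]$ follows, and the argument concludes. Finally, I would note explicitly where concavity is used: under convex $u$, Lemma~\ref{lem:convexity}(i) would reverse the direction of the welfare equivalence, so the same pointwise minimization of $m_{00}$ would \emph{not} yield welfare maximization -- this is exactly the intuition flagged in the proof outline of Theorem~\ref{thm:fb-symm}.
\end{proofof}
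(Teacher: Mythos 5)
Your proposal is correct, but it takes a genuinely different route from the paper's proof of Lemma~\ref{lem:fb-ir}. The paper argues by direct computation: it writes out $\mathcal{U}_i(\mathcal{S}^{FB})$ explicitly, showing it equals $pu(1,1)+(1-p)u(0,0)+\tfrac{1}{n}\bigl(2u(1,0)-u(1,1)-u(0,0)\bigr)\EE\bigl[\min\{\I(\theta),\J(\theta)\}\bigr]$, compares this with the baseline expression of Proposition~\ref{prop:baseline-util}, and then reduces the claim to the binomial inequality $\tfrac{1}{n}\EE\left[\min\{\I(\theta),\J(\theta)\}\right]\geq p(1-p)$, which it verifies by an explicit combinatorial manipulation. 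You instead exploit exchangeability: after symmetrizing the construction (which the paper also does, and your justification via permutation-invariance of $\lambda$ and linearity of the constraints in \eqref{relaxed-lp} is sound), every agent's ex-ante utility is welfare divided by $n$ under both the scheme and the no-information baseline, so Pareto improvement reduces to an aggregate welfare comparison; that comparison follows from Lemma~\ref{lem:convexity}(ii) together with the pointwise minimality of $m_{00}$ from Lemma~\ref{lem:relaxed-sol}, since the baseline matching cannot beat the pointwise minimum on any realization. Note that both your argument and the paper's verify the ex-ante (unconditional) form of the participation constraint, so you are at the same level of rigor as the paper on that point. What each approach buys: your argument is shorter, avoids the binomial calculation, and generalizes immediately -- indeed it is essentially the argument the paper itself adopts later in Lemma~\ref{lem:welf-opt} for teams of size $a>2$ -- while the paper's computation yields an explicit quantitative expression for each agent's utility gain (the concavity gap $2u(1,0)-u(1,1)-u(0,0)$ times $\tfrac{1}{n}\EE[\min\{\I,\J\}]-p(1-p)$), which is more informative but not needed for the lemma.
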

	
	\begin{proof}

	Let $\mathcal{S}^{FB}$ denote the \emph{First Best} scheme. {Under $\mathcal{S}^{FB}$, if there are more type 0 agents than type 1 agents, a type 1 agent will \emph{always} be matched with a type 0 agent; on the other hand, if there are more type 1 agents than type 0 agents, a randomly chosen type 1 agent will be matched with another randomly chosen type 1 agent. Similarly for type 0 agents: if there are more type 1 agents than type 0 agents, a type 0 agent will \emph{always} be matched with a type 1 agent; otherwise randomly chosen type 0 agents will match together.}

	Thus, the expected utility of agent $i$ $\mathcal{U}_i(\mathcal{S}^{FB})$ is given by:
	\begin{align*}
	\mathcal{U}_i(\mathcal{S}^{FB}) &= \EE\Bigg[\theta_i\left(\min\left\{1,\frac{\I(\theta)}{\J(\theta)}\right\}u(1,0)+\left(1-\min\left\{1,\frac{\I(\theta)}{\J(\theta)}\right\}\right)u(1,1)\right)\\
	&\quad+(1-\theta_i)\left(\min\left\{1,\frac{\J(\theta)}{\I(\theta)}\right\}u(1,0)+\left(1-\min\left\{1,\frac{\J(\theta)}{\I(\theta)}\right\}\right)u(0,0)\right)\Bigg]\\
	&= \EE\Bigg[\theta_i\left(\min\left\{1,\frac{\I(\theta)}{\J(\theta)}\right\}\left(u(1,0)-u(1,1)\right)+u(1,1)\right)\\
	&\quad+(1-\theta_i)\left(\min\left\{1,\frac{\J(\theta)}{\I(\theta)}\right\}\left(u(1,0)-u(0,0)\right)+u(0,0)\right)\Bigg]\\
	&= pu(1,1)+(1-p)u(0,0)+\left(u(1,0)-u(1,1)\right)\EE\Bigg[\theta_i\min\left\{1,\frac{\I(\theta)}{\J(\theta)}\right\}\Bigg]\\
	&\quad +\left(u(1,0)-u(0,0)\right) \EE\Bigg[(1-\theta_i)\min\left\{1,\frac{\J(\theta)}{\I(\theta)}\right\}\Bigg]\\
	&= pu(1,1)+(1-p)u(0,0) + \frac{1}{n}\left(2u(1,0)-u(1,1)-u(0,0)\right)\EE\Big[\min\left\{\I(\theta,\J{(\theta)})\right\}\Big],
	\end{align*}
	where the last equality results from the easy-to-show fact that in the symmetric setting $$ \EE\Bigg[\theta_i\min\left\{1,\frac{\I(\theta)}{\J(\theta)}\right\}\Bigg] = \EE\Bigg[(1-\theta_i)\min\left\{1,\frac{\J(\theta)}{\I(\theta)}\right\}\Bigg] = \frac{1}{n}\EE\Big[\min\{\I(\theta),\J(\theta)\}\Big].$$
	
	We present an alternate expression for the expected utility of agent $i$ under her myopic strategy (random matching), in the symmetric case. We defer the easy proof of this proposition to Appendix~\ref{sec:aux-proofs}.

\begin{prop}\label{prop:baseline-util}
	In the self-agnostic setting with i.i.d. types, agent $i$'s baseline utility is given by
	\begin{align*}
	\mathcal{U}_i(\emptyset) = pu(1,1) + (1-p)u(0,0) + p(1-p)\Big(2u(1,0)-u(1,1)-u(0,0)\Big).
	\end{align*}
\end{prop}
	
	Thus, by concavity of $u$, it suffices for us to show that
	\begin{align*}
	\frac{1}{n}\EE\left[\min\{\I(\theta),\J(\theta)\}\right] \geq p(1-p) \qquad \forall \; n \geq 2, p\in[0,1].
	\end{align*}
	
	We have:
	\begin{align*}
	\frac{1}{n}\EE\left[\min\{\I(\theta),\J(\theta)\}\right] &= \frac{1}{n}\sum_{k=1}^n k {n\choose k}\left[p^k(1-p)^{n-k}+(1-p)^kp^{n-k}\right]\\
	&= p(1-p)\sum_{k=1}^n {n-1\choose k-1} \left[p^{k-1}(1-p)^{n-k-1}+(1-p)^{k-1} p^{n-k-1}\right]\\
	&= p(1-p)\sum_{k=0}^{n-1} {n-1\choose k} \left[p^k(1-p)^{n-k-2}+(1-p)^kp^{n-k-2}\right]\\
	&\geq p(1-p) \sum_{k=0}^{n-1} {n-1\choose k} \left[p^k(1-p)^{n-k-1}+(1-p)^kp^{n-k-1}\right]\\
	&= p(1-p)\left[\sum_{k=0}^{n-1}{n-1\choose k}p^k(1-p)^{n-1-k}+\sum_{k=n}^{n-1}{n-1\choose k}p^k(1-p)^{n-k-1}\right]\\
	&= p(1-p).
	\end{align*}
	\end{proof} 
Theorem~\ref{thm:fb-symm} follows from Lemmas~\ref{lem:relaxed-sol} and \ref{lem:fb-ir}, since we have shown that the optimal scheme that achieves a lower bound on (\ref{opt}) is also feasible.
\end{proof}

	\section{Proofs of Theorem~\ref{thm:2-clusters} and Corollary~\ref{cor:k-clusters}}\label{ssec:thm-3-proof}

\begin{proofof}{Theorem~\ref{thm:2-clusters}}		
		%\inlinetodo{Image here or in body?}
		Let $\typical{2}$ be the set of all realizations $\theta \in \bTheta $ such that the number of type 1 individuals in each cluster is within an $\epsilon_1n_1,\epsilon_2n_2)$ neighborhood of $n_1p_1,n_2p_2$, respectively. 
		\begin{align*}
		\typical{2} \triangleq \left\{\theta: |\J_1(\theta)-n_1p_1|\leq \epsilon_1n_1,|\J_2(\theta)-n_2p_2|\leq \epsilon_2n_2 \right\}.
		\end{align*}
		
		We will require $\epsilon_1,\epsilon_2$ to be such that $\typical{2}$ is entirely contained in the same orthant as $(n_1p_1,n_2p_2)$. That is, $\epsilon_1 \leq |p_1-1/2|, \epsilon_2 \leq |p_2-1/2|$. 
		Additionally, we define $\bar{A}_{\varepsilon}^{(n,2)}$ to be the complement of $\typical{2}$.
		
		We begin by finding an upper bound on $OPT$. Recall that our objective is with respect to $m_{00}$, the number of `0-0' matches. We abuse notation and use $FB_C(\theta)$ to denote the number of `0-0' matches created by the \emph{Cluster First Best} scheme (optimally matching agents in each cluster), for a fixed realization $\theta$.
		
		\begin{lemma}\label{lem:ub-2-clusters}
			\begin{align}
			OPT \leq \sum_{\theta\in \typical{2}}\lambda(\theta)FB_C(\theta)+o_{n_1+n_2}(1).\label{eq:ub-fb-c}\tag{UB}
			\end{align}
		\end{lemma}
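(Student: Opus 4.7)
The plan is to bound $OPT$ directly by exhibiting a feasible scheme --- namely, Cluster First Best itself --- and then to separate the resulting expectation into a typical--set contribution and a vanishing tail contribution, using Hoeffding's inequality to handle the latter. Since Cluster First Best applies the single-cluster First Best scheme independently within each cluster, Theorem~\ref{thm:fb-symm} guarantees that it is persuasive and Pareto improving within each cluster; both properties then extend to the joint instance because signals, priors, and matches factor across clusters. Therefore Cluster First Best is feasible for the relaxed LP (\ref{relaxed-lp}), and so
\begin{align*}
OPT \;\le\; \EE_{\theta}\bigl[FB_C(\theta)\bigr] \;=\; \sum_{\theta\in\typical{2}}\lambda(\theta)\,FB_C(\theta)\;+\;\sum_{\theta\in\bartypical{2}}\lambda(\theta)\,FB_C(\theta).
\end{align*}

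The first term is exactly the quantity appearing on the right-hand side of \eqref{eq:ub-fb-c}. So it remains to show that the atypical contribution is $o_{n_1+n_2}(1)$. Here I would use the trivial deterministic bound $FB_C(\theta)\le (n_1+n_2)/2$ (the total number of type-$0$ agents can be at most $n_1+n_2$, so there can be at most $(n_1+n_2)/2$ matches of any kind) to obtain
\begin{align*}
\sum_{\theta\in\bartypical{2}}\lambda(\theta)\,FB_C(\theta)\;\le\;\tfrac{n_1+n_2}{2}\,\Pr_{\theta\sim\lambda}\!\left[\theta\in\bartypical{2}\right].
\end{align*}

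Next, I would choose the concentration parameters $\epsilon_1=\sqrt{\tfrac{\ln n_1}{n_1}}$ and $\epsilon_2=\sqrt{\tfrac{\ln n_2}{n_2}}$; by the hypothesis of the theorem these satisfy $\epsilon_k\le|p_k-1/2|$, so $\typical{2}$ indeed lies entirely in the $(n_1p_1,n_2p_2)$ orthant as required. Since $\J_k(\theta)$ is a sum of $n_k$ independent Bernoulli random variables, Hoeffding's inequality (\cite{hoeffding1963probability}) and a union bound over the two clusters yield
\begin{align*}
\Pr[\theta\in\bartypical{2}] \;\le\; 2\exp(-2\epsilon_1^{2}n_1)+2\exp(-2\epsilon_2^{2}n_2) \;=\; \tfrac{2}{n_1^{2}}+\tfrac{2}{n_2^{2}}.
\end{align*}

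Combining the last two displays,
\begin{align*}
\sum_{\theta\in\bartypical{2}}\lambda(\theta)\,FB_C(\theta) \;\le\; \tfrac{n_1+n_2}{2}\left(\tfrac{2}{n_1^{2}}+\tfrac{2}{n_2^{2}}\right) \;=\; o_{n_1+n_2}(1),
\end{align*}
which gives the claimed bound. I do not expect any serious obstacle here --- the feasibility of Cluster First Best is essentially inherited cluster-wise from Theorem~\ref{thm:fb-symm}, and the rest of the argument is standard measure concentration. The only subtle point is choosing $\epsilon_k$ large enough that the Hoeffding tail dominates the $(n_1+n_2)$ prefactor, while still being small enough (by the theorem's hypothesis) to keep $\typical{2}$ in the desired orthant; the choice $\epsilon_k=\sqrt{\ln n_k/n_k}$ threads this needle precisely.
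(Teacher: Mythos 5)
Your proof is correct and follows essentially the same route as the paper's: bound $OPT$ by the feasible Cluster First Best scheme, split the expectation over $\typical{2}$ and its complement, bound $FB_C(\theta)$ crudely on the atypical set, and apply Hoeffding with $\epsilon_k=\sqrt{\ln n_k/n_k}$. The only (immaterial) differences are your slightly tighter crude bound of $(n_1+n_2)/2$ in place of the paper's $n$, and your explicit two-sided factor of $2$ in the Hoeffding tail.
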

		
		\begin{proof}
			We use the Cluster First Best scheme, which we know to be feasible, for our upper bound.
			\begin{align}
			OPT &\leq \sum_{\theta\in\bTheta }\lambda(\theta)FB_C(\theta) \notag\\
			&=\sum_{\theta\in \typical{2}} \lambda(\theta)FB_C(\theta)+\sum_{\theta\in {\bartypical{2}}} \lambda(\theta)FB_C(\theta) \notag \\
			&\leq \sum_{\theta\in \typical{2}} \lambda(\theta)FB_C(\theta)+n\sum_{\theta\in {\bartypical{2}}} \lambda(\theta) \label{eq:at-most-n}\\
			&\leq \sum_{\theta\in \typical{2}} \lambda(\theta)FB_C(\theta)+n\left(e^{-2\epsilon_1^2n_1}+e^{-2\epsilon_2^2n_2}\right) \label{eq:hoeffding}
			\end{align}
			where inequality~\eqref{eq:at-most-n} uses the crude upper bound of $FB_C(\theta) \leq n$, and inequality~\ref{eq:hoeffding} follows from Hoeffding's inequality~\cite{hoeffding1963probability}.
			
			Let $\epsilon_1 =\sqrt{\frac{\ln n_1}{n_1}}, \epsilon_2 = \sqrt{\frac{\ln n_2}{n_2}}$. Since $\sqrt{\frac{\ln n}{n}}$ is monotonically decreasing for $n\geq 2$, there exists $n_0$ such that for all $n_1 \geq n_0, n_2 \geq n_0, \epsilon_1 \leq |p_1-1/2|, \epsilon_2 \leq |p_2-1/2|$, which satisfies our requirement of staying within the same region as $(np_1,np_2)$. Thus, we obtain the bound:
			\begin{align*}
			OPT &\leq \sum_{\theta\in \typical{2}} \lambda(\theta)FB_C(\theta)+2n\cdot\frac{1}{n_1^2+n_2^2}  \\
			&= \sum_{\theta\in \typical{2}} \lambda(\theta)FB_C(\theta)+\frac{2(n_1+n_2)}{n_1^2+n_2^2}  \\
			&= \sum_{\theta\in \typical{2}} \lambda(\theta)FB_C(\theta) + o_{n_1+n_2}(1).
			\end{align*} 
		\end{proof}
		
		We treat the analysis of the lower bound separately, depending on the orthant in which $(p_1,p_2)$ resides:
		1. $p_1 > p_2 > 1/2$\quad,\quad
		2. $p_2 < p_1 < 1/2$\quad,\quad
		3. $p_2 < 1/2 < p_1$.
		
		The fact that $p_1 > p_2$ is without loss of generality, since we can relabel the clusters otherwise. Additionally, recall that $p_1 = p_2$ is the scenario in which we only have one cluster, for which we have shown that we can achieve first-best. 
		
		Let $FB(\theta) =  \min_{\sigma} m_{00}(\sigma(\theta))$.That is, $FB(\theta)$ is the minimum number of `0-0' matches the principal can achieve without any Pareto improvement or stability constraints.\\
		
		\noindent
		\underline{Cases 1 and 2:}
		
		In these orthants, we have the following fact. 
		
		\begin{prop}\label{prop:fb=fb-c}
			For all $\theta \in \typical{2}$, and settings where $p_1 > p_2 \geq 1/2$ or $1/2 \leq p_2 < p_1$, $FB_C(\theta) = FB(\theta)$.
		\end{prop}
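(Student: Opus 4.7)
The plan is to show the equality $FB_C(\theta) = FB(\theta)$ by direct computation, exploiting the fact that for any $\theta$ in the typical set, in both of the orthants covered by Cases~1 and~2, the two clusters exhibit the \emph{same} direction of imbalance between type $1$ and type $0$ agents, so no cross-cluster swap can lower the number of $0$-$0$ matches beyond what is already achieved within each cluster.

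First I would write out closed forms for both quantities. Since the first-best scheme is unconstrained by persuasiveness or Pareto improvement, the principal simply greedily pairs a type $1$ agent with a type $0$ agent whenever possible, leaving any excess type $0$ agents to pair among themselves. This gives
\[ FB(\theta) \;=\; \max\!\left(0,\; \tfrac{\I(\theta)-\J(\theta)}{2}\right). \]
Applying the same greedy construction inside each cluster independently, one obtains
\[ FB_C(\theta) \;=\; \sum_{k=1}^2 \max\!\left(0,\; \tfrac{\I_k(\theta) - \J_k(\theta)}{2}\right). \]

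Next I would invoke the definition of $\typical{2}$, together with the choice $\epsilon_1 \leq |p_1 - 1/2|$ and $\epsilon_2 \leq |p_2 - 1/2|$ made when constructing the typical set. In Case~1 ($p_1 > p_2 > 1/2$), every $\theta \in \typical{2}$ satisfies $\J_k(\theta) \geq n_k(p_k - \epsilon_k) > n_k/2$ for $k=1,2$, so $\J_k(\theta) > \I_k(\theta)$ in each cluster. Each summand in $FB_C(\theta)$ therefore vanishes, and since $\J(\theta) = \J_1(\theta) + \J_2(\theta) > \I_1(\theta) + \I_2(\theta) = \I(\theta)$ the same holds for $FB(\theta)$; both equal $0$. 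In Case~2 ($p_2 < p_1 < 1/2$), symmetrically $\J_k(\theta) \leq n_k(p_k + \epsilon_k) < n_k/2 \leq \I_k(\theta)$ in both clusters, so every term in $FB_C(\theta)$ is strictly positive and
\[ FB_C(\theta) \;=\; \tfrac12 \bigl((\I_1 + \I_2) - (\J_1 + \J_2)\bigr) \;=\; \tfrac{\I(\theta)-\J(\theta)}{2} \;=\; FB(\theta). \]

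I expect no genuine obstacle here: the content of the proposition is essentially that the typical set was chosen small enough to lie entirely inside the orthant containing $(n_1 p_1, n_2 p_2)$, and in both of the ``easy'' orthants the within-cluster greedy matching coincides with the globally greedy matching. The actual difficulty in the proof of Theorem~\ref{thm:2-clusters} arises in Case~3, where the two clusters have opposite excesses and a cross-cluster pairing would strictly improve on $FB_C$; there one needs the dual-certificate construction announced in the proof outline, rather than the elementary combinatorial identity used here.
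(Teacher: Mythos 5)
Your proposal is correct and follows essentially the same argument as the paper: write $FB_C(\theta)$ as the sum of per-cluster positive-part excesses $\left(\frac{\I_k(\theta)-\J_k(\theta)}{2}\right)^+$, use the containment of $\typical{2}$ in the orthant of $(n_1p_1,n_2p_2)$ to see that both clusters share the same direction of imbalance, and conclude the sum collapses to the global positive-part excess, which is $FB(\theta)$. Your explicit spelling-out of the typical-set inequalities is just a more detailed rendering of the step the paper leaves implicit.
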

		
		\begin{proof}
			Since the Cluster First Best scheme matches as many type 1 agents to type 0 agents as possible in each cluster, the number of `0-0' matches in each cluster is simply half of the remaining (or excess) type 0 agents. Thus, we have:
			\begin{align*}
			FB_C(\theta) &= \left(\frac{\I_1(\theta)-\J_1(\theta)}{2}\right)^++\left(\frac{\I_2(\theta)-\J_2(\theta)}{2}\right)^+\\
			&=\begin{cases}
			0 & \mbox{ if } p_1 > p_2 \geq 1/2\\
			\frac{\I_1(\theta)-\J_1(\theta)}{2}+\frac{\I_2(\theta)-\J_2(\theta)}{2} & \mbox{ if } p_2 < p_1 < 1/2
			\end{cases}
			\\
			&=\left(\frac{(\I_1(\theta)+\I_2(\theta))-(\J_1(\theta)+\J_2(\theta))}{2}\right)^+\\
			&= FB(\theta).
			\end{align*}
		\end{proof}
		
		Using Proposition~\ref{prop:fb=fb-c}, we can re-write the upper bound (\ref{eq:ub-fb-c}) as follows:
		\begin{align}\label{ub1}
		OPT \leq \sum_{\theta\in \typical{2}}\lambda(\theta)FB(\theta)+o_{n_1+n_2}(1).\tag{UB1}
		\end{align}
		
		The proof of the lower bound for Cases 1 and 2 is straightforward. We use \emph{First Best} as a crude lower bound on $OPT$. %\begin{align*}
		% OPT &\geq \sum_{\theta\in\bTheta }\lambda(\theta)FB(\theta) \nonumber \\
		% & \geq \sum_{\theta\in \typical{2}}\lambda(\theta)FB(\theta) \nonumber \\ 
		%\end{align*}
		\begin{align}
		OPT &\geq \sum_{\theta\in\bTheta }\lambda(\theta)FB(\theta)\nonumber\\
		&\geq \sum_{\theta\in \typical{2}} \lambda(\theta)FB(\theta)\nonumber\\
		&= \sum_{\theta\in \typical{2}} \lambda(\theta)FB_C(\theta)\label{lb1}\tag{LB1}.
		\end{align}
		
		Putting the upper and lower bounds (\ref{ub1}) and (\ref{lb1}) together, we get that Cluster First Best achieves $o_{n_1+n_2}(1)$ regret for Cases 1 and 2.\\
		
		\noindent
		\underline{Case 3:} We consider the case where $p_1 > 1/2 > p_2$. {Recall that we've defined the difference between the types of two consecutive agents in an ordering as $\bar{v}({\sigma}_i(\theta)) \triangleq {\sigma}_i(\theta)-{\sigma}_{i+1}(\theta)$}.
		
		We obtain a lower bound on $OPT$ via the dual, given by:
		\begin{align}\label{dual}
		\begin{aligned}
		\max_{z_{\theta} \text{ free},y_{\sigma i} \geq 0}& &\sum_{\theta}z_{\theta}&\\
		\text{subject to }& &z_{\theta}+\sum_{i=1}^{n-1}\lambda(\theta)\bar{v}(\sig{i}(\theta))y_{\sigma i} \leq \lambda(\theta)m_{00}(\sigma(\theta)) & \quad \forall \; \theta\in\bTheta ,\sigma\in\bSigma &
		\end{aligned}
		\end{align}
		
		Consider the restricted space of realizations $ \typical{2}$. In the case where $p_1 > 1/2 > p_2$, by construction $\typical{2} = \{\theta: \J_1(\theta) > n_1/2, \J_2(\theta) < n_2/2\}$. This set is of interest because it corresponds to the realizations for which Cluster First Best fails, and more sophisticated policies would succeed. Note that, for $\theta \in \typical{2}$, $FB_C(\theta) = \frac{\I_2(\theta)-\J_2(\theta)}{2}$. If we modify the dual to be over this restricted space $\typical{2}$, we have the following result.
		
		\begin{lemma}\label{lem:fixed-num}
			For all $\sigma\in \bSigma , i\in [n-1]$, there exists $y_{\sigma i}$ such that $z_{\theta} = \lambda(\theta)\cdot\left(\frac{\I_2(\theta)-\J_2(\theta)}{2}\right) = \lambda(\theta) FB_C(\theta)$, and $(z_{\theta}, y_{\sigma_i})$ form a feasible set of solutions in the restricted space of realizations $ \typical{2}$.
		\end{lemma}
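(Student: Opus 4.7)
The plan is to construct an explicit dual-feasible solution for the restricted dual~\eqref{dual} whose objective value matches $\sum_{\theta\in\typical{2}}\lambda(\theta)FB_C(\theta)$. Setting $z_\theta = \lambda(\theta)FB_C(\theta) = \lambda(\theta)(\I_2(\theta)-\J_2(\theta))/2$, and dividing through by $\lambda(\theta)>0$, the remaining task is to exhibit, for each ordering $\sigma\in\bSigma$, a nonnegative vector $(y_{\sigma,1},\ldots,y_{\sigma,n-1})$ depending only on $\sigma$ such that
\begin{align*}
FB_C(\theta) + \sum_{i=1}^{n-1}\bar{v}(\sigma_i(\theta))\,y_{\sigma,i} \;\le\; m_{00}(\sigma(\theta)) \qquad \forall\,\theta\in\typical{2}.
\end{align*}
The chief difficulty is that the same $y_{\sigma,i}$ must absorb the deficit $FB_C(\theta)-m_{00}(\sigma(\theta))$ uniformly over \emph{every} typical type-profile $\theta$, while the alternating-sign structure of the $\bar{v}$ terms forbids crude bounds.

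I would first handle the \emph{cluster-respecting} orderings, i.e., those $\sigma$ that list the cluster-1 agents and cluster-2 agents in two separate contiguous blocks, each internally sorted so as to maximize intra-cluster $1$--$0$ pairings. For any such $\sigma$ and any $\theta\in\typical{2}$, the teams induced by $\sigma$ coincide with the teams formed under Cluster First Best, so $m_{00}(\sigma(\theta))=FB_C(\theta)$ and the constraint holds with equality by choosing $y_{\sigma,i}=0$. For a general $\sigma$, I would induct on a ``defect'' statistic $d(\sigma)$ equal to the minimum number of adjacent transpositions needed to convert $\sigma$ into a cluster-respecting ordering. Given $\sigma$ with $d(\sigma)=d$ and an adjacent $\sigma'$ with $d(\sigma')=d-1$ obtained by swapping positions $(i,i+1)$, I would set $y_{\sigma,j}=y_{\sigma',j}$ for $j\neq i$ and $y_{\sigma,i}=y_{\sigma',i}+\delta$, with $\delta\ge 0$ chosen as the smallest value that absorbs the worst-case decrease in $m_{00}$ produced by the swap across all $\theta\in\typical{2}$.

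The structural leverage is that an adjacent swap changes $m_{00}(\sigma(\theta))$ by at most $1$, changes $\bar{v}(\sigma_j(\theta))$ only for $j\in\{i-1,i,i+1\}$, and---crucially---on the typical set $\typical{2}$ the signs of these $\bar{v}$ contributions are pinned down by $\J_1(\theta)>\I_1(\theta)$ and $\J_2(\theta)<\I_2(\theta)$. A short case analysis on whether the swap lies within a team (positions $2k{-}1,2k$) or straddles a team boundary (positions $2k,2k{+}1$), combined with whether the swap exchanges a cluster-$1$ agent with a cluster-$2$ agent in the ``right'' or ``wrong'' direction, should reveal that a single $O(1)$ correction per inductive level suffices.

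The main obstacle I anticipate is the parity bookkeeping: because teams are pairs $(2k-1,2k)$, the effect of swapping positions $(i,i+1)$ on $m_{00}$ depends delicately on the parity of $i$, forcing the induction to split into at least two cases at each step. Compounding this, one must verify that after many cascading swaps the $y_{\sigma,i}$ remain nonnegative and that the total correction does not blow up with $d(\sigma)$; this is where the alternating-sign cancellations intrinsic to $\bar{v}$, together with the typical-set pinning of $(\J_1,\I_1,\J_2,\I_2)$, must be orchestrated carefully. Once the induction closes, the dual objective is exactly $\sum_{\theta\in\typical{2}}\lambda(\theta)FB_C(\theta)$ by construction, which combined with Lemma~\ref{lem:ub-2-clusters} will pinch down the desired regret bound up to $o(1)$ terms.
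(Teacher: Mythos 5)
There is a genuine gap in your inductive step, and it sits exactly where the real difficulty of this lemma lies. Your mechanism for passing from $\sigma'$ to $\sigma$ adds a single nonnegative correction $\delta$ at one position $i$ and argues it ``absorbs the worst-case decrease in $m_{00}$,'' on the grounds that on $\typical{2}$ the signs of the affected $\bar{v}(\sigma_j(\theta))$ are ``pinned down'' by $\J_1(\theta)>\I_1(\theta)$ and $\J_2(\theta)<\I_2(\theta)$. That claim is false: membership in $\typical{2}$ constrains only the \emph{aggregate} counts of type~1 agents per cluster, not the realized types of the particular agents sitting at positions $i-1,i,i+1$ of a fixed ordering $\sigma$. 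For a fixed $\sigma$ and fixed position, $\bar{v}(\sigma_i(\theta))$ still ranges over $\{-1,0,+1\}$ as $\theta$ varies over the typical set. Since the dual constraint must hold for \emph{every} $\theta\in\typical{2}$ with the \emph{same} $y_{\sigma,i}$, your correction only helps at those $\theta$ where the swap lowers $m_{00}$ \emph{and} the corresponding $\bar{v}$ term is $-1$; at a typical $\theta$ where $m_{00}$ drops but $\bar{v}(\sigma_i(\theta))\in\{0,+1\}$, adding $\delta\ge 0$ does nothing or makes the constraint strictly harder. So the single-parameter absorption step does not close, and the ``short case analysis'' you defer is precisely the content of the lemma. (A smaller error: for cluster-respecting orderings you claim $m_{00}(\sigma(\theta))=FB_C(\theta)$ with $y\equiv 0$; a fixed ordering can still pair two type~0 agents inside the surplus-of-ones cluster, so only the inequality $m_{00}(\sigma(\theta))\ge FB_C(\theta)$ holds — which is all the base case needs.)

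The paper avoids this trap by not relying on sign information at all. It decomposes $\sigma$ into swaps that create \emph{inter-cluster} matches (within-cluster rearrangements are quotiented out by relabeling), defines the dual multipliers through an index set $\mathcal{I}_{\sigma}$ built recursively from the identities of the swapped agents and, on alternate steps, of their \emph{partners}, and then verifies the inequality pointwise in $\theta$ by an exhaustive case analysis over the realized types of the four affected agents. The cases where the running sum $\sum_{i\in\mathcal{I}_\sigma}\bar{v}(\sigma_i(\theta))$ moves the wrong way are rescued not by sign-pinning but by slack lemmas showing that in those configurations $m_{00}$ was already at least $\frac{\I_2(\theta)-\J_2(\theta)}{2}+1$ or $+2$ (e.g., because two type~1 agents, or a `0-0' pair in cluster~1, are forced under the earlier ordering). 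If you want to salvage your adjacent-transposition route, you would need an analogous per-$\theta$ case analysis with such slack arguments, at which point you are essentially reconstructing the paper's proof of Lemma~\ref{lem:index-set} with more bookkeeping.
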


		To show this, we need to show that there exists $y_{\sigma i} \geq 0, \forall\; \sigma,i$, such that:
		\begin{align}
		\sum_{i=1}^{n-1}\lambda(\theta)\bar{v}(\sig{i}(\theta))y_{\sigma i} &\leq \lambda(\theta)m_{00}(\sigma(\theta))-  \lambda(\theta)\cdot\frac{\I_2(\theta)-\J_2(\theta)}{2} \quad \forall \; \theta \in \typical{2} \nonumber \\
		\iff \sum_{i=1}^{n-1}\bar{v}(\sig{i}(\theta))y_{\sigma i} &\leq m_{00}(\sigma(\theta)) -  \frac{\I_2(\theta)-\J_2(\theta)}{2} \quad \forall \; \theta \in \typical{2}.\label{eq:dual-feas}
		\end{align}
		
		We will prove something stronger -- namely, we will find a set of \emph{binary} $y_{\sigma i}$ such that Equation (\ref{eq:dual-feas}) holds. This problem, then, reduces to finding an index set $\mathcal{I}_{\sigma}$ for each permutation $\sigma$ such that
		\begin{align}
		m_{00}(\sigma(\theta)) - \sum_{i\in\mathcal{I}_{\sigma}}\bar{v}(\sig{i}(\theta))&\geq \frac{\I_2(\theta)-\J_2(\theta)}{2} \quad \forall \; \theta\in \typical{2}\label{eq:to-show}.
		\end{align}
		
		We approach finding this set $\mathcal{I}_{\sigma}$ through the following key observation: any permutation $\sigma$ can be produced through a sequence of swaps of pairs of agents. Let the `identity' permutation be the ordering $\sigma^0 = (1,2,\ldots,n_1,n_1+1,\ldots,n_1+n_2)$. Without loss of generality, we can restrict to reasoning about swaps of two agents with respect to the identity permutation, since agents within clusters can be arbitrarily re-labeled.
		
		At a high level, we seek to show that the principal does not benefit from swapping individual across clusters. We ignore the effect of swapping individuals within clusters, since we already know that Cluster First Best does this in the optimal way, and again, agents within clusters are i.i.d. and can  be relabeled. {A similar line of reasoning tells us that, once we have created an inter-cluster match, we do not benefit from swapping an agent back to her original cluster, or matching her with another agent of the same cluster as her current match.} Thus, it is sufficient to reason about the swaps that have created inter-cluster matches.
		
		Let $N$ be the number of swaps that created inter-cluster matches, and $\sigma^{N}$ be the permutation which created the $N$th inter-cluster match. We denote the identities of the  $N$th pair of swapped individuals (after $\swapsig{N}$) to be $\indexswapsig{i_1}{N}, \indexswapsig{i_2}{N}$, respectively, where the subscript under the $i$  represents the cluster that each agent finds herself in \emph{after} the swap. Additionally, let $\indexswapsig{m(\swappedind{1})}{N},\indexswapsig{m(\swappedind{2})}{N}$ denote the respective identities of the partners of the two swapped agents from the $N$th swap. Finally, without loss of generality we assume that {swaps are made in a {nested} fashion.} This is simply done for ease of notation, as the individuals are in each cluster are symmetric and can be relabeled. Let $N_{\max}$ denote the maximum number of swaps that can create inter-cluster matches with respect to the identity permutation.
		
		{\begin{example}
				Suppose $n=4$ with agents A, B, C, and D. Additionally, suppose A and B are in Cluster 1, and C and D are in Cluster 2, and we've performed $N = 1$ inter-cluster swaps, with $\swapsig{1} = (A \succcurlyeq C \succcurlyeq B \succcurlyeq D)$ and $\theta = (1,1,0,0)$. Then, $\swapsig{1}(\theta) = (1, 0, 1, 0)$.  Further, 
				\begin{itemize}
					\item \emph{Post-swap}: $\indexswapsig{\swappedind{1}}{1} = C$ $\indexswapsig{\swappedind{2}}{1} = B$; and $\indexswapsig{\swappedind{1}}{1}(\theta) = 0$ and $\indexswapsig{\swappedind{2}}{1}(\theta) = 1$; $\indexswapsig{m(\swappedind{1})}{1} = A$, $\indexswapsig{m(\swappedind{2})}{1} = D$; and $\indexswapsig{m(\swappedind{1})}{1}(\theta) = 1$, $\indexswapsig{m(\swappedind{2})}{1}(\theta) = 0$
					\item \emph{Pre-swap}: $\indexswapsig{\swappedind{1}}{0} = B$ $\indexswapsig{\swappedind{2}}{0} = C$; and $\indexswapsig{\swappedind{1}}{0}(\theta) = 1$ and $\indexswapsig{\swappedind{2}}{0}(\theta) = 0$; $\indexswapsig{m(\swappedind{1})}{0} = A$, $\indexswapsig{m(\swappedind{2})}{0} = D$; and $\indexswapsig{m(\swappedind{1})}{0}(\theta) = 1$, $\indexswapsig{m(\swappedind{2})}{0}(\theta) = 0$.
				\end{itemize}
				Note in particular that the identities (and consequently, types) of the partners pre- and post-swap have not changed.
			\end{example}}
			
			Recall that for fixed $\sigma, \theta, i \in [n-1]$, we define $\bar{v}(\sig{i}(\theta))$ to be the difference between two consecutive agent types under ordering $\sigma$. That is, $\bar{v}(\sig{i}(\theta)) \triangleq \sig{i}(\theta) - \sig{i+1}(\theta)$. We define $\mathcal{I}_{\sigma^N}$ recursively, as follows. For all $0 \leq N \leq N_{\max}$, $N$ odd, $\indexset{\sigma}{N}$ is such that:
			\begin{align*}
			\sum_{i\in\indexset{\sigma}{N}}\bar{v}(\indexswapsig{i}{N}(\theta)) = \sum_{i\in\indexset{\sigma}{N-1}}\bar{v}(\indexswapsig{i}{N-1}(\theta))+\indexswapsig{\swappedind{1}}{N}(\theta)-\indexswapsig{\swappedind{2}}{N}(\theta).
			\end{align*}
			For all $2 \leq N \leq N_{\max}$, $N$ even, $\indexset{\sigma}{N}$ is such that:
			\begin{align*}
			\sum_{i\in\indexset{\sigma}{N}}\bar{v}(\indexswapsig{i}{N}(\theta)) = \sum_{i\in\indexset{\sigma}{N-1}}\bar{v}(\indexswapsig{i}{N-1}(\theta))-\indexswapsig{m(\swappedind{1})}{N}(\theta)+\indexswapsig{m(\swappedind{2})}{N}(\theta).
			\end{align*}
			For $N=0$, $\mathcal{I}_{\sigma^0} = \emptyset$, which implies  $\sum_{i\in\mathcal{I}_{\sigma^0}}\bar{v}(\indexswapsig{i}{0}(\theta)) = 0$.
			
			\begin{lemma}\label{lem:index-set}
				For the above choice of $\indexset{\sigma}{N}$, and for all $0 \leq N \leq N_{\max}$, 
				\begin{align}\label{eq:induc-ineq}
				m_{00}(\sigma(\theta)) - \sum_{i\in\indexset{\sigma}{N}}\bar{v}(\indexswapsig{i}{N}(\theta))\geq \frac{\I_2(\theta)-\J_2(\theta)}{2} \quad \forall \; \theta\in \typical{2}.
				\end{align}
			\end{lemma}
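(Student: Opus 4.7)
The plan is to proceed by strong induction on $N$. For the base case $N = 0$, the identity permutation $\swapsig{0}$ pairs agents only within their original clusters. In Case 3, every $\theta \in \typical{2}$ satisfies $\J_1(\theta) > n_1/2$ and $\J_2(\theta) < n_2/2$, so within-cluster First Best produces no `0-0' matches in cluster 1 (excess type 1) and exactly $\frac{\I_2(\theta) - \J_2(\theta)}{2}$ such matches in cluster 2. Since $\indexset{\sigma}{0} = \emptyset$, the desired inequality holds with equality.

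For the inductive step, assume the claim for $\swapsig{N-1}$. By the recursive definition of $\indexset{\sigma}{N}$, the increment to $\sum_{i\in\indexset{\sigma}{N}} \bar{v}(\indexswapsig{i}{N}(\theta))$ over the previous sum is exactly $\indexswapsig{\swappedind{1}}{N}(\theta) - \indexswapsig{\swappedind{2}}{N}(\theta)$ when $N$ is odd and $-\indexswapsig{m(\swappedind{1})}{N}(\theta) + \indexswapsig{m(\swappedind{2})}{N}(\theta)$ when $N$ is even. Combining this with the inductive hypothesis reduces the step to showing that the $N$th swap raises $m_{00}$ by at least this parity-dependent increment.

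To establish that, I would use the fact that the $N$th swap only alters the two matches involving the swapped agents and their (fixed) partners, all other matches being untouched by the nested-swap convention. Then I would enumerate the four possible binary-type configurations of the swapped pair (with the partners' types held fixed) and verify in each case that the net change in `0-0' matches dominates the recursive increment. The parity-dependent form is natural: an odd swap introduces a new cross-cluster pairing and is therefore bounded using the types of the newly placed agents, while an even swap propagates the effect along an already-open chain of cross-cluster matches and is bounded using the partners' types.

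The main obstacle will be maintaining the nested-swap bookkeeping through the case analysis. Concretely, one must verify that the parity-dependent update precisely captures the evolution of inter-cluster chains, so that each even swap correctly ``closes'' the chain opened by the preceding odd swap, and that no type configuration of the swapped pair causes the gain in $m_{00}$ to fall short of the recursive increment. Carrying this through rigorously for all $N \leq N_{\max}$ is the delicate portion of the argument, but once it is in place the induction yields \eqref{eq:induc-ineq}, and hence the dual-feasibility identity \eqref{eq:to-show} needed to close Lemma~\ref{lem:fixed-num}.
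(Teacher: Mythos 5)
Your base case and the recursive bookkeeping for $\indexset{\sigma}{N}$ match the paper, but the pivotal reduction in your inductive step — ``the $N$th swap raises $m_{00}$ by at least the parity-dependent increment'' — is false, and this is exactly where the real work lies. Concretely, take an odd swap whose post-swap types are $\indexswapsig{\swappedind{1}}{N}(\theta)=1$, $\indexswapsig{\swappedind{2}}{N}(\theta)=0$ (increment $+1$) but whose partners are both of type $1$: then the swap leaves $m_{00}$ unchanged (it merely converts a `1-1' and a `1-0' match into a `1-0' and a `1-1' match), so the gain in $m_{00}$ falls short of the increment and your induction cannot close. The same shortfall occurs in several other configurations (e.g.\ both partners of type $0$, or swapped agents of equal type while the partner-based even-step increment is $+1$). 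The paper's proof does not compare the change in $m_{00}$ to the increment in these cases; instead it shows that the \emph{pre-swap} quantity already had slack, i.e.\ $m_{00}(\swapsig{N})-\sum_{i\in\indexset{\sigma}{N}}\bar{v}(\indexswapsig{i}{N}(\theta))\geq \frac{\I_2(\theta)-\J_2(\theta)}{2}+1$ (or $+2$), and it obtains this strengthened bound by reinterpreting the pre-swap ordering as an instance on a reduced agent set — e.g.\ cluster 2 with only $\J_2(\theta)-2$ type 1 agents available to absorb type 0 agents (because two of them are already matched to each other), or cluster 1 carrying a forced `0-0' match — and then invoking the inductive hypothesis or the first-best count on that reduced instance. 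Without this ``either the swap gains enough, or the previous configuration was already large'' dichotomy and the accompanying reduced-instance argument, the case analysis you propose will fail in precisely the configurations above, so as written the plan has a genuine gap.
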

			
			{The proof of Lemma~\ref{lem:index-set} involves a lengthy induction and case-by-case analysis. We thus defer it to Appendix~\ref{ssex:app-b}.}
			
			Using Lemma~\ref{lem:fixed-num}, we now have the lower bound for the case where $p_1 > 1/2 > p_2$, in our restricted space of realizations.

			We construct a feasible solution to the dual using our $\{y_{\sigma i}\}$ construction. Recall our dual feasibility constraints:
			\begin{align*}
			&z_{\theta}+\sum_{i=1}^{n-1}\lambda(\theta)\bar{v}(\sig{i}(\theta))y_{\sigma i} \leq \lambda(\theta)m_{00}(\sigma(\theta)), \hspace{0.5cm}z_\theta \text{ free}, y_{\sigma i} \geq 0, \hspace{0.5cm} \forall \; \theta\in\bTheta ,\sigma\in\bSigma
			\end{align*}
			We already showed that, for all $\theta \in \typical{2}$ (i.e. such that $\J_1(\theta) > n_1/2,\J_2(\theta) < n_2/2$), $z_\theta = \lambda(\theta)\cdot\frac{\I_2-\J_2}{2}$ is feasible. Thus, it remains to consider $\theta \in \bartypical{2}$. For such $\theta$, given the $\left\{y_{\sigma i}\right\}$ that we constructed, we have:
			\begin{align*}
			z_\theta &= \min_{\sigma}\lambda(\theta)\left(m_{00}(\sigma(\theta))-\sum_{i=1}^{n-1}\bar{v}(\sig{i}(\theta))y_{\sigma i}\right)\\
			&=\min_{\sigma}\lambda(\theta)\left(m_{00}(\sigma(\theta)))-\sum_{i\in\mathcal{I}_\sigma}\bar{v}(\sig{i}(\theta))\right)\\
			&\geq -n\lambda(\theta),
			\end{align*}
			where we used the crude bounds $m_{00}(\sigma(\theta)) \geq 0$, and $ \sum_{i\in\mathcal{I}_{\sigma}}\bar{v}(\sig{i}(\theta)) \leq n$. Setting $z_\theta = -n\lambda(\theta)$ for $\theta \in \bartypical{2}$ is thus feasible.
			
			By weak duality:
			\begin{align}
			OPT &\geq \sum_{\theta\in \typical{2}}z_\theta -n\sum_{\theta\in \bartypical{2}}\lambda(\theta)\nonumber\\
			&\geq \sum_{\theta\in \typical{2}}z_\theta -\frac{2(n_1+n_2)}{n_1^2+n_2^2}\nonumber\\
			&= \sum_{\theta\in \typical{2}}\lambda(\theta)\cdot \frac{\I_2(\theta)-\J_2(\theta)}{2}-o_{n_1+n_2}(1)\nonumber\\
			&= \sum_{\theta\in \typical{2}}\lambda(\theta)FB_C(\theta)-o_{n_1+n_2}(1).\label{eq:lb-fb-c}\tag{LB2}
			\end{align}
			where the second inequality follows from Hoeffding's inequality, as was done to derive the upper bound.
			
			Putting the upper and lower bounds (\ref{eq:ub-fb-c}) and (\ref{eq:lb-fb-c}) together, we obtain our final result.
\end{proofof}\\

	\begin{proofof}{Corollary~\ref{cor:k-clusters}}
		Given $p_1,p_2,\ldots,p_K$, we define $A_{\varepsilon}^{(n,K)}$ analogously to the $K=2$ cluster case. That is:
		\begin{align}
		\typical{K} \triangleq \left\{\theta: |\J_1(\theta)-n_1p_1|\leq \epsilon_1n_1,|\J_2(\theta)-n_2p_2|\leq \epsilon_2n_2,\ldots,|\J_k(\theta)-n_Kp_K|\leq \epsilon_Kn_K \right\}.
		\end{align}
		As before, we will require that $\typical{K}$ be contained in the same orthant as $(n_1p_1,\ldots,n_Kp_K)$, i.e. $\epsilon_i \leq |p_i-1/2| \; \forall \; i\in[K]$.
		
		We only prove the corollary for the cases where there exists at least one cluster $i$ such that $p_i < 1/2$, and one cluster $j$ such that $p_j > 1/2$. Otherwise, the proof is identical to that of Proposition~\ref{prop:fb=fb-c}.
		
		Without loss of generality, we assume $p_1 > p_2 > \ldots > p_K$. Since we are only interested in the non-trivial cases, by the above we have that necessarily $p_1 > 1/2$.
		
		%Let $k^* = \arg\max_{i}\{p_i \mid p_i < 1/2\}$.

		{	
			\begin{lemma}
				For all $\theta \in \typical{K}$, there exists $y_{\sigma i}$ such that $z_\theta = \lambda(\theta)FB_C(\theta)$ is dual feasible. 
			\end{lemma}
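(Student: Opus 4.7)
The plan is to prove this lemma by induction on the number of clusters $K$, with the base case $K=2$ supplied by Lemma~\ref{lem:fixed-num}. For the inductive step, the goal is to construct binary indicator variables $y_{\sigma i} \in \{0,1\}$ such that for every $\sigma \in \bSigma$ and every $\theta \in \typical{K}$,
$$ m_{00}(\sigma(\theta)) - \sum_{i \in \mathcal{I}_\sigma} \bar{v}(\sig{i}(\theta)) \;\geq\; FB_C(\theta) \;=\; \sum_{k:\, p_k < 1/2} \frac{\I_k(\theta)-\J_k(\theta)}{2},$$
where $\mathcal{I}_\sigma = \{i : y_{\sigma i} = 1\}$. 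The key structural fact I would exploit is that the typical set $\typical{K}$ is entirely contained in the orthant containing $(n_1 p_1, \ldots, n_K p_K)$, so each cluster with $p_k > 1/2$ has strictly more type 1 than type 0 agents (and vice versa for $p_k < 1/2$). Consequently $FB_C(\theta)$ additively decomposes across the clusters with $p_k < 1/2$, which is what allows a clean induction.

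Following the $K=2$ template, I would write any permutation $\sigma$ as a composition of pairwise swaps applied to the identity permutation $\sigma^0 = (1,\ldots,n)$. Three observations reduce the complexity: (i) intra-cluster swaps leave the inequality unchanged by anonymity and can be ignored; (ii) after an inter-cluster match has been created by a swap, further swaps involving either endpoint with an agent of the original partner's cluster cannot improve $m_{00}(\sigma(\theta))$ over Cluster First Best and hence contribute trivially; and (iii) inter-cluster swaps can be ordered in the nested fashion used in Lemma~\ref{lem:index-set}. Given these reductions, I would define $\mathcal{I}_\sigma$ via the same two-phase recursion as in the $K=2$ proof, applied separately to each ordered pair of clusters $(k,k')$ with $p_k > 1/2 > p_{k'}$ that participates in some swap producing $\sigma$.

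The main obstacle is verifying feasibility when swaps occur between many distinct cluster pairs simultaneously, so that multiple recursive contributions compose into $\mathcal{I}_\sigma$. The cleanest way around this is a two-level induction: first, collapse the clusters into two super-clusters according to $p_k \gtrless 1/2$, apply the $K=2$ result to obtain a first decomposition, and then recursively apply the inductive hypothesis within each super-cluster to refine the $y_{\sigma i}$. Since $FB_C$ decomposes additively across clusters with $p_k < 1/2$, and the per-swap slack inequalities from the $K=2$ proof are preserved under disjoint cluster pairs, summing the certificates yields the bound above. Combined with the free choice $z_\theta = \lambda(\theta)\,FB_C(\theta)$ on $\typical{K}$, this produces a dual feasible solution on the typical set, as required. (The extension to all of $\bTheta$ at the cost of $o(1)$ via Hoeffding's inequality is then identical to the $K=2$ argument and yields Corollary~\ref{cor:k-clusters}.)
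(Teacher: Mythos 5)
Your argument is correct, but it is packaged differently from the paper's. The paper proves the lemma by a contrapositive induction on the number of clusters: assuming dual infeasibility of $z_\theta=\lambda(\theta)FB_C(\theta)$ for $K^*+1$ clusters, it merges the $(K^*+1)$st cluster with another cluster on the same side of $1/2$ (cluster $K^*$ if $p_{K^*}<1/2$, cluster $1$ otherwise), checks that the merged realization $\tilde\theta$ stays in $\typical{K^*}$ and that the positive parts aggregate, and concludes infeasibility for $K^*$ clusters, contradicting the $K=2$ base case; the grouping of all clusters into two super-clusters according to $p_k\gtrless 1/2$ appears only as a follow-up remark used to get binary $y_{\sigma i}$. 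You instead make that grouping the proof itself: collapse to two super-clusters, observe that on $\typical{K}$ every weak cluster has an excess of type~$0$ agents so $FB_C(\theta)=\sum_{k:p_k<1/2}\tfrac{\I_k(\theta)-\J_k(\theta)}{2}$ coincides with the two-super-cluster first best, and invoke the $K=2$ swap certificate. This is arguably cleaner, and it is sound, because (i) $\typical{K}$ maps into the orthant $\J_{\text{super}1}>n_{\text{super}1}/2$, $\J_{\text{super}2}<n_{\text{super}2}/2$, which is all that Lemma~\ref{lem:index-set} actually uses, and (ii) the heterogeneity of priors inside a super-cluster is immaterial, since after dividing the dual constraint by $\lambda(\theta)>0$ feasibility is a purely combinatorial per-$(\theta,\sigma)$ condition. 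Two remarks: your worry about composing certificates across many cluster pairs, and the proposed ``recursive refinement within each super-cluster,'' are unnecessary once the super-cluster certificate is in hand — the right-hand side already equals $FB_C(\theta)$ on $\typical{K}$, so no second-level induction is needed; and if you do take the grouping route you should state explicitly the two checks (i)–(ii) above, since they replace the cluster-merging bookkeeping that the paper's induction carries out.
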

			
			\begin{proof}
				We prove the claim by induction.
				
				First recall that, for $K \geq 2$
				\begin{align}\label{eq:fbc-kclusters}
				FB_C(\theta) =  \left(\frac{\I_1(\theta)-\J_1(\theta)}{2}\right)^+ + \ldots + \left(\frac{\I_K(\theta)-\J_K(\theta)}{2}\right)^+ 
				\end{align} 
				
				For the rest of this proof, we suppress the dependence on $\theta$ for ease of notation.\\
				
				\textbf{Base case:} Already shown for $K = 2$.\\
				
				\textbf{Inductive step:} We will proceed via contrapositive. Suppose the statement is false for $K^*+1$ clusters, where $p_{K^*+1} < 1/2$ by our non-triviality assumption. 
				
				Since $p_{K^*+1} < 1/2$, for $\theta\in\typical{K^*+1}$, we have $\I_{K^*+1} > \J_{K^*+1}$ by definition of the typical set. Equation~\eqref{eq:fbc-kclusters} thus becomes:
				\begin{align}
				FB_C(\theta) =  \left(\frac{\I_1-\J_1}{2}\right)^+ + \ldots + \left(\frac{\I_{K^*+1}-\J_{K^*+1}}{2}\right)
				\end{align}

				If there does not exist $\{y_{\sigma i}\}$ such that \emph{Cluster First Best} is dual feasible, this means that, for all $y_{\sigma i} \geq 0$, there exists a pair $(\theta,\sigma)$, $\theta \in \typical{K^*+1}$, such that
				\begin{align}
				m_{00}(\sigma(\theta))- \sum_i \bar{v}(\sig{i}(\theta))y_{\sigma i} &< \left(\frac{\I_1-\J_1}{2}\right)^+ + \ldots + \left(\frac{\I_{K^*}-\J_{K^*}}{2}\right)^+ + \left(\frac{\I_{K^*+1}-\J_{K^*+1}}{2}\right) \label{eq:k-clusters-contra}
				\end{align}
				
				The goal is to show that the statement must be false for $K^*$ clusters, i.e., for all $y_{\sigma i} \geq 0$, there exists $(\theta,\sigma)$, $\theta \in \typical{K^*}$, such that
				\begin{align*}
				m_{00}(\sigma(\theta))- \sum_i \bar{v}(\sig{i}(\theta))y_{\sigma i} &< \left(\frac{\I_1-\J_1}{2}\right)^+ + \ldots + \left(\frac{\I_{K^*}-\J_{K^*}}{2}\right)^+.
				\end{align*}
				
				We consider two cases:
				\begin{enumerate}
					\item $p_{K^*} < 1/2$: As for Cluster $K^*+1$, for all $\theta \in \typical{K^*+1}$, $\I_{K^*} > \J_{K^*}$ by construction of the typical set.

					In this case, from inequality (\ref{eq:k-clusters-contra}) we obtain:
					\begin{align}
					m_{00}(\sigma(\theta)) - \sum_i \bar{v}(\sig{i}(\theta))y_{\sigma i} &< \left(\frac{\I_1-\J_1}{2}\right)^+ + \ldots + \frac{\I_{K^*}-\J_{K^*}}{2} + \frac{\I_{K^*+1}-\J_{K^*+1}}{2}\nonumber\\
					&= \left(\frac{\I_1-\J_1}{2}\right)^+ + \ldots + \frac{(\I_{K^*}+\I_{K^*+1})-(\J_{K^*}+\J_{K^*+1})}{2}\label{eq:rhs}
					\end{align}

					Consider the right-hand side of inequality (\ref{eq:rhs}). Note that this corresponds exactly to the scenario with $K^*$ clusters with $\tilde{n}_{K^*}$ agents, where $\tilde{n}_{K^*} = n_{K^*}+n_{K^*+1}$. Let $\tilde{\I}_{K^*}, \tilde{\J}_{K^*}$ be the number of type 0 and type 1 agents, respectively, in the $K^*$th cluster. Let $\tilde{\theta}$ be the same realization as $\theta$, with the sole distinction being that the $K^*$th cluster in $\tilde{\theta}$ aggregates the $K^*$-th and $(K^*+1)$-st clusters of realization $\theta$. Thus, $\tilde{\I}_{K^*} = \I_{K^*}+\I_{K^*+1}, \tilde{\J}_{K^*} = \J_{K^*}+\J_{K^*+1}$. Additionally, since $\theta \in \typical{K^*+1}$, we have:
					\begin{align*}
					\I_{K^*} > n_{K^*}/2, \I_{K^*+1} > n_{K^*+1}/2
					\implies \tilde{\I}_{K^*} > \tilde{\J}_{K^*} 
					\implies \tilde{\theta} \in \typical{K^*}
					\end{align*}
					
					Thus, by inequality (\ref{eq:rhs}), there exists $\sigma \in \bSigma $ such that, for any $y_{\sigma i} \geq 0$, for this specific realization $\tilde{\theta}$: 
					\begin{align*}
					m_{00}(\sigma(\theta))- \sum_i \bar{v}(\sig{i}(\theta))y_{\sigma i} < \left(\frac{\I_1-\J_1}{2}\right)^+ + \ldots + \left(\frac{\tilde{\I}_{K^*}-\tilde{\J}_{K^*}}{2}\right)^+,
					\end{align*}
					which implies that $z_{\theta} = \lambda(\theta)\left[\sum_{k=1}^{K^*}\left(\frac{\I_k(\theta)-\J_k(\theta)}{2}\right)^+\right]$ is not feasible in the $K^*$-cluster case. 
					
					\item $p_{K^*} > 1/2$: Since $\theta \in \typical{K^*+1}$, in this case $\I_{K^*} < \J_{K^*}$.
					
					In this case, by inequality (\ref{eq:k-clusters-contra}):
					\begin{align*}
					m_{00}(\sigma(\theta)) - \sum_i \bar{v}(\sig{i}(\theta))y_{\sigma i} &= \left(\frac{\I_1-\J_1}{2}\right)^+ +\left(\frac{\I_{K^*}-\J_{K^*}}{2}\right)^+ + \frac{\I_{K^*+1}-\J_{K^*+1}}{2}\nonumber\\
					&= \left(\frac{(\I_1+\I_{K^*})-(\J_1+\J_{K^*})}{2}\right)^+ + \frac{\I_{K^*+1}-\J_{K^*+1}}{2} 			\end{align*}
					where the last equality follows from the fact that $\I_1 < n_1/2, \I_{K^*} < n_{K^*}/2 \implies \I_1+\I_{K^*} < \J_1+\J_{K^*}$
				\end{enumerate}
				Applying a similar argument as the one in Case 1, we find that, if we are in the $K^*$-cluster case and fix $y_{\sigma i} \geq 0$, for $\theta$ such that $\tilde{\I}_{K^*} = \I_{1}+\I_{K^*},\tilde{\J}_{K^*}=\J_1+\J_K^*$, $z_\theta = \lambda(\theta) \left[\sum_{k=1}^{K^*} \left(\frac{\I_k-\J_k}{2}\right)^+\right]$ is not feasible.
			\end{proof}
		}
		
		This inductive argument gives us many properties about the $K > 2$ setting for free, namely that, for this set of $z_{\theta}, \theta \in \typical{2}$, one can construct a $y_{\sigma i} \in \{0,1\}, \forall \; \sigma \in \bSigma , i \in [n-1]$, such that $(z_{\theta}, y_{\sigma i})$ is dual feasible.  The argument for this is immediate, once one realizes that the $K > 2$ case can be reduced to the $K = 2$ case, by grouping all clusters such that $p_k > 1/2$, and $p_k < 1/2$, respectively.
		
		We derive the upper and lower bounds in an identical fashion to the $K = 2$ case, and obtain our main result:
		\begin{align*}
		\sum_{\theta\in \typical{K}}\lambda(\theta)FB_C(\theta) - o_{n_1+\ldots+n_K}(1) \leq OPT \leq \sum_{\theta\in \typical{K}}\lambda(\theta)FB_C(\theta) + o_{n_1+\ldots+n_K}(1)
		\end{align*} 
	\end{proofof}

	\section{Proof of Lemma~\ref{lem:index-set}}\label{ssex:app-b}

	\begin{proof}
To show the claim, we will do an induction on $0 \leq N \leq N_{\max}$, the number of swaps that have created inter-cluster matches. {At a high level, what we will show is that, for each $\swapsig{N},$ $\theta \in \typical{2}$, one of two events occurs: either the $(N+1)$st swap made the number of `0-0' matches ``large,'' in some sense, relative to $FB_C(\theta)$; or the $(N+1)$st swap decreased the number of `0-0' matches, but the number of `0-0' matches after the $N$th swap was already ``large'' relative to $FB_C(\theta)$.}

	{Before we proceed with the induction, we establish the following facts which we will use throughout. 
	 Recall that we use $\indexswapsig{\swappedind{1}}{N}(\theta),\indexswapsig{\swappedind{2}}{N}(\theta)$ to denote the types of the two agents that were swapped for the $N$th swap, where the subscript beneath the $i$ refers to the cluster in which the agent found herself \emph{after} the swap.

	\begin{prop}\label{prop:same-type-same-num}
		If $\indexswapsig{\swappedind{1}}{N+1}(\theta) = \indexswapsig{\swappedind{2}}{N+1}(\theta)$, then $m_{00}\left(\swapsig{N+1}(\theta)\right) = m_{00}\left(\swapsig{N}(\theta)\right)$.
	\end{prop}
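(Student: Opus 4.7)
The plan is to prove this by a direct inspection of what a single position-swap does to the type-sequence induced by an ordering. The key observation is that $\numbadteams{\sigma(\theta)}$ depends on $\sigma$ and $\theta$ only through the type-sequence $\sigma(\theta) = (\sig{1}(\theta),\ldots,\sig{n}(\theta))$, since agents pair up sequentially along the ordering and a `0-0' match is determined purely by the types at two consecutive positions. It therefore suffices to show that under the hypothesis $\indexswapsig{\swappedind{1}}{N+1}(\theta) = \indexswapsig{\swappedind{2}}{N+1}(\theta)$, the type-sequences $\swapsig{N+1}(\theta)$ and $\swapsig{N}(\theta)$ agree coordinate-by-coordinate.

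First I would unpack the definition of the swap that produces $\swapsig{N+1}$ from $\swapsig{N}$: the two orderings differ only in that the agents $\indexswapsig{\swappedind{1}}{N+1}$ and $\indexswapsig{\swappedind{2}}{N+1}$ exchange positions, while every other agent retains her position. Hence the types at positions outside $\{\swappedind{1},\swappedind{2}\}$ are identical in $\swapsig{N}(\theta)$ and $\swapsig{N+1}(\theta)$. For the two swapped positions, the hypothesis that the exchanged agents have equal types in $\theta$ means that swapping them leaves the types at those positions unchanged as well.

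Putting these two facts together yields $\swapsig{N+1}(\theta) = \swapsig{N}(\theta)$ as type-sequences, and consequently the induced matchings produce the same multiset of typed pairs, giving $\numbadteams{\swapsig{N+1}(\theta)} = \numbadteams{\swapsig{N}(\theta)}$. There is no real obstacle here; the proposition is essentially a definitional remark that same-type swaps are invisible to any functional depending only on the realized type-profile along the ordering. In the broader inductive argument for Lemma~\ref{lem:index-set}, this will be used to reduce the induction to swaps in which the two exchanged agents have genuinely differing types, since same-type swaps leave both $\numbadteams{\cdot}$ and every relevant $\bar{v}(\indexswapsig{i}{\cdot}(\theta))$ term unchanged and therefore preserve the inductive hypothesis automatically.
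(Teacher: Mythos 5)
Your argument is correct and matches the paper's intent exactly: the paper omits the proof as obvious, appealing to the same observation that swapping two same-type agents has no effect on the objective, and your write-up simply spells out that the induced type-sequence (and hence the multiset of typed matches) is unchanged. Nothing is missing.
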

	We omit the proof of this obvious fact. At a high level, it should be clear that swapping agents of the same type has no effect on the objective.
	
	\begin{prop}\label{prop:same-partner-same-num}
		If $\indexswapsig{\swappedind{1}}{N+1}(\theta) \neq \indexswapsig{\swappedind{2}}{N+1}(\theta)$, but $\indexswapsig{m(\swappedind{1})}{N+1}(\theta) = \indexswapsig{m(\swappedind{2})}{N+1}(\theta)$, then $m_{00}\left(\swapsig{N+1}(\theta)\right) \geq m_{00}\left(\swapsig{N}(\theta)\right)$.
	\end{prop}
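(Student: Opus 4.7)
The plan is to reduce the claim to a short case analysis on whether the two swapped positions belong to a common match pair, and then show that under the two hypotheses the transposition actually preserves the multiset of match-type compositions, so the inequality in fact holds with equality.

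First, I fix notation: let $q_1, q_2$ denote the two positions being transposed between $\swapsig{N}$ and $\swapsig{N+1}$, and let $p_1, p_2$ denote their respective partner positions under the pairing of consecutive ranks. Crucially, $p_1, p_2$ depend only on the pairing structure, which is invariant across the swap. Since a transposition only alters which agents occupy $q_1$ and $q_2$, the agents—and hence the types—at every position outside $\{q_1, q_2\}$ are identical pre- and post-swap.

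Next, I rule out the degenerate case $\{p_1, p_2\} \cap \{q_1, q_2\} \neq \emptyset$. Since no position is its own partner, this forces $p_1 = q_2$ and $p_2 = q_1$, i.e., the two swapped agents form the same match pair. But then the post-swap partner of $\indexswapsig{\swappedind{1}}{N+1}$ is literally $\indexswapsig{\swappedind{2}}{N+1}$, so the second hypothesis $\indexswapsig{m(\swappedind{1})}{N+1}(\theta) = \indexswapsig{m(\swappedind{2})}{N+1}(\theta)$ would read $\indexswapsig{\swappedind{2}}{N+1}(\theta) = \indexswapsig{\swappedind{1}}{N+1}(\theta)$, directly contradicting the first hypothesis $\indexswapsig{\swappedind{1}}{N+1}(\theta) \neq \indexswapsig{\swappedind{2}}{N+1}(\theta)$. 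Hence this subcase is vacuous.

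In the remaining subcase $\{p_1, p_2\} \cap \{q_1, q_2\} = \emptyset$, write $b$ for the common value $\indexswapsig{m(\swappedind{1})}{N+1}(\theta) = \indexswapsig{m(\swappedind{2})}{N+1}(\theta)$. The two affected match pairs are $\{q_1, p_1\}$ and $\{q_2, p_2\}$; their post-swap type compositions are $\{\indexswapsig{\swappedind{1}}{N+1}(\theta), b\}$ and $\{\indexswapsig{\swappedind{2}}{N+1}(\theta), b\}$. Pre-swap, since the transposition exchanges the occupants of $q_1$ and $q_2$ while leaving $p_1, p_2$ untouched, the types at $q_1$ and $q_2$ are $\indexswapsig{\swappedind{2}}{N+1}(\theta)$ and $\indexswapsig{\swappedind{1}}{N+1}(\theta)$ respectively, so the pre-swap pair compositions are $\{\indexswapsig{\swappedind{2}}{N+1}(\theta), b\}$ and $\{\indexswapsig{\swappedind{1}}{N+1}(\theta), b\}$—the same unordered multiset. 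All other match pairs are untouched, so in particular $m_{00}(\swapsig{N+1}(\theta)) = m_{00}(\swapsig{N}(\theta))$, which implies the claimed inequality. The main obstacle is notational rather than combinatorial: one must carefully separate the role of a position from the identity of the agent occupying it, after which the argument reduces to a one-line multiset comparison.
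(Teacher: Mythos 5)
Your proposal is correct and takes essentially the same route as the paper: both arguments observe that the swap affects only the two pairs containing the swapped agents, and since the partners (whose positions and types are untouched) share a common type, the swap merely exchanges the two pair compositions, leaving the count of `0-0' matches unchanged. Indeed you establish equality (as the paper's own proof does, via its two-case analysis on the common partner type, and as it later uses), and your explicit dismissal of the degenerate case where the swapped agents are each other's partners is a harmless extra precaution that the paper's setup rules out implicitly.
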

	
	\begin{proof}
		We assume without loss of generality that $\indexswapsig{\swappedind{1}}{N+1}(\theta) = 1$, $\indexswapsig{\swappedind{2}}{N+1}(\theta) = 0$. This implies that, after the $N$th swap, $\indexswapsig{\swappedind{1}}{N}(\theta) = 0$, $\indexswapsig{\swappedind{2}}{N}(\theta) = 1$.
		We consider two cases:
		\begin{enumerate}
		\item $\indexswapsig{m(\swappedind{1})}{N+1}(\theta) = \indexswapsig{m(\swappedind{2})}{N+1}(\theta) = 0$:  The swap destroyed a `0-0' match in Cluster 1, but created one in Cluster 2.
		\item $\indexswapsig{m(\swappedind{1})}{N+1}(\theta) = \indexswapsig{m(\swappedind{2})}{N+1}(\theta) = 1$: There were no `0-0' matches after the $N$th swap, and none were created in the $(N+1)$-st swap.
		\end{enumerate}
	\end{proof}
	
	\begin{prop}\label{prop:at-most-one-less}
		$$ m_{00}\left(\swapsig{N+1}\right) \geq m_{00}\left(\swapsig{N}\right) - 1\hspace{0.5cm} $$  
	\end{prop}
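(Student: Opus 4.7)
The plan is to argue via a direct case analysis on the effect of a single swap. A swap exchanges the positions of two agents $a$ and $b$ in the announced ordering, and therefore alters at most two teams---the team originally containing $a$ (and its partner $m(a)$) and the team originally containing $b$ (and its partner $m(b)$). First I would dispose of the degenerate case in which $a$ and $b$ were already teammates (i.e., $m(a) = b$): here the team is simply reshuffled internally, and the number of `0-0' matches is unchanged.

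In the main case, the four agents $a, m(a), b, m(b)$ are distinct, and the pair of teams $\{(a, m(a)), (b, m(b))\}$ is replaced by $\{(b, m(a)), (a, m(b))\}$. Since only two teams change, the a priori decrease in the count of `0-0' matches is bounded by $2$. The crux of the proof is to rule out a decrease of exactly $2$. Such a decrease would require both $(a, m(a))$ and $(b, m(b))$ to be `0-0' pre-swap, while neither $(b, m(a))$ nor $(a, m(b))$ is `0-0' post-swap. But the first condition forces $\theta_a = \theta_{m(a)} = \theta_b = \theta_{m(b)} = 0$, which makes \emph{both} post-swap teams `0-0' as well---a contradiction. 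Hence the net decrease in $m_{00}$ is at most $1$, which is exactly the claim.

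I do not anticipate any serious obstacle here: the argument is essentially a pigeonhole / counting statement on a 4-element subset of agents, and it mirrors the spirit of Propositions~\ref{prop:same-type-same-num} and~\ref{prop:same-partner-same-num}, which handle related sub-cases directly. The only care needed is to formalize precisely what the ``swap'' $\swapsig{N+1}$ does to the sequence of teams relative to $\swapsig{N}$ in the author's notation (in particular, localizing the change to the two affected teams), and to ensure the degenerate teammate case is handled separately so that the 4-agent type constraint actually applies.
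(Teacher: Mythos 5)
Your argument is correct and is essentially the paper's own proof: both localize the swap's effect to the four agents in the two affected teams, and rule out a loss of two `0-0' matches by noting this would force all four agents to be type 0, in which case $m_{00}$ cannot change at all (the paper invokes Proposition~\ref{prop:same-type-same-num} for this last step, whereas you state it directly). Your explicit handling of the degenerate teammate case is a harmless extra refinement.
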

	
	\begin{proof}
	 Suppose two `0-0' matches were destroyed. Note that a swap affects exactly 4 agents (the two agents swapped and their respective matches). Thus, it had to have been that all 4 swapped agents were type 0 agents. However, if all 4 agents were type 0 agents, then the two swapped agents had the same type, which, by Proposition~\ref{prop:same-type-same-num}, implies that $m_{00}\left(\swapsig{N+1}(\theta)\right) = m_{00}\left(\swapsig{N}(\theta)\right)$. This contradicts the fact that two `0-0' matches were destroyed, and thus completes the proof.
	\end{proof}
	
	\begin{prop}\label{prop:exactly-one-more}
	Under the following scenarios, $\numbadteams{\swapsig{N+1}} = \numbadteams{\swapsig{N}} + 1$:
	\begin{itemize}
		\item $\left(\indexswapsig{\swappedind{1}}{N+1}, \indexswapsig{m(\swappedind{1})}{N+1}\right) = (0,0)$, $\left(\indexswapsig{\swappedind{2}}{N+1}, \indexswapsig{m(\swappedind{2})}{N+1}\right) = (1,1)$
		\item $\left(\indexswapsig{\swappedind{1}}{N+1}, \indexswapsig{m(\swappedind{1})}{N+1}\right) = (1,1)$, $\left(\indexswapsig{\swappedind{2}}{N+1}, \indexswapsig{m(\swappedind{2})}{N+1}\right) = (0,0)$
	\end{itemize}
	\end{prop}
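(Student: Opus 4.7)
The plan is to reduce the proposition to a purely local, two-case bookkeeping argument. The first observation is that a single swap modifies only the two teams that contain the swapped agents $\indexswapsig{\swappedind{1}}{\cdot}$ and $\indexswapsig{\swappedind{2}}{\cdot}$; every other team is untouched. Hence $\numbadteams{\swapsig{N+1}} - \numbadteams{\swapsig{N}}$ equals the change in $m_{00}$ restricted to these two teams, and I only need to compare their pre- and post-swap compositions.

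Next, I would carefully pin down what a swap does to the partners. Since a swap simply exchanges the positions of the two swapped agents across clusters, the physical identities—and therefore the types—of their partners $\indexswapsig{m(\swappedind{1})}{\cdot}$ and $\indexswapsig{m(\swappedind{2})}{\cdot}$ are unchanged; only who is paired with whom is altered. Concretely: pre-swap, agent $\indexswapsig{\swappedind{1}}{N}$ occupies the slot that post-swap contains $\indexswapsig{\swappedind{2}}{N+1}$, so she is paired with what post-swap is labelled $\indexswapsig{m(\swappedind{2})}{N+1}$, and symmetrically for the other agent. I would record this as a short preliminary observation (the same fact illustrated in the pre-/post-swap example earlier in the section), since all the subsequent arithmetic depends on this identification.

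With this in hand, for each of the two listed scenarios I would write down the pre- and post-swap team compositions and tally. In Scenario 1, post-swap the affected teams are a `0-0' team and a `1-1' team, contributing $1$ to $m_{00}$; pre-swap, agent $\indexswapsig{\swappedind{1}}{N}$ (type $0$) is paired with the type-$1$ partner $\indexswapsig{m(\swappedind{2})}{N+1}$ and agent $\indexswapsig{\swappedind{2}}{N}$ (type $1$) with the type-$0$ partner $\indexswapsig{m(\swappedind{1})}{N+1}$, so both affected teams are `1-0' and contribute $0$ to $m_{00}$. The net change is exactly $+1$. Scenario 2 is symmetric—simply interchange the roles of $\swappedind{1}$ and $\swappedind{2}$ in the same calculation—and yields the identical conclusion.

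There is no real conceptual obstacle here; the main hazard is notational, namely keeping the pre-swap and post-swap labels for partners straight. I would mitigate this by stating the partner-invariance explicitly up front as a one-line claim before entering the case analysis, and by always referring to partners by their post-swap labels (so that the type assignment specified in the hypothesis can be used directly).
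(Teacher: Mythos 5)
Your argument is correct and is essentially the paper's proof: both track the two affected teams (all others being untouched), use partner invariance to deduce that the pre-swap configurations of those teams were the two mixed `1-0' teams, and conclude that the swap creates exactly one additional `0-0' match, with the second scenario following by symmetry. The only discrepancy is cosmetic: you use $\indexswapsig{\swappedind{1}}{N}$ for the physical agent $\indexswapsig{\swappedind{1}}{N+1}$ tracked back in time, whereas in the paper's convention $\indexswapsig{\swappedind{1}}{N}$ denotes the pre-swap occupant of the cluster-1 slot (hence has type $1$ in Scenario 1), but this relabeling does not affect the bookkeeping or the conclusion.
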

	
	\begin{proof}
		We only show the claim for the first scenario, as the proof for the second is identical. 
		\begin{align*}
		\left(\indexswapsig{\swappedind{1}}{N+1}, \indexswapsig{m(\swappedind{1})}{N+1}\right) = (0,0), \left(\indexswapsig{\swappedind{2}}{N+1}, \indexswapsig{m(\swappedind{2})}{N+1}\right) = (1,1) \implies \left(\indexswapsig{\swappedind{1}}{N}, \indexswapsig{m(\swappedind{1})}{N}\right) = (1,0), \left(\indexswapsig{\swappedind{2}}{N}, \indexswapsig{m(\swappedind{2})}{N}\right) = (0,1).
		\end{align*}
		Thus, from the $N$th swap to the $(N+1)$st swap, one additional `0-0' match was created.
	\end{proof}

	}
	
	{For convenience, we {henceforth abuse notation and suppress the dependence on $\theta$. It should be clear from context that the variables to which we refer depend on the realization $\theta$. In particular, $\sigma_i$ will no longer denote the identity of the $i$th agent under ordering $\sigma$, but the \emph{type} of the $i$th agent in ordering $\sigma$. Similarly, $m_{00}(\sigma)$ denotes the number of `0-0' matches induced by ordering $\sigma$ \emph{under realization $\theta$}.}  \\	

\noindent\textbf{Base cases (even and odd):}
		\begin{enumerate}
			\item $N=0$:
			
			$N=0$ corresponds to the identity permutation, $\sigma^0$. Consider $\theta^* \in \arg\min_{\theta} m_{00}\left(\sigma{\thetablank}\right)$. This realization corresponds to the maximal pairing of 1s and 0s, in which we case we obtain $m_{00}(\sigma^0(\theta^*)) = \frac{\I_2-\J_2}{2}$. Thus, when $N=0$, for all $\theta \in \typical{2}$:
			\begin{align*}
			m_{00}(\sigma^0{\thetablank})-\sum_{i\in\mathcal{I}_{\sigma^0}}\bar{v}(\sigma_0^i{\thetablank})&=m_{00}(\sigma^0{\thetablank})\geq\frac{\I_2-\J_2}{2}
			\end{align*}
			\item $N=1$:
			
			For $N=1$, one agent from Cluster 1 was placed in Cluster 2, and one agent from Cluster 2 was placed in Cluster 1. By construction, $\indexset{\sigma}{1}$ is such that
			\begin{align*}
			\sum_{i\in\indexset{\sigma}{1}} \bar{v}(\indexswapsig{i}{1}{\thetablank}) &= \sum_{i\in\mathcal{I}_{\sigma^0}} \bar{v}(\indexswapsig{i}{0}{\thetablank})+\indexswapsig{\swappedind{1}}{1}{\thetablank}-\indexswapsig{\swappedind{2}}{1}{\thetablank}= \indexswapsig{\swappedind{1}}{1}{\thetablank}-\indexswapsig{\swappedind{2}}{1}{\thetablank}
			\end{align*}

			Thus, for all $\theta \in \typical{2}$, we have:
			\begin{align}\label{n=1}
			&m_{00}(\swapsig{1}{\thetablank}) - \sum_{i\in\indexset{\sigma}{1}}\bar{v}(\indexswapsig{i}{1}{\thetablank}) 
			=m_{00}(\swapsig{1}{\thetablank}) - \indexswapsig{\swappedind{1}}{1}{\thetablank} + \indexswapsig{\swappedind{2}}{1}{\thetablank}. 
			\end{align}
			For $\theta$ such that $ \indexswapsig{\swappedind{1}}{1}{(\theta)} = \indexswapsig{\swappedind{2}}{1}{(\theta)}$, by Proposition~\ref{prop:same-type-same-num}, $m_{00}(\swapsig{1}{\thetablank}) = m_{00}(\sigma^0{\thetablank})$. We just showed above that for $N = 0$ $ m_{00}(\sigma^0{\thetablank}) \geq \frac{\I_2-\J_2}{2}$.
						
			We now consider $\theta$ such that $\indexswapsig{\swappedind{1}}{1}{(\theta)} \neq \indexswapsig{\swappedind{2}}{1}{(\theta)}$. There are two possibilities:
			\begin{enumerate}\setlength\itemsep{1em}
				%\begin{subequations}
				\item\label{case:base-1a} {} $\indexswapsig{\swappedind{1}}{1}{\thetablank} = 0, \indexswapsig{\swappedind{2}}{1}{\thetablank} = 1$:
				
				Plugging this into~\eqref{n=1}:
				\begin{align*}
				m_{00}(\swapsig{1}{\thetablank}) - \sum_{i\in\indexset{\sigma}{1}}\bar{v}(\indexswapsig{i}{1}{\thetablank})  = m_{00}(\swapsig{1}{\thetablank})+1
				\end{align*}
				
				By Proposition~\ref{prop:at-most-one-less}, $m_{00}(\swapsig{1}{\thetablank}) \geq m_{00}(\swapsig{0}{\thetablank})-1$. Using this:
				\begin{align*}
				m_{00}(\swapsig{1}{\thetablank}) &\geq m_{00}(\sigma^0{\thetablank})-1\\
				\implies m_{00}(\swapsig{1}{\thetablank})+1 &\geq m_{00}(\sigma^0{\thetablank})
				\geq \frac{\I_2-\J_2}{2}
				\end{align*}
				where the last inequality follows from $N=0$.

				\item\label{case:base-1b} {} $\indexswapsig{\swappedind{1}}{1}{\thetablank} = 1, \indexswapsig{\swappedind{2}}{1}{\thetablank} = 0 $:
				
				Plugging this into~\eqref{n=1}:
				\begin{align}
			m_{00}(\swapsig{1}{\thetablank}) - \sum_{i\in\indexset{\sigma}{1}}\bar{v}(\indexswapsig{i}{1}{\thetablank})  = m_{00}(\swapsig{1}{\thetablank})-1 \label{caseb}
				\end{align}

				We consider four cases, representing the possible types of the partners of the swapped agents. 	 		 %Throughout these four cases, we will use the fact that $\indexswapsig{\swappedind{1}}{1}{\thetablank} = 1, \indexswapsig{\swappedind{2}}{1}{\thetablank} = 0 \iff \indexswapsig{\swappedind{1}}{0}{\thetablank} = 0, \indexswapsig{\swappedind{2}}{0}{\thetablank} = 1$. Additionally, we use the important fact that, between two swaps, the \emph{only} change is the types of the swapped agents. The types of the partners of the swapped agents \emph{remain the same}.
				\begin{enumerate}
					\item $\indexswapsig{m(\swappedind{1})}{1}{\thetablank}=\indexswapsig{m(\swappedind{2})}{1}{\thetablank}=1$:
					
					By Proposition~\ref{prop:same-partner-same-num}, $m_{00}(\swapsig{1}{\thetablank}) =m_{00}(\sigma^0{\thetablank})$. We also have the following fact about the number of `0-0' matches under the identity permutation.
					
					\begin{prop}\label{case:base-2-b-i}
					$$ m_{00}(\sigma^0{\thetablank}) \geq \frac{\I_2-\J_2}{2}+1$$
					\end{prop}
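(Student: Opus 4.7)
The plan is to prove the lower bound via a direct pigeonhole argument on `0-0' matches in Cluster 2. Under $\sigma^0$ every pair lies entirely within one cluster, so we decompose $m_{00}(\sigma^0(\theta)) = X + Y$, where $X$ (resp.\ $Y$) is the number of `0-0' matches in Cluster 1 (resp.\ Cluster 2). By the nested-swap convention, the first inter-cluster swap exchanges the last position of the Cluster 1 block with the first position of the Cluster 2 block. Consequently, in the pre-swap ordering $\sigma^0$, the pair $(\indexswapsig{m(\swappedind{1})}{1}, \indexswapsig{\swappedind{2}}{1})$ lies entirely within Cluster 1, while the pair $(\indexswapsig{\swappedind{1}}{1}, \indexswapsig{m(\swappedind{2})}{1})$ lies entirely within Cluster 2. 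The Case (b)(i) hypothesis forces the former to be a $(1,0)$ match and the latter to be a $(1,1)$ match.

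The key step is to lower bound $Y$. The forced $(1,1)$ pair in Cluster 2 commits two of the cluster's $\J_2$ type-1 agents to a single pair, leaving the remaining $n_2/2 - 1$ Cluster 2 pairs of $\sigma^0$ to partition $\J_2 - 2$ type-1 agents together with $\I_2$ type-0 agents. Any pair that is not `0-0' must contain at least one type-1 agent, so at most $\J_2 - 2$ of these remaining pairs can be non-`0-0'; therefore at least
\[ (n_2/2 - 1) - (\J_2 - 2) \;=\; \frac{\I_2 - \J_2}{2} + 1 \]
of them must be `0-0' matches (using $n_2 = \I_2 + \J_2$). The scenario is consistent because on $\typical{2}$ in Case 3 we have $\I_2 > \J_2$ and $\J_2 \geq 2$ for $n_2$ sufficiently large, and the resulting bound is positive.

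Combining with the trivial bound $X \geq 0$ yields $m_{00}(\sigma^0(\theta)) \geq (\I_2 - \J_2)/2 + 1$, as claimed. The intuitive content of the extra $+1$ beyond the Cluster First Best value $(\I_2 - \J_2)/2$ is that forcing a $(1,1)$ match inside Cluster 2 -- a cluster already burdened with an excess of type-0 agents under $\typical{2}$ -- necessarily strands an additional pair of type-0 agents into a `0-0' match. I do not foresee any substantive obstacle: the argument is a short pigeonhole once the pairing structure of $\sigma^0$ and the Case (b)(i) constraints are unpacked.
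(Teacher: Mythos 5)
Your proof is correct and follows essentially the same route as the paper's: both deduce from the case hypothesis that the pre-swap (identity-permutation) pair on the Cluster 2 side was a `1-1' match, and then count that only $\J_2-2$ type 1 agents remain to absorb the $\I_2$ type 0 agents in Cluster 2, forcing at least $\frac{\I_2-\J_2}{2}+1$ `0-0' matches. Your version merely makes the within-cluster decomposition and the pigeonhole step more explicit than the paper's terse statement.
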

					
					\begin{proof}
				\begin{align*}
					\left(\indexswapsig{\swappedind{1}}{1}{\thetablank}, \indexswapsig{m(\swappedind{1})}{1}{\thetablank}\right) = (1,1), \left(\indexswapsig{\swappedind{2}}{1}{\thetablank}, \indexswapsig{m(\swappedind{2})}{1}{\thetablank}\right) = (0,1) \implies \left(\indexswapsig{\swappedind{1}}{0}{\thetablank}, \indexswapsig{m(\swappedind{1})}{0}{\thetablank}\right) = (0,1), \left(\indexswapsig{\swappedind{2}}{0}{\thetablank}, \indexswapsig{m(\swappedind{2})}{0}{\thetablank}\right) = (1,1)
					\end{align*}
					This implies that, under the identity permutation, two type 1 agents were matched together. Recall that the minimum number of `0-0' matches is achieves by the maximum number of consecutive `1-0's under any permutation. Given that we know for certain that two 1s are matched together under the identity permutation (and hence consecutive), only $\J_2-2$ type 1 agents remain to match with type 0 agents. Thus, the \emph{minimum} number of `0-0' matches is $\frac{\I_2-\left(\J_2-2\right)}{2} = \frac{\I_2-\J_2}{2}+1$.
					\end{proof}
						
Using these two observations and plugging into Equation~\eqref{caseb}, we obtain:
					\begin{align*}
						m_{00}(\swapsig{1}{\thetablank}) - \sum_{i\in\indexset{\sigma}{1}}\bar{v}(\indexswapsig{i}{1}{\thetablank}) &= m_{00}(\sigma^0{\thetablank})-1
					\geq \left(\frac{\I_2-\J_2}{2}+1\right)-1
					= \frac{\I_2-\J_2}{2}.
					\end{align*}
					
					\item $\indexswapsig{m(\swappedind{1})}{1}{\thetablank}=\indexswapsig{m(\swappedind{2})}{1}{\thetablank}=0$:

					By Proposition~\ref{prop:same-partner-same-num}, $m_{00}(\swapsig{1}{\thetablank}) =m_{00}(\sigma^0{\thetablank})$. Moreover, we have the following fact regarding the number of `0-0' matches under the identity permutation. 
										
					\begin{prop}\label{case:base-2-b-ii}
						$$\numbadteams{\swapsig{0}} \geq \frac{\I_2-\J_2}{2} + 1.$$
					\end{prop}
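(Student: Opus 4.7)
The plan is to bound $m_{00}(\sigma^0)$ by decomposing it into its cluster-1 and cluster-2 contributions and lower-bounding each separately. Under the identity permutation $\sigma^0$ all matches are intra-cluster, so $m_{00}(\sigma^0) = m_{00}^{(1)}(\sigma^0) + m_{00}^{(2)}(\sigma^0)$, and it suffices to show $m_{00}^{(1)}(\sigma^0) \geq 1$ together with $m_{00}^{(2)}(\sigma^0) \geq \frac{\I_2(\theta) - \J_2(\theta)}{2}$.

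The cluster-1 bound I would derive directly from the structural hypothesis of this sub-case by tracing the swap in reverse. Under $\swapsig{1}$ we have $\left(\indexswapsig{\swappedind{1}}{1},\indexswapsig{m(\swappedind{1})}{1}\right) = (1,0)$ in cluster~1 and $\left(\indexswapsig{\swappedind{2}}{1},\indexswapsig{m(\swappedind{2})}{1}\right) = (0,0)$ in cluster~2. The two partners stay in their original cluster throughout the swap, whereas $\indexswapsig{\swappedind{1}}{\cdot}$ moves from cluster~2 into cluster~1 and $\indexswapsig{\swappedind{2}}{\cdot}$ moves from cluster~1 into cluster~2. Thus under $\sigma^0$, cluster~1 contains the pair $\left(\indexswapsig{\swappedind{2}}{0}, \indexswapsig{m(\swappedind{1})}{0}\right)$ with types $(0,0)$, so $m_{00}^{(1)}(\sigma^0) \geq 1$.

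The cluster-2 bound is a pure type-counting fact. Since $\theta \in \typical{2}$ and we are in Case~3, $\I_2(\theta) > \J_2(\theta)$; writing the obvious identities $2 m_{00}^{(2)}(\sigma^0) + m_{10}^{(2)}(\sigma^0) = \I_2(\theta)$ and $2 m_{11}^{(2)}(\sigma^0) + m_{10}^{(2)}(\sigma^0) = \J_2(\theta)$ and subtracting yields $m_{00}^{(2)}(\sigma^0) = m_{11}^{(2)}(\sigma^0) + \frac{\I_2(\theta)-\J_2(\theta)}{2} \geq \frac{\I_2(\theta)-\J_2(\theta)}{2}$. Summing the two lower bounds gives $m_{00}(\sigma^0) \geq 1 + \frac{\I_2(\theta)-\J_2(\theta)}{2}$, which is exactly the claim.

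I do not anticipate any substantive obstacle: the entire argument hinges on correctly tracking which of $\indexswapsig{\swappedind{1}}{\cdot}$ and $\indexswapsig{\swappedind{2}}{\cdot}$ sits in which cluster before versus after the swap, which is precisely what the sub-case hypothesis encodes. The proof is structurally parallel to the preceding Proposition~\ref{case:base-2-b-i}, differing only in which side of the swap supplies the extra `0-0' pair that pushes $m_{00}(\sigma^0)$ one above $FB_C(\theta)$.
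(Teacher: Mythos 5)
Your proof is correct and takes essentially the same route as the paper's: the sub-case hypothesis forces a `0-0' pair inside Cluster 1 under $\sigma^0$ (the outgoing type-0 agent together with her unchanged type-0 partner), and the counting identity yields at least $\frac{\I_2-\J_2}{2}$ `0-0' matches within Cluster 2, which sum to the claim. The only quibble is notational: in the paper's convention the pre-swap Cluster-1 slot holder is denoted $\indexswapsig{\swappedind{1}}{0}$ rather than $\indexswapsig{\swappedind{2}}{0}$, but your verbal description tracks the correct agent and types, so the substance is unaffected.
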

					
					\begin{proof} 
					\begin{align*}
					\left(\indexswapsig{\swappedind{1}}{1}{\thetablank}, \indexswapsig{m(\swappedind{1})}{1}{\thetablank}\right) = (1,0), \left(\indexswapsig{\swappedind{2}}{1}{\thetablank}, \indexswapsig{m(\swappedind{2})}{1}{\thetablank}\right) = (0,0) \implies \left(\indexswapsig{\swappedind{1}}{0}{\thetablank}, \indexswapsig{m(\swappedind{1})}{0}{\thetablank}\right) = (0,0), \left(\indexswapsig{\swappedind{2}}{0}{\thetablank}, \indexswapsig{m(\swappedind{2})}{0}{\thetablank}\right) = (1,0)
					\end{align*}
					Thus, in the identity permutation there was \emph{at least} one `0-0' match in Cluster 1. The number of `0-0' matches in Cluster 2 is still lower bounded by $\frac{\I_2-\J_2}{2}$, the optimal matching of type 1 and type 0 agents in this cluster. Putting these two together we obtain the lower bound on the number of `0-0' matches in the identity permutation.
					\end{proof}
					
					Putting the above two facts together and plugging into Equation~\eqref{caseb}:
					\begin{align*}
					m_{00}(\swapsig{1}{\thetablank}) - \sum_{i\in\indexset{\sigma}{1}}\bar{v}(\indexswapsig{i}{1}{\thetablank})&= m_{00}(\sigma^0{\thetablank})-1
					\geq \left(\frac{\I_2-\J_2}{2}+1\right)-1
					= \frac{\I_2-\J_2}{2}
					\end{align*}

					\item $\indexswapsig{m(\swappedind{1})}{1}{\thetablank}=1, \indexswapsig{m(\swappedind{2})}{1}{\thetablank}=0$:
					
					By Proposition~\ref{prop:exactly-one-more}, $m_{00}(\swapsig{1}{\thetablank}) = m_{00}(\sigma^0{\thetablank})+1$. Using this fact, Equation~\eqref{caseb} becomes:
					\begin{align*}
					m_{00}(\swapsig{1}{\thetablank}) - \sum_{i\in\indexset{\sigma}{1}}\bar{v}(\indexswapsig{i}{1}{\thetablank}) &= \left(m_{00}(\sigma^0{\thetablank})+1\right)-1
					=m_{00}(\sigma^0{\thetablank})
					\geq \frac{\I_2-\J_2}{2}.
					\end{align*}
					
					\item $\indexswapsig{m(\swappedind{1})}{1}{\thetablank}=0, \indexswapsig{m(\swappedind{2})}{1}{\thetablank}=1$:
					
					By Proposition~\ref{prop:at-most-one-less},  $m_{00}(\swapsig{1}{\thetablank}) \geq m_{00}(\sigma^0{\thetablank})-1$. We also have the following fact about the number of `0-0' matches under the identity permutation.
					
					\begin{prop}
					$$ \numbadteams{\swapsig{0}} \geq \frac{\I_2-\J_2}{2}+2$$
					\end{prop}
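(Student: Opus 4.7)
The plan is to unwind the $(N+1)$-st swap to recover the types at the same positions under the identity permutation, and then combine two cluster-level contributions to $\numbadteams{\swapsig{0}}$. In the case under consideration we have $\left(\indexswapsig{\swappedind{1}}{1}(\theta), \indexswapsig{m(\swappedind{1})}{1}(\theta)\right) = (1,0)$ and $\left(\indexswapsig{\swappedind{2}}{1}(\theta), \indexswapsig{m(\swappedind{2})}{1}(\theta)\right) = (0,1)$. A swap only permutes the occupants of the two swapped positions; the partner positions retain their physical occupants (and hence their types). So I can ``read off'' the identity-permutation types as $\left(\indexswapsig{\swappedind{1}}{0}(\theta), \indexswapsig{m(\swappedind{1})}{0}(\theta)\right) = (0,0)$ and $\left(\indexswapsig{\swappedind{2}}{0}(\theta), \indexswapsig{m(\swappedind{2})}{0}(\theta)\right) = (1,1)$, mirroring the read-off step used in the proofs of Propositions~\ref{case:base-2-b-i} and~\ref{case:base-2-b-ii}.

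Next, I will lower-bound $\numbadteams{\swapsig{0}}$ by summing contributions from cluster 1 and cluster 2. The above read-off guarantees that cluster 1 contains at least one `0-0' match, contributing at least $1$ to the count. For cluster 2 the same read-off gives at least one `1-1' match; since $\theta\in\typical{2}$ forces $\I_2 > \J_2$ in this cluster, the minimum number of `0-0' matches consistent with a guaranteed `1-1' match is obtained by pairing each of the remaining $\J_2 - 2$ type 1 agents with a distinct type 0 agent and then pairing the leftover $\I_2 - (\J_2 - 2)$ type 0 agents among themselves, producing at least $\frac{\I_2 - \J_2}{2} + 1$ `0-0' matches in cluster 2. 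Adding the two contributions yields $\numbadteams{\swapsig{0}} \geq 1 + \left(\frac{\I_2 - \J_2}{2} + 1\right) = \frac{\I_2 - \J_2}{2} + 2$, as claimed.

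I expect no real obstacle beyond the bookkeeping around ``partners do not change across a swap.'' This is automatic once one observes that a swap alters the ordering only at the two swapped positions, leaving every other position (including $m(\swappedind{1})$ and $m(\swappedind{2})$) unchanged, and that the sequential matching is determined by position parity. With that observation in hand, the proof reduces to the short cluster-by-cluster count above, exactly in the style of the preceding base-case propositions, and differs from Proposition~\ref{case:base-2-b-ii} only in that we now accumulate one ``forced'' `0-0' match in cluster 1 \emph{in addition to} the $\frac{\I_2-\J_2}{2}+1$ forced in cluster 2.
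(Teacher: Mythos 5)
Your proof is correct and is essentially the paper's own argument: you unwind the swap to read off the identity-permutation configuration $\left(\indexswapsig{\swappedind{1}}{0}(\theta), \indexswapsig{m(\swappedind{1})}{0}(\theta)\right) = (0,0)$, $\left(\indexswapsig{\swappedind{2}}{0}(\theta), \indexswapsig{m(\swappedind{2})}{0}(\theta)\right) = (1,1)$, then add the forced `0-0' match in cluster 1 to the $\frac{\I_2-\J_2}{2}+1$ `0-0' matches forced in cluster 2 by the guaranteed `1-1' match, which is exactly how the paper combines the arguments of its two preceding propositions to get $\numbadteams{\swapsig{0}} \geq 1 + \frac{\I_2-(\J_2-2)}{2}$.
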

					
					\begin{proof}
					\begin{align*}
	\left(\indexswapsig{\swappedind{1}}{1}{\thetablank}, \indexswapsig{m(\swappedind{1})}{1}{\thetablank}\right) = (1,0), \left(\indexswapsig{\swappedind{2}}{1}{\thetablank}, \indexswapsig{m(\swappedind{2})}{1}{\thetablank}\right) = (0,1) \implies \left(\indexswapsig{\swappedind{1}}{0}{\thetablank}, \indexswapsig{m(\swappedind{1})}{0}{\thetablank}\right) = (0,0), \left(\indexswapsig{\swappedind{2}}{0}{\thetablank}, \indexswapsig{m(\swappedind{2})}{0}{\thetablank}\right) = (1,1)
					\end{align*}
					
					Combining the arguments used to prove Propositions~\ref{case:base-2-b-i} and~\ref{case:base-2-b-ii}, we obtain the lower bound $\numbadteams{\swapsig{0}} \geq 1 + \frac{\I_2-(\J_2-2)}{2}$, which gives us the result.
					\end{proof}
															
					Putting these facts together, Equation~\eqref{caseb} becomes:
					\begin{align*}
					m_{00}(\swapsig{1}{\thetablank})-1 &\geq \left(m_{00}(\sigma^0{\thetablank})-1\right)-1
					\geq \left(\frac{\I_2-\J_2}{2}+2\right)-2
					= \frac{\I_2-\J_2}{2}
					\end{align*}
				\end{enumerate}
				%\end{subequations}
			\end{enumerate}
		\end{enumerate}

\noindent\textbf{Inductive step (even):} Let $1 \leq N^* \leq N_{\max}$ be an arbitrary odd integer. We will show that, if $\indexset{\sigma}{N^*}$ satisfies Inequality~\eqref{eq:induc-ineq}, then so does $\indexset{\sigma}{N^*+1}$.
		
		Recall that, by construction, $\indexset{\sigma}{N^*+1}$ satisfies
		\begin{align}\label{eq:index-n+1}
		\sum_{i\in\indexset{\sigma}{N^*+1}}\bar{v}(\indexswapsig{i}{N^*+1}{\thetablank}) = \sum_{i\in\indexset{\sigma}{N^*}}\bar{v}(\indexswapsig{i}{N^*}{\thetablank})-\indexswapsig{m(\swappedind{1})}{N^*+1}{\thetablank}+\indexswapsig{m(\swappedind{2})}{N^*+1}{\thetablank}.
		\end{align}

		As before, we consider four scenarios, depending on the types of the swapped agents' matches.
		\begin{enumerate}\setlength\itemsep{1em}
			\item $\indexswapsig{m(\swappedind{1})}{N^*+1}{\thetablank} = 1, \indexswapsig{m(\swappedind{2})}{N^*+1}{\thetablank}=1$:
			
			By Equation~\eqref{eq:index-n+1}:
			\begin{align}\label{eq:induc-case1}
			m_{00}(\swapsig{N^*+1}{\thetablank}) -\sum_{i\in\indexset{\sigma}{N^*+1}}\bar{v}(\indexswapsig{i}{N^*+1}{\thetablank})
			=m_{00}(\swapsig{N^*+1}{\thetablank}) -\sum_{i\in\indexset{\sigma}{N^*}}\bar{v}(\indexswapsig{i}{N^*}{\thetablank})
			\end{align}
			
			By Proposition~\ref{prop:same-partner-same-num}, $m_{00}(\swapsig{N^*+1}{\thetablank}) =m_{00}(\swapsig{N^*}{\thetablank})$. Plugging this into Equation~\eqref{eq:induc-case1}:
			\begin{align*}
			m_{00}(\swapsig{N^*+1}{\thetablank}) -\sum_{i\in\indexset{\sigma}{N^*+1}}\bar{v}(\indexswapsig{i}{N^*+1}{\thetablank})
			&= m_{00}(\swapsig{N^*}{\thetablank}) -\sum_{i\in\indexset{\sigma}{N^*}}\bar{v}(\indexswapsig{i}{N^*}{\thetablank})
			\geq \frac{\I_2-\J_2}{2}
			\end{align*}
			by the inductive hypothesis.
			\item $\indexswapsig{m(\swappedind{1})}{N^*+1}{\thetablank} = 0, \indexswapsig{m(\swappedind{2})}{N^*+1}{\thetablank}=0$:
			
			By Proposition~\ref{prop:same-partner-same-num}, $m_{00}(\swapsig{N^*+1}{\thetablank}) = m_{00}(\swapsig{N^*}{\thetablank})$. Using this fact, we have:
			\begin{align*}
		m_{00}(\swapsig{N^*+1}{\thetablank}) -\sum_{i\in\indexset{\sigma}{N^*+1}}\bar{v}(\indexswapsig{i}{N^*+1}{\thetablank})
			&=m_{00}(\swapsig{N^*+1}{\thetablank}) -\sum_{i\in\indexset{\sigma}{N^*}}\bar{v}(\indexswapsig{i}{N^*}{\thetablank})\\
			&=m_{00}(\swapsig{N^*}{\thetablank})-\sum_{i\in\indexset{\sigma}{N^*}}\bar{v}(\indexswapsig{i}{N^*}{\thetablank})
			\geq \frac{\I_2-\J_2}{2},
			\end{align*}
			where the inequality follows from the inductive hypothesis. \\
			
			\item $\indexswapsig{m(\swappedind{1})}{N^*+1}{\thetablank}=1, \indexswapsig{m(\swappedind{2})}{N^*+1}{\thetablank}=0$:
			\begin{align}\label{eq:induc-case3}
			m_{00}(\swapsig{N^*+1}{\thetablank}) -\sum_{i\in\indexset{\sigma}{N^*+1}}\bar{v}(\indexswapsig{i}{N^*+1}{\thetablank})
			= m_{00}(\swapsig{N^*+1}{\thetablank}) -\sum_{i\in\indexset{\sigma}{N^*}}\bar{v}(\indexswapsig{i}{N^*}{\thetablank}) +1
			\end{align}
			
			By Proposition \ref{prop:at-most-one-less},  $m_{00}(\swapsig{N^*+1}){\thetablank} \geq m_{00}(\swapsig{N^*}{\thetablank})-1$. Using this fact and plugging into Equation~\eqref{eq:induc-case3}, we obtain:
			
			\begin{align*}
		m_{00}(\swapsig{N^*+1}{\thetablank}) -\sum_{i\in\indexset{\sigma}{N^*+1}}\bar{v}(\indexswapsig{i}{N^*+1}{\thetablank})
			&\geq \left(m_{00}(\swapsig{N^*}{\thetablank})-1\right) -\sum_{i\in\indexset{\sigma}{N^*}}\bar{v}(\indexswapsig{i}{N^*}{\thetablank}) +1\\
			&= m_{00}(\swapsig{N^*}{\thetablank}) -\sum_{i\in\indexset{\sigma}{N^*}}\bar{v}(\indexswapsig{i}{N^*}{\thetablank})\\
			&\geq \frac{\I_2-\J_2}{2}
			\end{align*}
			by the inductive hypothesis.
			
			\item $\indexswapsig{m(\swappedind{1})}{N^*+1}{\thetablank}=0,\indexswapsig{m(\swappedind{2})}{N^*+1}{\thetablank}=1$:
			
			In this case,
			\begin{align*}
			m_{00}(\swapsig{N^*+1}{\thetablank}) -\sum_{i\in\indexset{\sigma}{N^*+1}}\bar{v}(\indexswapsig{i}{N^*+1}{\thetablank})
			= m_{00}(\swapsig{N^*+1}{\thetablank}) -\sum_{i\in\indexset{\sigma}{N^*}}\bar{v}(\indexswapsig{i}{N^*}{\thetablank})-1
			\end{align*}
			We consider four sub-cases, depending on the types of the swapped individuals in the $(N^*+1)$st swap:
			\begin{enumerate}[i.]
				\item  $\indexswapsig{\swappedind{1}}{N^*+1}{\thetablank} = 1, \indexswapsig{\swappedind{2}}{N^*+1}{\thetablank}= 1 $:
				
				By Proposition~\ref{prop:same-type-same-num}, $m_{00}(\swapsig{N^*+1}{\thetablank}) = m_{00}(\swapsig{N^*}{\thetablank}).$ We will show the following additional fact.
				
				\begin{prop}\label{case:induc-4-i}
				$$m_{00}(\swapsig{N^*}{\thetablank})-\sum_{i\in\indexset{\sigma}{N^*}}\bar{v}(\indexswapsig{i}{N^*}{\thetablank}) \geq \frac{\I_2-\J_2}{2} + 1$$
				\end{prop}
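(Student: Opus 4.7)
The plan is to exploit the specific structural information that the hypothesis of Case~4, sub-case~(i) gives about $\swapsig{N^*}$. Since $\indexswapsig{\swappedind{1}}{N^*+1}(\theta) = \indexswapsig{\swappedind{2}}{N^*+1}(\theta) = 1$ and partners do not change across the $(N^*+1)$-st swap, while $\indexswapsig{m(\swappedind{1})}{N^*+1}(\theta) = 0$ and $\indexswapsig{m(\swappedind{2})}{N^*+1}(\theta) = 1$, one reads off that $\swapsig{N^*}$ already contains a `1-0' match at positions $(i_1, m(i_1))$ in Cluster~1 and, crucially, a \emph{`1-1' match} at positions $(i_2, m(i_2))$ in Cluster~2. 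This is the analogue at step $N^*$ of the situation exploited in the $N = 1$ base case (Proposition~\ref{case:base-2-b-i}), where a `1-1' match in Cluster~2 of $\sigma^0$ produced the extra $+1$.

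I would formalize this via a secondary induction on the swap history. Let $k^\star \le N^*$ be the earliest index at which both agents currently pairing in the Cluster~2 `1-1' match at $\swapsig{N^*}$ simultaneously occupied their current Cluster~2 positions. If $k^\star = 0$, the `1-1' match is already present in $\sigma^0$, and a counting identity for Cluster~2 (writing the Cluster~2 `0-0' count as $\tfrac{\I_2(\theta) - \J_2(\theta)}{2} + d$, where $d \ge 1$ is the Cluster~2 `1-1' count) gives $m_{00}(\sigma^0) \ge \tfrac{\I_2(\theta)-\J_2(\theta)}{2} + 1$, exactly in the style of Proposition~\ref{case:base-2-b-i}. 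The $+1$ slack is then propagated forward through the $N^*$ subsequent swaps by invoking Propositions~\ref{prop:same-type-same-num}--\ref{prop:exactly-one-more} to check that the quantity $m_{00}(\swapsig{N}) - \sum_{i \in \indexset{\sigma}{N}} \bar{v}(\indexswapsig{i}{N}(\theta))$ never drops below $\tfrac{\I_2(\theta) - \J_2(\theta)}{2} + 1$. For $k^\star > 0$, the same argument is applied starting from the inductive hypothesis at step $k^\star$ together with the extra $+1$ introduced by the $k^\star$-th swap's type pattern.

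The main obstacle is bookkeeping through the alternating-parity recursion for $\indexset{\sigma}{N}$, which contributes swapped-agent types on odd $N$ and partner types on even $N$. Each intermediate swap must be matched with its effect on $m_{00}$ and on the $\bar{v}$-sum, invoking the four auxiliary propositions in the appropriate sub-case. The cleanest approach is probably to maintain an explicit invariant---roughly, ``$m_{00}(\swapsig{N}) - \sum \bar{v}(\indexswapsig{i}{N}(\theta))$ exceeds $\tfrac{\I_2(\theta) - \J_2(\theta)}{2}$ by at least the current Cluster~2 `1-1' count, modulo cross-cluster adjustments already absorbed into $\sum \bar{v}$''---and verify the invariant is preserved by each of the four swap types enumerated in the $N^* = 1$ base case.
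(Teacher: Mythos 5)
Your structural reading of sub-case 4(i) is correct — after the $N^*$-th swap there is a `1-1' match sitting in Cluster 2 — but the mechanism you propose for converting that observation into the $+1$ does not go through. The quantity being bounded is $m_{00}(\swapsig{N}) - \sum_{i\in\indexset{\sigma}{N}}\bar{v}(\indexswapsig{i}{N}(\theta))$, and Propositions~\ref{prop:same-type-same-num}--\ref{prop:exactly-one-more} only control the change in $m_{00}$; they say nothing about the change in the $\bar{v}$-sum, which by the recursive definition of $\indexset{\sigma}{N}$ moves by plus or minus the type of a swapped agent (odd steps) or of a partner (even steps). It is exactly in the Case-4-type steps that the tracked quantity \emph{decreases} by one even though $m_{00}$ does not, so your claim that after step $0$ it ``never drops below $\frac{\I_2-\J_2}{2}+1$'' is not something those four propositions can certify, and as stated it can fail at intermediate steps. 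Your fallback invariant (``exceeds $\frac{\I_2-\J_2}{2}$ by at least the current Cluster-2 `1-1' count, modulo cross-cluster adjustments'') is left unverified, and verifying it would amount to re-running the whole induction of Lemma~\ref{lem:index-set} with a strengthened hypothesis; note for instance that an intermediate swap of type 4(i) leaves the `1-1' count at the affected Cluster-2 slot unchanged while still decreasing the tracked quantity, so preservation is not automatic. This propagation step is the heart of the matter and it is missing.

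The paper avoids any forward propagation by using the nestedness convention: the two agents swapped at step $N^*+1$, and their partners, were never involved in the first $N^*$ swaps, so the Cluster-2 `1-1' pair visible at time $N^*$ consists of untouched agents (in your notation $k^\star=0$ always, so your $k^\star>0$ branch is vacuous). One then deletes this pair and views the same $N^*$ swaps as acting on a reduced instance with $n_2-2$ Cluster-2 agents, of which $\J_2-2$ are type 1; applying the inductive hypothesis of the main lemma once to that reduced instance yields $m_{00}(\swapsig{N^*}{\thetablank})-\sum_{i\in\indexset{\sigma}{N^*}}\bar{v}(\indexswapsig{i}{N^*}{\thetablank}) \geq \frac{\I_2-(\J_2-2)}{2} = \frac{\I_2-\J_2}{2}+1$ directly. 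To repair your argument you would need either this reduction step or a fully proved strengthened invariant; citing the four auxiliary propositions alone does not suffice.
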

				
				\begin{proof}
				
				\begin{align*}
					\left(\indexswapsig{\swappedind{1}}{N^*+1}{\thetablank}, \indexswapsig{m(\swappedind{1})}{N^*+1}{\thetablank}\right) = (1,0), \left(\indexswapsig{\swappedind{2}}{N^*+1}{\thetablank}, \indexswapsig{m(\swappedind{2})}{N^*+1}{\thetablank}\right) = (1,1) \implies \left(\indexswapsig{\swappedind{1}}{N^*}{\thetablank}, \indexswapsig{m(\swappedind{1})}{N^*}{\thetablank}\right) = (1,0), \left(\indexswapsig{\swappedind{2}}{N^*}{\thetablank}, \indexswapsig{m(\swappedind{2})}{N^*}{\thetablank}\right) = (1,1)
					\end{align*}
					
We earlier argued that the agents swapped in the $(N^*+1)$st swap had never been swapped before (and similarly for their partners). Thus, the $N^*$th swap can equivalently be viewed as an operation on $n_1$ agents in Cluster 1, and $n_2-2$ agents in Cluster 2, with $\J_2-2$ of them being type 1 agents (given that we know that two of the $\J$ type 1 agents were the swapped agent and her partner). Thus, by the inductive hypothesis
\begin{align*}
m_{00}(\swapsig{N^*}{\thetablank})-\sum_{i\in\indexset{\sigma}{N^*}}\bar{v}(\indexswapsig{i}{N^*}{\thetablank}) \geq \frac{\I_2-\left(\J_2-2\right)}{2} = \frac{\I_2-\J_2}{2} + 1.
\end{align*}				
				\end{proof}
				
				Putting these two observations together, we have:
				\begin{align*}
					m_{00}(\swapsig{N^*+1}{\thetablank}) -\sum_{i\in\indexset{\sigma}{N^*+1}}\bar{v}(\indexswapsig{i}{N^*+1}{\thetablank})
				\geq \left(\frac{\I_2-\J_2}{2} + 1\right)-1 = \frac{\I_2-\J_2}{2}
				\end{align*}

				\item {} $\indexswapsig{\swappedind{1}}{N^*+1}{\thetablank} = 0, \indexswapsig{\swappedind{2}}{N^*+1}{\thetablank}= 0 $: 
					
				By Proposition~\ref{prop:same-type-same-num}, $m_{00}(\swapsig{N^*+1}{\thetablank}) = m_{00}(\swapsig{N^*}{\thetablank}).$ We will show the following additional fact.
				
				\begin{prop}\label{case:induc-4-ii}
				$$ m_{00}(\swapsig{N^*}{\thetablank}) -\sum_{i\in\indexset{\sigma}{N^*}}\bar{v}(\indexswapsig{i}{N^*}{\thetablank}) \geq \frac{\I_2-\J_2}{2} + 1$$
				\end{prop}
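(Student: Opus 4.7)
The plan is to adapt the reduction-and-induction argument just carried out for Proposition~\ref{case:induc-4-i} to the new type pattern of sub-case ii. As there, I will exploit the nested-swap assumption to peel off the four agents involved in the $(N^*+1)$-th swap, apply the inductive hypothesis on the resulting reduced instance, and then re-incorporate the contribution of the peeled-off matches.

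First, by inverting the $(N^*+1)$-th swap, the sub-case hypotheses $\indexswapsig{\swappedind{1}}{N^*+1}{\thetablank} = \indexswapsig{\swappedind{2}}{N^*+1}{\thetablank} = 0$ together with $\indexswapsig{m(\swappedind{1})}{N^*+1}{\thetablank} = 0$ and $\indexswapsig{m(\swappedind{2})}{N^*+1}{\thetablank} = 1$ force the four involved agents to sit, at step $N^*$, in a `0-0' match in the Cluster 1 region and a `0-1' match in the Cluster 2 region. Second, the nested-swap assumption guarantees these four agents have not moved during the first $N^*$ swaps, so those swaps are equivalent to acting on the reduced instance obtained by removing all four: $n_1 - 2$ Cluster 1 agents (with $\J_1$ type 1 and $\I_1 - 2$ type 0, since both removed agents are type 0) and $n_2 - 2$ Cluster 2 agents (with $\J_2 - 1$ type 1 and $\I_2 - 1$ type 0, since one removed agent is type 0 and one is type 1). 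Third, since each swap only adds indices tied to agents it touches, the entire set $\indexset{\sigma}{N^*}$ lives within the reduced instance and the sum $\sum_{i \in \indexset{\sigma}{N^*}} \bar{v}(\indexswapsig{i}{N^*}{\thetablank})$ equals the corresponding sum there. Invoking the inductive hypothesis on the reduced instance then gives $m_{00}(\text{reduced}) - \sum_{\text{reduced}} \bar{v} \geq \tfrac{(\I_2 - 1) - (\J_2 - 1)}{2} = \tfrac{\I_2 - \J_2}{2}$; since the peeled-off Cluster 1 pair contributes one additional `0-0' match (while the peeled-off Cluster 2 pair, being `0-1', contributes none), we obtain $m_{00}(\swapsig{N^*}{\thetablank}) - \sum_{i \in \indexset{\sigma}{N^*}} \bar{v} \geq \tfrac{\I_2 - \J_2}{2} + 1$, as desired.

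The main obstacle will be the bookkeeping step that identifies $\sum_{i \in \indexset{\sigma}{N^*}} \bar{v}$ with its reduced-instance counterpart: one must trace the recursive definition of $\indexset{\sigma}{N}$ to verify that every index added across the first $N^*$ swaps corresponds to an agent (or partner) in the reduced set, and that removing the four fixed agents does not break the telescoping pattern of $\bar{v}$ contributions. This reduction is invoked implicitly in the proof of Proposition~\ref{case:induc-4-i}, but it is the crucial ingredient that makes the inductive hypothesis available at the smaller instance, and once it is established the remainder of the argument is an easy arithmetic combination.
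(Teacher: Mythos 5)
Your argument is correct and follows essentially the same route as the paper: peel off the pairs involved in the $(N^*+1)$st swap (untouched in the first $N^*$ swaps by the nested assumption), invoke the inductive hypothesis on the reduced instance, and add back the extra `0-0' match in Cluster 1. The only cosmetic difference is that you also remove the `0-1' pair from Cluster 2 (the paper keeps Cluster 2 intact and removes only the Cluster 1 pair), which leaves $\I_2-\J_2$ unchanged and hence produces the same bound $\frac{\I_2-\J_2}{2}+1$.
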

				
				\begin{proof}
				\begin{align*}
					\left(\indexswapsig{\swappedind{1}}{N^*+1}{\thetablank}, \indexswapsig{m(\swappedind{1})}{N^*+1}{\thetablank}\right) = (0,0), \left(\indexswapsig{\swappedind{2}}{N^*+1}{\thetablank}, \indexswapsig{m(\swappedind{2})}{N^*+1}{\thetablank}\right) = (0,1) \implies \left(\indexswapsig{\swappedind{1}}{N^*}{\thetablank}, \indexswapsig{m(\swappedind{1})}{N^*}{\thetablank}\right) = (0,0), \left(\indexswapsig{\swappedind{2}}{N^*}{\thetablank}, \indexswapsig{m(\swappedind{2})}{N^*}{\thetablank}\right) = (0,1)
				\end{align*}
				
This implies that, after the $N^*$th swap, there was a `0-0' match in Cluster 1. Using the same argument as in the previous case, we can view the permutation $\sigma_{N^*}$ as one on $n_2$ agents in Cluster 2 (with $\I_2, \J_2$ type 0 and type 1 agents, respectively), and $n_1-2$ agents in Cluster 1, with $\I_1-2$ type 0 agents (since we know that the $(N^*+1)$st pair to be swapped was never considered for $N^*$), plus an extra `0-0' match in Cluster 1. Thus, by the inductive hypothesis
\begin{align*}
m_{00}(\swapsig{N^*}{\thetablank})-\sum_{i\in\indexset{\sigma}{N^*}}\bar{v}(\indexswapsig{i}{N^*}{\thetablank}) \geq 1 + \frac{\I_2-\J_2}{2}.
\end{align*}				
\end{proof}

Putting these two facts together, we have:
																			
				\begin{align*}
				m_{00}(\swapsig{N^*+1}{\thetablank}) -\sum_{i\in\indexset{\sigma}{N^*+1}}\bar{v}(\indexswapsig{i}{N^*+1}{\thetablank})
				&= m_{00}(\swapsig{N^*}{\thetablank}) -\sum_{i\in\indexset{\sigma}{N^*}}\bar{v}(\indexswapsig{i}{N^*}{\thetablank})-1\\
				&\geq \left(\frac{\I_2-\J_2}{2}+1\right)-1
				= \frac{\I_2-\J_2}{2}
				\end{align*} 
				
				\item  {} $\indexswapsig{\swappedind{1}}{N^*+1}{\thetablank} = 0, \indexswapsig{\swappedind{2}}{N^*+1}{\thetablank}= 1$:
				
				By Proposition~\ref{prop:exactly-one-more}, $m_{00}(\swapsig{N^*+1}{\thetablank})= m_{00}(\swapsig{N^*}{\thetablank})+1$. Using this fact:
				\begin{align*}
					m_{00}(\swapsig{N^*+1}{\thetablank}) -\sum_{i\in\indexset{\sigma}{N^*+1}}\bar{v}(\indexswapsig{i}{N^*+1}{\thetablank})
				&= \left(m_{00}(\swapsig{N^*}{\thetablank})+1\right) -\sum_{i\in\indexset{\sigma}{N^*}}\bar{v}(\indexswapsig{i}{N^*}{\thetablank})-1\\
				&= m_{00}(\swapsig{N^*}{\thetablank})-\sum_{i\in\indexset{\sigma}{N^*}}\bar{v}(\indexswapsig{i}{N^*}{\thetablank})\\
				&\geq \frac{\I_2-\J_2}{2}
				\end{align*}
				by the inductive hypothesis.

				\item {} $\indexswapsig{\swappedind{1}}{N^*+1}{\thetablank} = 1, \indexswapsig{\swappedind{2}}{N^*+1}{\thetablank}= 0 $ :
				
				By Proposition~\ref{prop:at-most-one-less}, $m_{00}(\swapsig{N^*+1}{\thetablank}) \geq m_{00}(\swapsig{N^*}{\thetablank})-1$. We show the following additional fact.
				
				\begin{prop}
				$$m_{00}\left(\swapsig{N^*}{\thetablank}\right)-\sum_{i\in\indexset{\sigma}{N^*}}\bar{v}(\indexswapsig{i}{N^*}{\thetablank}) \geq \frac{\I_2-\J_2}{2}$$
				\end{prop}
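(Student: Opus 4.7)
The plan is to mirror the base-case strategy used in the analogous sub-case at $N=1$ (where the key fact was $\numbadteams{\swapsig{0}} \geq \frac{\I_2-\J_2}{2}+2$), adapting it to the $N^*$-swap setting via a reduction to a smaller instance. Concretely, I will establish the (stronger form of the) bound
$$ m_{00}(\swapsig{N^*}(\theta)) - \sum_{i \in \indexset{\sigma}{N^*}} \bar{v}(\indexswapsig{i}{N^*}(\theta)) \;\geq\; \frac{\I_2-\J_2}{2} + 2, $$
which is what is actually needed for the outer induction to close in sub-case iv: since, by construction, $\sum_{i \in \indexset{\sigma}{N^*+1}} \bar{v}(\indexswapsig{i}{N^*+1}(\theta)) = \sum_{i \in \indexset{\sigma}{N^*}} \bar{v}(\indexswapsig{i}{N^*}(\theta)) + 1$ and since the $(N^*+1)$st swap destroys exactly one `0-0' team (so $m_{00}(\swapsig{N^*+1}) = m_{00}(\swapsig{N^*}) - 1$), the stronger bound above exactly compensates the resulting $-2$ deficit and yields the required $\geq \frac{\I_2-\J_2}{2}$ at step $N^*+1$.

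The first step is to unpack the pre-swap configuration. From $(\indexswapsig{\swappedind{1}}{N^*+1}(\theta), \indexswapsig{m(\swappedind{1})}{N^*+1}(\theta)) = (1,0)$ and $(\indexswapsig{\swappedind{2}}{N^*+1}(\theta), \indexswapsig{m(\swappedind{2})}{N^*+1}(\theta)) = (0,1)$, together with the fact that a single swap exchanges only the two swapped agents' positions while preserving the partners' identities (and hence their types), we deduce that after the first $N^*$ swaps, $(\indexswapsig{\swappedind{1}}{N^*}(\theta), \indexswapsig{m(\swappedind{1})}{N^*}(\theta)) = (0,0)$ and $(\indexswapsig{\swappedind{2}}{N^*}(\theta), \indexswapsig{m(\swappedind{2})}{N^*}(\theta)) = (1,1)$. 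Hence $\swapsig{N^*}(\theta)$ contains a distinguished `0-0' team (on the team of position $P_1$) and a distinguished `1-1' team (on the team of position $P_2$), sitting in the clusters that $\indexswapsig{\swappedind{1}}{N^*}$ and $\indexswapsig{\swappedind{2}}{N^*}$ respectively occupy.

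The second step is to apply the outer inductive hypothesis to a reduced instance obtained by deleting these two distinguished teams. The reduced problem has cluster sizes $n_1-2$ and $n_2-2$; the count of type-$1$ agents in cluster~$2$ drops from $\J_2$ to $\J_2-2$ while $\I_2$ is unchanged. Since swaps are nested, the first $N^*$ swaps involve positions entirely disjoint from $P_1, P_2$ and so define a valid swap sequence on the reduced instance, with the recursively defined index set $\indexset{\sigma}{N^*}$ restricting cleanly. The inductive hypothesis on the reduced instance then yields
$$ m_{00}^{\mathrm{red}}(\swapsig{N^*}) - \sum_{i \in \indexset{\sigma^{\mathrm{red}}}{N^*}} \bar{v}(\indexswapsig{i}{N^*}) \;\geq\; \frac{\I_2 - (\J_2-2)}{2} \;=\; \frac{\I_2-\J_2}{2} + 1. $$
Reintroducing the removed `0-0' team (which adds $1$ to $m_{00}$ and $0$ to the sum, since its indices are untouched by the first $N^*$ swaps and hence lie outside $\indexset{\sigma}{N^*}$), we obtain the target bound $\frac{\I_2-\J_2}{2}+2$.

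The main obstacle is verifying that the induction transfers cleanly to the reduced instance: specifically, that $\indexset{\sigma^{\mathrm{red}}}{N^*}$ (inherited by restriction of $\indexset{\sigma}{N^*}$ to the retained positions) satisfies the recursive definition for the reduced problem, and that the associated $\bar{v}$-terms agree with the corresponding ones in the full sum. The nested-swap convention is exactly what enables this decoupling: without it, some earlier swap could touch $P_1$ or $P_2$, breaking the reduction and forcing a more delicate accounting of the swap history.
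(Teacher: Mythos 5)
Your proof is correct and follows essentially the same route as the paper: you deduce the pre-swap `0-0' and `1-1' teams from the $(1,0)$/$(0,1)$ post-swap configuration, apply the inductive hypothesis to the reduced instance with those two (never previously swapped, by nestedness) teams deleted---so cluster 2 has $\J_2-2$ type 1 agents---and add back the `0-0' team, which is exactly the paper's ``combination of the arguments used to prove'' the two preceding sub-case propositions. Moreover, the stronger bound $\frac{\I_2-\J_2}{2}+2$ you establish is precisely what the paper's subsequent chain of inequalities actually uses (the proposition as stated appears to drop the $+2$), so proving the stronger form is the right call.
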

				
				\begin{proof}
			\begin{align*}
					\left(\indexswapsig{\swappedind{1}}{N^*+1}{\thetablank}, \indexswapsig{m(\swappedind{1})}{N^*+1}{\thetablank}\right) = (1,0), \left(\indexswapsig{\swappedind{2}}{N^*+1}{\thetablank}, \indexswapsig{m(\swappedind{2})}{N^*+1}{\thetablank}\right) = (0,1) \implies \left(\indexswapsig{\swappedind{1}}{N^*}{\thetablank}, \indexswapsig{m(\swappedind{1})}{N^*}{\thetablank}\right) = (0,0), \left(\indexswapsig{\swappedind{2}}{N^*}{\thetablank}, \indexswapsig{m(\swappedind{2})}{N^*}{\thetablank}\right) = (1,1)
				\end{align*}
				
				Once we have established the above, the proof of the claim follows directly from a combination of the arguments use to prove Propositions~\ref{case:induc-4-i} and~\ref{case:induc-4-ii}. 
				\end{proof}
								
Putting this all together, we obtain:				
				\begin{align*}
					m_{00}\left(\swapsig{N^*+1}{\thetablank}\right) -\sum_{i\in\indexset{\sigma}{N^*+1}}\bar{v}(\indexswapsig{i}{N^*+1}{\thetablank})
				&\geq\left(m_{00}\left(\swapsig{N^*}{\thetablank}\right)-1\right)-\sum_{i\in\indexset{\sigma}{N^*}}\bar{v}(\indexswapsig{i}{N^*}{\thetablank})-1\\
				&= m_{00}\left(\swapsig{N^*}{\thetablank}\right)-\sum_{i\in\indexset{\sigma}{N^*}}\bar{v}(\indexswapsig{i}{N^*}{\thetablank})-2\\
				&\geq \left(\frac{\I_2-\J_2}{2}+2\right)-2 = \frac{\I_2-\J_2}{2}.
				\end{align*}
			\end{enumerate}
		\end{enumerate}
		
		\textbf{Inductive step (odd):} 
		We omit the proof of this inductive step, since it is  analogous to the case where $N=1$, and the case-by-case analysis is identical to what was done for the even inductive step.
	} \\
		
		Thus, we have shown that this choice of $\indexset{\sigma}{N}$ satisfies Equation~(\ref{eq:to-show}), and Lemma~\ref{lem:fixed-num} immediately follows. \end{proof}
\section{Auxiliary Proofs}\label{sec:aux-proofs}

\begin{proofof}{Proposition~\ref{prop:util-agent}}
	Define $\EE_{\sigma}[\cdot]=\EE\left[\cdot\mid\mF^i_{\sigma}\right]$, and $\PP_\sigma\left[\cdot\right] = \PP\left[\cdot\mid\mF^i_\sigma\right] $. Now consider $\EE_{\sigma}\left[u(\theta_i,\theta_j)\right]$, for arbitrary $j\in[n]$. We have:
	\begin{align}
	\EE_{\sigma}\left[u(\theta_i,\theta_j)\right] 
	=
	& u(0,0)\PP_{\sigma}\left[\theta_i=0,\theta_{j} = 0\right] + u(1,1)\PP_{\sigma}\left[\theta_i=1,\theta_{j} = 1\right] \notag \\
	& + u(1,0)\Big(\PP_{\sigma}\left[\theta_i=1,\theta_{j} = 0\right]\notag+\PP_{\sigma}\left[\theta_i=0,\theta_{j} =1\right]\Big)\notag\\
	=
	&u(0,0)\PP_{\sigma}\left[\theta_i=0\mid\theta_j=0\right]\PP_{\sigma}\left[\theta_j=0\right] + u(1,1)\PP_{\sigma}\left[\theta_i=1\mid\theta_j=1\right]\PP_{\sigma}\left[\theta_j=1\right]\notag\\
	&+u(1,0)\Big(\PP_{\sigma}\left[\theta_i=0\mid\theta_j=1\right]\PP_{\sigma}\left[\theta_j=1\right]+\PP_{\sigma}\left[\theta_i=1\mid\theta_j=0\right]\PP_{\sigma}\left[\theta_j=0\right]\Big)\notag
	\end{align}
	{For ease of notation, we} define $q = \PP_{\sigma}\left[\theta_j=1\right] = \EE_{\sigma}\left[\theta_j\right]$, and $(p_0,p_1)$ such that $p_k = \PP_{\sigma}\left[{\theta_i=1}\mid{\theta_j=k}\right]$. Additionally, let $g(q,p_0,p_1) = \EE_{\sigma}\left[u(\theta_i,\theta_j)\right]$. Substituting above, we get:
	\begin{align*}
	g(q,p_0,p_1) = u(1,1)p_1q+u(1,0)\left((1-p_1)q+p_0(1-q)\right)+u(0,0)(1-p_0)(1-q)
	\end{align*}
	
	Our goal is to show that $g(q,p_0,p_1)$ is monotone increasing in $q$. For this, we take the partial derivative with respect to $q$:
	\begin{align*}
	\frac{\partial g}{\partial q} &= u(1,1)p_1+u(1,0)(1-p_0-p_1)-u(0,0)(1-p_0)\\
	&=p_1\left(u(1,1)-u(1,0)\right)+(1-p_0)(u(1,0)-u(0,0))
	\geq 0,
	\end{align*}
	where the inequality follows from the fact that $u(1,1) > u(1,0) > u(0,0)$. %\footnote{We note that even though our proof relies heavily on the fact that $\theta$ is binary, the lemma holds for more general classes of random variables and is closely related to the concept of stochastic dominance.}
\end{proofof}\\

\begin{proofof}{Proposition~\ref{prop:matching}}
	We first show the $(\implies)$ direction. Suppose constraint~\eqref{eq:correct-ordering} is satisfied. For agent $i= \sig{1}$, $\arg\max_j\EEc{\theta_j}{\mathcal{F}^i_{\sigma}} = \sig{2}$. Similarly, for agent $i' = \sig{2}$, \; $\arg\max_j\EEc{\theta_j}{\mathcal{F}^{i'}_{\sigma}} = \sig{1}$.  Thus, agents $\sig{1}$ and $\sig{2}$ will match in $m(\sigma)$.  Replicating this logic, agents will never want to match with agents who are ranked below their match which ordering $\sigma$ induced. Though constraint~(\ref{eq:correct-ordering}) leaves open the possibility that an agent may want to match with $\sig{j^*}$ such that $\sig{j^*} < \sig{m(i)}$ (i.e. ranked above their match in the above ordering), such a match would never occur since $\sig{j^*}$ has no incentive to deviate and match with someone ranked lower than her own partner. Thus, the resulting matching is stable.
	
	$(\impliedby)$: Suppose there exists an ordering $\sigma$ and two agents $\sig{i}, \sig{j}$ such that $\sig{i} < \sig{j}$ but $\EEc{\sigma_i(\theta)}{\mathcal{F}^{m(i)}_{\sigma}} < \EEc{\sigma_j(\theta)}{\mathcal{F}^{m(i)}_{\sigma}}$. Since $\EEc{\sigma_i(\theta)}{\mathcal{F}^{m(i)}_{\sigma}} < \EEc{\sigma_j(\theta)}{\mathcal{F}^{m(i)}_{\sigma}}$, agent $\sig{m(i)}$ would rather match with agent $\sig{j}$. Additionally, since $\EEc{\sig{m(i)}(\theta)}{\mathcal{F}^j_{\sigma}} \geq \EEc{\sig{m(j)}(\theta)}{\mathcal{F}^j_{\sigma}}$, agent $\sig{j}$ would rather match with agent $\sig{m(i)}$ than agent $\sig{m(j)}$. Thus, $(\sig{m(i)},\sig{j})$ constitute a blocking pair, and $m(\sigma)$ is unstable.
\end{proofof} \\

\begin{proofof}{Lemma~\ref{lem:convexity}}
	Consider any realization $\theta$ and announced ordering $\sigma$, and let $m_{11}, m_{10}, m_{00}$ denote the number `1-1', `1-0', and `0-0' matches induced by $\sigma$ for this realization. Let $\mathcal{W}$ denote the expected social welfare for any signaling policy.
	\begin{align}\label{eq:one-degree}
	\mathcal{W} &= \EE\left[\sum_{i=1}^n u\left(\sig{i}(\theta), \sig{m(i)}(\theta)\right)\right] \notag \\
	&=u(1,1)\,\EE\left[\sum_{i=1}^n \mathds{1}\left\{\sig{i}(\theta) = 1,\sig{m(i)}(\theta) = 1\right\}\right]+u(0,0)\,\EE\left[\sum_{i=1}^n \mathds{1}\left\{\sig{i}(\theta) = 0,\sig{m(i)}(\theta) = 0\right\}\right] \notag \\
	&\hspace{1cm} + u(1,0) \,\EE\left[\sum_{i=1}^n \left(\mathds{1}\left\{\sig{i}(\theta) = 1,\sig{m(i)}(\theta) = 0\right\}+\mathds{1}\left\{\sig{i}(\theta) = 0,\sig{m(i)}(\theta) = 1\right\}\right)\right] \notag \\
	&= 2\bigg(u(1,1)\,\expectation{m_{11}} + u(0,0)\,\expectation{m_{00}} + u(1,0)\,\expectation{m_{10}}\bigg)
	\end{align}
	
	Recall that $h(\theta), \ell(\theta)$ denote the number of type 1 and type 0 agents, respectively. We can re-write $m_{10}$ and $m_{00}$ as a function of $m_{11}$:
	\begin{equation*}
	\begin{aligned}	
	m_{10} &= h(\theta)-2m_{11}  \\
	m_{00} &= \frac{\ell(\theta) - m_{10}}{2} = \frac{\ell(\theta)-h(\theta)}{2} + m_{11}
	\end{aligned}
	\end{equation*}
	
	Plugging this into Equation~\eqref{eq:one-degree}:
	\begin{align*}
	\mathcal{W} = 2 \,\expectation{m_{11}}\bigg(u(1,1) +u(0,0) - 2u(1,0)\bigg) + \text{cst}
	\end{align*}
	
	For convex $u$, $u(1,1) + u(0,0) - 2u(1,0) \geq 0$, implying $\mathcal{W}$ is \emph{increasing} in $\mathbb{E}[m_{11}]$ (and consequently increasing in $\expectation{m_{00}}$, decreasing in $\expectation{m_{10}}$). For strictly concave $u$, $u(1,1) + u(0,0)- 2u(1,0) < 0$, which implies $\mathcal{W}$ is \emph{decreasing} in $\expectation{m_{11}}$ (and consequently decreasing in $\expectation{m_{00}}$, increasing in $\expectation{m_{10}}$).
	%		We first prove the lemma for convex utility functions. Let $\mathcal{S}^*$ be the optimal signalling scheme, and suppose there exists a realization for which the number of `1-0' matches $m_{10}$ is greater than the number of `1-1' matches $m_{11}$.\footnote{We assume we are in the non-trivial setting where $m_{10} \geq 2$.} Let $\tilde{\mathcal{S}}$ be the signalling scheme which took two of these `1-0' matches and split them up to create a `1-1' team and a `0-0' teams, modifying the social welfare by $u(1,1) + u(0,0)-2u(1,0) \geq 0$ by convexity of $u$. Thus, the social welfare under $\tilde{\mathcal{S}}$ is at least as high as the social welfare under $\mathcal{S}^*$.
	%		
	%		For strictly concave utility functions, an analogous argument holds -- for any realization, for concave $u$ the social welfare can always strictly improve by taking a `1-1' match and a `0-0' match, splitting these two teams up and creating a `1-0' team.
	%		
	%	\inlinetodo{Do I need to prove another direction?}
\end{proofof} \\

\begin{proofof}{Proposition~\ref{prop:baseline-util}}
	Let $\mathcal{U}_i(\emptyset)$ denote the expected utility of agent $i$ under her myopic strategy in the no-information setting (that is, randomly matching with another agent).
	\begin{align*}
	\mathcal{U}_i(\emptyset) &= \EEc{u}{\theta_i = 1}\PP\left[\theta_i = 1\right]+\EEc{u}{\theta_i = 0}\PP\left[\theta_i = 0\right]\\
	&= p\Big(pu(1,1)+(1-p)u(1,0)\Big)+(1-p)\Big(pu(1,0)+(1-p)u(0,0)\Big)\\
	&= p\Big(u(1,1)-(1-p)\left(u(1,1)-u(1,0)\right)\Big)+(1-p)\Big(u(0,0)+p\left(u(1,0)-u(0,0)\right)\Big)\\
	&= pu(1,1)+(1-p)u(0,0)+p(1-p)\Big(2u(1,0)-u(1,1)-u(0,0)\Big).
	\end{align*}\end{proofof}

\end{document}